\newcommand{\abs}[1]{\left|#1\right|}
\newcommand{\opn}[1]{\operatorname{#1}}
\newcommand{\norm}[1]{\left\|#1\right\|}
\newcommand{\ip}[2]{\langle #1, #2 \rangle}
\DeclareMathAlphabet{\bi}{OML}{cmm}{b}{it}
\DeclareMathAlphabet{\bcal}{OMS}{cmsy}{b}{n}
\DeclareMathAlphabet{\brmn}{OT1}{cmr}{bx}{n}
\DeclareMathSymbol{\R}{\mathalpha}{AMSb}{"52}
\newcommand{\brho}{\boldsymbol{\rho}}
\renewcommand{\L}{\langle}
\def \x{\mathbf{x}}
\def \a{\mathbf{a}}
\def \d{\mathbf{d}}
\def \R{\mathbf{R}}
\def \L{\mathbf{L}}
\def \F{\mathbf{F}}
\def \0{\mathbf{0}}
\DeclareMathAlphabet{\bi}{OML}{cmm}{b}{it}
\DeclareMathAlphabet{\bcal}{OMS}{cmsy}{b}{n}
\DeclareMathAlphabet{\brmn}{OT1}{cmr}{bx}{n}
\DeclareMathAlphabet{\bi}{OML}{cmm}{b}{it}
\DeclareMathAlphabet{\bcal}{OMS}{cmsy}{b}{n}
\DeclareMathAlphabet{\brmn}{OT1}{cmr}{bx}{n}
\def \x{\mathbf{x}}
\def \f{\mathbf{f}}
\newcommand{\mrm}[1]{\mathrm{#1}}
\newcommand{\thmref}[1]{\textbf{Theorem~\ref{#1}}}
\newcommand{\lemref}[1]{\textbf{Lemma~\ref{#1}}}
\newcommand{\propref}[1]{\textbf{Proposition~\ref{#1}}}
\newcommand{\corref}[1]{\textbf{Corollary~\ref{#1}}}
\newcommand{\figref}[1]{\textbf{Figure~\ref{#1}}}
\newcommand{\assref}[1]{\textbf{Assumption~\ref{#1}}}
\newtheorem{theorem}{Theorem}
\newtheorem{lemma}{Lemma}
\newtheorem{proposition}{Proposition}
\newtheorem{assumption}{Assumption}
\newtheorem{corollary}{Corollary}
\newtheorem{definition}{Definition}
\theoremstyle{remark}
\newtheorem{remark}{Remark}
\begin{document}
\date{\vspace{-7ex}}
\title{Exact Imaging of Extended Targets Using Multistatic Interferometric Measurements}
\author{\IEEEauthorblockN{Bariscan Yonel \IEEEauthorrefmark{1}\IEEEauthorrefmark{2}, \IEEEmembership{Student Member, IEEE}, Il-Young Son, \IEEEauthorrefmark{1}\IEEEauthorrefmark{3},
    and Birsen Yazici\IEEEauthorrefmark{1}\IEEEauthorrefmark{4}, \IEEEmembership{Senior Member, IEEE}}
\thanks{\IEEEauthorrefmark{4}Corresponding author.}
\thanks{\IEEEauthorrefmark{1}Yazici, Son, and Yonel are with the Department of Electrical, Computer and Systems Engineering,
Rensselaer Polytechnic Institute, 110 8th Street, Troy, NY 12180 USA, E-mail:
\IEEEauthorrefmark{4}yazici@ecse.rpi.edu, \IEEEauthorrefmark{2}yonelb@rpi.edu, \IEEEauthorrefmark{3}soni@rpi.edu
Phone: (518)-276 2905, Fax: (518)-276 6261.}
\thanks{This work was supported by the Air Force Office of Scientific Research
(AFOSR) under the agreement FA9550-16-1-0234, Office of Naval Research
(ONR) under the agreement N0001418-1-2068 and by the National Science
Foundation (NSF) under Grant No ECCS-1809234.}}
\IEEEtitleabstractindextext{
  \begin{abstract}
    In this paper, we present a novel approach for exact recovery of extended targets
    in wave-based multistatic interferometric imaging, based on Generalized Wirtinger Flow (GWF)
    theory~\cite{bariscan2018}.  Interferometric imaging is a
    generalization of phase retrieval, which arises from cross-correlation of measurements from
    pairs of receivers.
  Unlike
    standard Wirtinger Flow, 
    GWF theory
    guarantees exact recovery for arbitrary lifted forward models that satisfy the restricted isometry property over rank-1, positive semi-definite (PSD) matrices with a sufficiently small restricted isometry constant (RIC).
    To this end, we design a deterministic, lifted forward model for interferometric multistatic radar satisfying the exact recovery conditions of the GWF theory.
Our results quantify a lower limit on the pixel spacing and the minimal sample complexity for exact multistatic radar imaging.   
We provide a numerical study of our RIC and pixel spacing bounds, which shows that GWF can achieve exact
recovery 
with super-resolution.
While our primary interest lies in radar imaging, our method is also applicable to other
  multistatic wave-based imaging problems such as those arising in acoustics and geophysics.
\end{abstract}
}
\maketitle
\IEEEdisplaynontitleabstractindextext

\section{Introduction}
\subsection{Motivation and Objective}

In this paper, we study the exact reconstruction of complex scenes in the context of
  multistatic interferometric imaging.
Interferometric imaging is a close relative of phaseless imaging where, in lieu of
self-correlated, intensity only data, we have pairwise cross-correlated data that introduces a
phase component.
This work establishes \emph{Generalized Wirtinger Flow} (GWF), a computationally
  efficient interferometric imaging
method developed in~\cite{bariscan2018}, as a theoretical framework for exact multistatic imaging
of complex scenes, while
relating its recovery guarantees to the imaging system parameters.
To this end, we design a
deterministic and underdetermined measurement model 
satisfying the GWF's sufficient condition for
exact recovery. In addition, we show that it is possible to obtain exact reconstruction at
resolutions smaller than Fourier-based methods and provide minimum order of measurements sufficient to guarantee such reconstruction.

The recently developed GWF algorithm is inspired by standard Wirtinger Flow
(WF)~\cite{candes2015phase} developed for the generalized phase retrieval problem. WF is a
computationally efficient alternative to lifting based methods~\cite{Candes13a, Candes13b}.
The GWF algorithm extends the standard WF to
interferometric inversion problems, and identifies a sufficient condition for exact recovery for
arbitrary measurement models, characterized over the \emph{lifted} domain.
Hence, unlike standard WF theory which guarantees exact recovery for specific random measurement
models, GWF theory 
guarantees exact recovery for a general class of inverse
problems including random and deterministic models that abide by a single condition.
In particular, the sufficient condition requires the lifted forward map to satisfy the restricted
isometry property (RIP) for rank-$1$, positive semi-definite (PSD) matrices with a sufficiently
small restricted isometry constant (RIC).
To the best of our
knowledge, our work is the first  in which a deterministic and underdetermined forward model
satisfying RIP for rank-$1$ PSD matrices in the lifted domain has been designed. 

We provide two outcomes that unify the imaging problem with the mathematical theory of GWF.
First, we determine the minimum pixel spacing to satisfy the sufficient condition for exact recovery via the GWF algorithm.
Our lower bound depends on the imaging system parameters, thereby, quantifies the range of values and imaging scenarios for exact recovery guarantees to hold.
For common radar imaging parameters spanning passive and active imaging modalities, this
fundamental lower bound outperforms the range resolution limit of 
Fourier-based imaging methods
for sufficiently small scenes.

Secondly, we determine the sample complexity in the order of the number of unknowns to be
reconstructed.
Unlike the classical results from electromagnetic scattering theory which study the degrees of freedom of scattered fields via their spatially band-limited nature and Nyquist sampling \cite{bucci1987spatial, bucci1989degrees}, 
our analysis is based on the discrete problem, with a geometry consisting of sparsely distributed, static, terrestrial receivers. 
Furthermore, our sampling complexity result directly relates to the exact recovery guarantees of our algorithm through the properties of the lifted forward map, rather than the accuracy of interpolating the sampled scattered field. 
Notably, our result establishes that the exact recovery guarantees of GWF hold for a lifted forward model that is underdetermined.  
Hence, we specify a multistatic measurement model with an optimal order
of measurements for interferometric wave-based imaging. 

\subsection{Related Work and Advantages of GWF}

Interferometric techniques have a rich history in acoustic, geophysical, and electromagnetic wave-based imaging.  
Cross correlations are frequently deployed as a fundamental formulation for passive modalities \cite{Yarman08, Garnier09, wang10, ammari2013passive, mason15}, in which the received ambient signal originates from a source of opportunity, such as a wireless communication signal, digital TV signal or FM radio for radar imaging. 
On the other hand, cross-correlations were proposed for active imaging problems to mitigate the effects of statistical fluctuations in scattering media~\cite{garnier2005imaging, lobkis2001emergence}, clutter \cite{borcea2003theory, borcea2005interferometric, borcea2006coherent}, and phase errors in the correlated linear transformations~\cite{blomgren2002super, gough2004displaced, flax1988phase, mason2016robustness}.
More recently, inversion from phaseless scattered fields were proposed \cite{novikov2015formulas, bardsley2016kirchhoff, chen2017phaseless}, to fully eliminate the need of coherent data acquisition in various modalities to cut implementation costs \cite{novikov2015illumination}, or evade fundamental issues in maintaining phase coherence over long synthetic apertures \cite{laviada2014phaseless}. 
Here, we motivate our approach for interferometric multi-static radar imaging via its advantages over several conventional and modern methods in the literature. 

\subsubsection{Passive Radar}
A popular method for passive imaging is the time difference of arrival (TDOA) backprojection~\cite{Yarman08, wang10, Wang11, wang12, LWang12, Wacks14}.
Although they are computationally efficient, TDOA backprojection is based on certain assumptions on
the scatterers~\cite{Yarman08} that are not applicable for realistic scenes and can produce undesirable
background artifacts~\cite{Kissinger2012}.

To mitigate this problem, methods based on lifting have
been
adapted to the interferometric measurement model for passive imaging~\cite{mason15, Demanet13}.
These methods are inspired by the convex semi-definite programming approaches to phase retrieval~\cite{Candes13a, Candes13b, waldspurger2015phase, Chai11}, which reformulate inversion as a low rank matrix recovery
(LRMR) problem. 
Convexification has the added advantage that LRMR is known to
have theoretical exact recovery guarantees under certain conditions on the lifted forward
map~\cite{Recht10}.
However, these
advantages come at the cost of increasing the dimension of the inverse problem, and hence introduce
several 
limitations.
Specifically, as a result of lifting, LRMR suffers from limitations on spatial sampling of the imaging grid due to high computational complexity and demanding memory requirements~\cite{mason15, bariscan2018}.

GWF is a non-convex optimization approach that operates fully on the original signal domain, thus avoids lifting the problem at implementation and provides computational and memory efficiency over LRMR methods.
Unlike TDOA/FDOA backprojection, GWF guarantees exact recovery without additional prior knowledge or limiting assumptionson the scene.
Furthermore, the exact recovery guarantees afforded by LRMR~\cite{Recht10} require more stringent conditions on the lifted forward map than that of
GWF~\cite{bariscan2018}.

\subsubsection{Active Radar}
For active imaging, there exists a rich literature of methods on general multistatic
geometries 
involving distributed
antennas~\cite{haimovich2008,chernyak98} or arrays~\cite{Levari00}.
These include
time
reversal and beamforming~\cite{griffiths82,PRADA1994}, subspace methods (MUSIC~\cite{Levari00,arogyaswami93}, linear sampling~\cite{Kirsch98,Kirsch2000,arens2009linear}), and iterative optimization schemes~\cite{strong2003edge,yu2010,herman09}.

Time-reversal,
beamforming and
MUSIC have wide use in array imaging problems.
These methods assume 
that the scatterers in the scene of interest are point-like and 
the number of measurements are greater than number of scatterers  in the scene~\cite{arogyaswami93,kim02}.  
This is in stark contrast to our GWF framework, in which no such assumptions are needed.

Linear sampling methods were devised 
to extend the applicability of subspace methods to the reconstruction of extended targets in the
far field and can recover the boundaries of extended objects~\cite{Kirsch98,Kirsch2000}.
Similar to our imaging system geometry, linear sampling
methods consider a scenario that the receivers and transmitters fully encircle the scene of
interest in the far field.
However, the method degrades considerably when the aperture angle is less than $2\pi$ radians~\cite{audibert17}.
Our GWF result quantifies the impact of the aperture angle directly in the sufficient condition for exact recovery.
Hence, GWF has applicability when aperture angle to the scene is limited.


Regularized iterative reconstruction approaches, such as total variation (TV)~\cite{strong2003edge} and $\ell$-1 regularization~\cite{yu2010,herman09},
have shown to achieve edge
preservation.  However, regularized iterative reconstruction approaches, in general, do not offer a
theoretical exact recovery guarantee. 
Notably, TV regularization, while convex, is known to have multiple non-trivial minimizers.
In addition, the TV regularizer does not have a closed form proximity operator, hence iterative reconstruction requires an inner optimization problem at each iteration.
Similar problems also exist with $\ell_1$ regularization due to existence of a tuning parameter,
which is heuristically determined.
More importantly, the $\ell_1$ regularizer is based on strong sparsity assumptions on the scene, which is not applicable to realistic scenes.
GWF, on the other hand, offers exact recovery guarantees for complex, realistic 
scenes
with low computational complexity per iteration.

\subsection{Organization of the Paper}
In Section~\ref{sec:signal}, we describe the signal model for interferometric multistatic radar. 
Section~\ref{subsec:mainres} presents our
main results which establish how imaging system parameters of multistatic radar
controls the RIC for rank-$1$ real-valued PSD matrices.
Section~\ref{sec:numsim}
describes the simulated experiments performed to verify our results in Section~\ref{subsec:mainres}.
Section~\ref{sec:conc} concludes the paper.


\section{Signal Model}\label{sec:signal}

\subsection{Received Data Model}\label{subsec:signal1}
Let $N$ be the number of receivers each deployed at different spatial locations $\a_i^r$,
$i=1,\dots,N$, 
where subscript $i$ denotes the $i$-th receiver 
and superscript $r$ denotes receiver.  Assume a
single transmitter located at $\a^t$.
Furthermore, without loss of generality, we assume that the ground topography is flat.
Thus, each spatial location in three-dimensional space is represented as $\x=[\bm{x}, 0]$ where $\bm{x}\in\mathbb{R}^2$.
Under these assumptions and that only the scattered field is being measured under the Born approximation, the received signal at each receiver for multistatic radar can be
modeled as ~\cite{cheney2009fundamentals}
\begin{align}
  f_i(\omega) = &\int_{D} \mrm{e}^{\mrm{i}\omega/c_0
    \phi_i(\x)}A_i(\bm{x},\omega)\rho(\bm{x}) d\bm{x},\label{eq:di2}\\
  &\omega\in[\omega_c-B/2, \omega_c+B/2]\subset \mathbb{R}\nonumber
\end{align}
where
\begin{equation}
  \phi_i(\x) = \abs{\x-\a_i^r} + \abs{\x-\a^t}\label{eq:phi__}
\end{equation}
is the bistatic phase function, $D\subset\mathbb{R}^2$ is the support of the scene,
$\omega$ is the fast-time frequency variable,
$\omega_c$ is the center frequency, $B$ is the bandwidth,
$c_0$ is the speed of light, $\rho$ is the target/scene reflectivity function; 
and $A_i$ is the amplitude function given by
\begin{equation}
  A_i(\x,\omega) = \frac{J_i(\x,\omega) J_t (\x, \omega)}{\abs{\x-\a_i^r}\abs{\x-\a^t}}
  \label{eq:Aij}
\end{equation}
with $J_i$, and $J_t$ 
being the receiver and transmitter antenna beampatterns. 
\subsection{Correlated Measurements}\label{subsec:signal2}
Given the data model~\eqref{eq:di2}, we consider the interferometric data, i.e. fast-time
cross-correlation of the measurements at pairs of different receivers. 
Furthermore, we make the assumption that $\abs{J_t(\x,\omega_m)} \approx C_t \in \mathbb{R}^{+}$. 
In other words, we assume 
that the
transmitted waveform has a flat spectrum, and that the scene remains in the $-3$ dB beam-width of the illumination pattern. 
This is typical of radar waveforms and waveforms of opportunity such as
phase shift keying (PSK) modulation found in orthogonal frequency-division multiplexing (OFDM)
common among digital communications. 
Using~\eqref{eq:di2}--\eqref{eq:Aij}, the correlated measurements can be modeled as
\begin{equation}
  d_{i,j}(\omega) = \int_{D\times D}
  \mrm{e}^{\mrm{i}\omega/c_0\varphi_{i,j}(\x,\x')}A_{i,j}(\bm{x},\bm{x}', \omega)
  \tilde{\rho}(\bm{x},\bm{x}') d\bm{x} d\bm{x}'\label{eq:dij}
\end{equation}
where
\begin{equation}
  \varphi_{i,j}(\x,\x') = \abs{\x-\a_i^r}+\abs{\x-\a^t}-\abs{\x'-\a_j^r}-\abs{\x'-\a^t},\label{eq:varphi}
\end{equation}
\begin{equation}\label{eq:corrramp}
  A_{i,j}(\bm{x},\bm{x}', \omega) = A_i(\bm{x}, \omega)A_j^{*}(\bm{x}', \omega)
\end{equation}
and
\begin{equation}
  \tilde{\rho}(\bm{x},\bm{x}') = \rho(\bm{x})\rho^{*}(\bm{x}')
\end{equation}
with $(\cdot)^*$ denoting complex conjugation.
We call $\tilde{\rho}$ the lifted version of $\rho$ or the \emph{Kronecker scene}.

We next make the small-scene and far-field approximation and approximate the phase term in~\eqref{eq:varphi} 
as
\begin{equation}
  \varphi_{i,j}(\x,\x') \approx \abs{\a_i^r}-\abs{\a_j^r}-\ip{\hat{\a}_i^r}{\x} + \ip{\hat{\a}_j^r}{\x'}-\ip{\hat{\a}^t}{\x-\x'}\label{eq:varphi2}
\end{equation}
where $\hat{\a}$ is the unit vector in the direction of $\a$, and ~\eqref{eq:corrramp} as 
\begin{equation}
  A_{i,j}(\bm{x},\bm{x}', \omega)\approx \alpha_{i,j}:=\frac{C_t^2 {J_i} (\bm{x}_o, \omega) {J_j}(\bm{x}_o, \omega)^*}{\abs{\a_i^r}\abs{\a_j^r}\abs{\a^t}^2}\label{eq:Aij2}
\end{equation}
where $\bm{x}_o$ denotes the center of the scene, with $\abs{J_i (\bm{x}_o, \omega)} \approx C_i$ over the observed frequency band. 
We assume that the support of the scene is discretized into $K$ discrete spatial points,
$\{\bm{x}_k|\ k=1,\dots,K\}$ and define $\bm{\rho} = [\rho(\bm{x}_1),\dots,\rho(\bm{x}_K)]^T$.
We further assume that the support of $\omega$ is discretized into $M$ samples, $\Omega =
\{\omega_m| m=1,\dots, M\}$
so that $\bm{d}_{i,j} = [d_{i,j}(\omega_1),\dots,d_{i,j}(\omega_M)]^T$,
$\omega_m = \omega_c - B/2 + \frac{m-1}{M}B$.

We write~\eqref{eq:dij} as
\begin{equation}
  d_{i,j}(\omega_m) = 
  \ip{\L_{i}^m}{\bm{\rho}}\ip{\L_j^m}{\bm{\rho}}^{*} = \opn{tr}\left(\L_j^m(\L_i^m)^H\tilde{\bm{\rho}}\right)\label{eq:dij3}
\end{equation}
where
\begin{equation}
  \L_i^m = [\mrm{e}^{-\mrm{i}\omega_m/c_0 \phi_i(\x_k)}A_i]_{k=1}^K, \qquad i=1,... N.\label{eq:Li}
\end{equation}

Let
\begin{equation}
  \d = \frac{1}{\sqrt{M\binom{N}{2}}}[\bm{d}_{1,2}^T,\dots, \bm{d}_{N-1,N}^T]^T
\end{equation}
be the full vectorized data scaled by the number of correlated measurements.~\eqref{eq:dij3}
shows that the data vector $\d$ is linear in $\tilde{\bm{\rho}}$,
the  Kronecker scene, while it is non-linear in $\bm{\rho}$.
Thus, the data vector can be written as
\begin{equation}
  \d = \mathcal{F}(\tilde{\bm{\rho}})\label{eq:datamodel}
\end{equation}
where $\mathcal{F}$ is a linear mapping from $\mathbb{R}^{K\times K}$ to $\mathbb{C}^{M\binom{N}{2}}$.
Alternatively, if $\overline{\bm{\rho}}$
is the column-wise vectorization of $\tilde{\bm{\rho}}$, 
\begin{equation}
  \d = \F\overline{\bm{\rho}}
\end{equation}
where $\F$ is a complex-valued matrix of size $M\binom{N}{2}\times K^2$, whose rows are formed by row-wise
vectorization of the matrix $\L_i^m(\L_j^m)^H$.
\section{Exact Multistatic Wave-Based Imaging}\label{subsec:mainres}

In this section, we are concerned with identifying the imaging system parameters, i.e.,
design of the measurement vectors $\L_i^m$, $i=1,\dots,N$, $m=1,\dots,M$, so that
the lifted forward map $\mathcal{F}$ satisfies the
sufficient condition proved 
in~\cite{bariscan2018} for exact recovery via the GWF algorithm.
We establish all the results presented in this section under the following assumption.
\begin{assumption}\label{assump:2}
  Let
  \begin{equation}
    \begin{aligned}
    \Phi_{i,j}^{k,k',l,l'} =&
    \ip{\hat{\a}_i^r}{\x_k-\x_{k'}}-\ip{\hat{\a}_j^r}{\x_l-\x_{l'}}\\
    &+\ip{\hat{\a}^t}{\x_k-\x_{k'}-\x_l+\x_{l'}}.
    \end{aligned}
    \label{eq:phasefunc}
  \end{equation}
  Then, we assume that $\frac{B}{2Mc_0}\Phi_{i,j}^{k,k',l,l'}\ll 2\pi$ for all
  $(i,j,k,k',l,l')$   where $B$ is the bandwidth of the received signal, $M$ is the number of frequency samples, and
  $c_0$ is the speed of light in a vacuum.
\end{assumption}
\assref{assump:2} is used to make small angle approximation in the proof of~\lemref{lem:lem1} below.
This assumption implies that the number of frequency
samples needed depends on the
bandwidth of the transmitted waveform
and the maximum value of $\Phi_{i,j}^{k,k',l,l'}$, 
which depends on the
size of the scene and the placement of the receivers.
As later seen in~\eqref{eq:Mcomp}, this assumption is easily satisfied if the scene is sufficiently small. 

We next introduce the following lemma
that expresses the kernel of the operator
$\mathcal{F}$ in terms of sinc functions.  This lemma is used in proving \textbf{Propositions}
\textbf{\ref{prop:prop1}} and \textbf{\ref{prop:prop2}}, and for the main result in~\thmref{thm:Theorem1}.
\begin{lemma}\label{lem:lem1}
  Suppose~\assref{assump:2} holds.
  Then, the $2$-norm of the data, $\d$ can be written as
  \begin{equation}
    \begin{aligned}
      \norm{\d}_2^2 = \norm{\mathcal{F}\tilde{\brho}}^2_2
      = \frac{\sum_{i<j}|\alpha_{i,j}|^2}{\binom{N}{2}}&\left(\norm{\tilde{\brho}}_F^2 +
        \sum_{k\neq  k',l\neq l'}\mathcal{K}(\Phi_{i,j}^{k,k',l,l'})\right.\\
        &\left. \times\tilde{\rho}(\bm{x}_k,\bm{x}_{k'})\tilde{\rho}(\bm{x}_{l'},\bm{x}_{l}) \right)
  \end{aligned}\label{eq:splits3}
  \end{equation}
  where the phase term $\Phi_{i,j}^{k,k',l,l'}$ is as in~\eqref{eq:phasefunc}
  and
  \begin{equation}
    \mathcal{K}(\Phi) =\frac{\sin\left[\left(\omega_c'+\frac{B}{2}\right)\frac{\Phi}{c_0}\right] -
      \sin\left[\left(\omega_c'-\frac{B}{2}\right)\frac{\Phi}{c_0}\right]}{B\frac{\Phi}{c_0}},\label{eq:K}
  \end{equation}
  with $\omega_c' = \omega_c - \frac{B}{2M}$.
\end{lemma}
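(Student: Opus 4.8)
The plan is to compute $\norm{\d}_2^2$ directly from its definition and collapse the frequency sum onto the kernel $\mathcal{K}$. First I would write
\[
\norm{\d}_2^2 = \frac{1}{M\binom{N}{2}}\sum_{i<j}\sum_{m=1}^M \abs{d_{i,j}(\omega_m)}^2 ,
\]
and substitute the far-field correlated model, i.e. \eqref{eq:dij3} with the amplitude approximation \eqref{eq:Aij2}, so that $d_{i,j}(\omega_m)=\alpha_{i,j}\sum_{k,k'}\mrm{e}^{\mrm{i}\omega_m\varphi_{i,j}(\x_k,\x_{k'})/c_0}\rho(\x_k)\rho(\x_{k'})^*$. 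Expanding $\abs{d_{i,j}(\omega_m)}^2 = d_{i,j}(\omega_m)d_{i,j}(\omega_m)^*$ produces a quadruple sum over $(k,k',l,l')$ whose reflectivity factor is exactly $\tilde{\rho}(\x_k,\x_{k'})\tilde{\rho}(\x_{l'},\x_l)$ and whose exponent is $\varphi_{i,j}(\x_k,\x_{k'})-\varphi_{i,j}(\x_l,\x_{l'})$. Using the small-scene/far-field phase \eqref{eq:varphi2}, and relabeling the summation index (the kernel $\mathcal{K}$ will turn out even, so the sign is immaterial), I would identify this exponent with $\Phi_{i,j}^{k,k',l,l'}$ of \eqref{eq:phasefunc}.

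Next I would interchange the order of summation and isolate the frequency factor $\frac1M\sum_m \mrm{e}^{\mrm{i}\omega_m\Phi/c_0}$. Since $\omega_m=\omega_c-B/2+\frac{m-1}{M}B$ is arithmetic in $m$, this is a finite geometric series, and factoring numerator and denominator with half-angle identities gives
\[
\frac1M\sum_{m=1}^M \mrm{e}^{\mrm{i}\omega_m\Phi/c_0}= \frac{1}{M}\,\mrm{e}^{\mrm{i}\omega_c'\Phi/c_0}\,\frac{\sin(B\Phi/2c_0)}{\sin(B\Phi/2Mc_0)},
\]
which is precisely where the shifted center $\omega_c'=\omega_c-B/2M$ appears. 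Invoking \assref{assump:2} to replace $\sin(B\Phi/2Mc_0)$ by its argument linearizes the denominator and collapses this Dirichlet kernel into the sinc-type expression; taking the real part and applying a product-to-sum identity reproduces $\mathcal{K}(\Phi)$ as in \eqref{eq:K}.

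The one genuinely delicate point, and the step I would treat most carefully, is justifying that only the real kernel $\mathcal{K}$ survives. The complex kernel above also carries an imaginary part odd in $\Phi$, which a priori could couple to the complex reflectivity products. Here I would exploit the separability $\Phi_{i,j}^{k,k',l,l'}=\ip{\hat{\a}_i^r+\hat{\a}^t}{\x_k-\x_{k'}}-\ip{\hat{\a}_j^r+\hat{\a}^t}{\x_l-\x_{l'}}$ apparent from \eqref{eq:phasefunc}: at each frequency the quadruple sum then factors through quantities of the form $\abs{\sum_k \mrm{e}^{\mrm{i}\omega_m\ip{\hat{\a}_i^r+\hat{\a}^t}{\x_k}/c_0}\rho(\x_k)}^2$, which are real and nonnegative, so the odd (sine) contribution vanishes identically; the conjugation symmetry $(k,k')\leftrightarrow(k',k)$, $(l,l')\leftrightarrow(l',l)$ then confirms that the surviving expression is real.

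Finally I would split the surviving real sum into diagonal and off-diagonal parts. On the diagonal $\x_k=\x_{k'}$ and $\x_l=\x_{l'}$ the phase vanishes; the frequency average of $\mrm{e}^{\mrm{i}0}$ is $\mathcal{K}(0)=1$ (the factor $M$ cancelling the $1/M$ normalization), leaving $\sum_{k,l}\abs{\rho(\x_k)}^2\abs{\rho(\x_l)}^2=\norm{\tilde{\brho}}_F^2$. The remaining off-diagonal tuples carry the factor $\mathcal{K}(\Phi_{i,j}^{k,k',l,l'})$, and restoring the prefactor $\frac{1}{\binom{N}{2}}\abs{\alpha_{i,j}}^2$ and summing over $i<j$ yields \eqref{eq:splits3}. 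The main obstacle is the middle step: evaluating the geometric sum, reading off $\omega_c'=\omega_c-B/2M$ from the half-angle factorization, and confirming under \assref{assump:2} that the imaginary kernel contribution is exactly zero so that the real $\mathcal{K}$ alone remains.
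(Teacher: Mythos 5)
Your route is the paper's route, step for step: expand $\norm{\d}_2^2$ into the quadruple sum over $(k,k',l,l')$ with phase $\Phi_{i,j}^{k,k',l,l'}$, evaluate $\frac1M\sum_m \mrm{e}^{-\mrm{i}\omega_m\Phi/c_0}$ as a geometric series to get $\mrm{e}^{-\mrm{i}\omega_c'\Phi/c_0}\sin(B\Phi/2c_0)/(M\sin(B\Phi/2Mc_0))$ with $\omega_c'=\omega_c-B/(2M)$, linearize the denominator via \assref{assump:2} to collapse to a sinc, split off the diagonal $k=k',\,l=l'$ to produce $\norm{\tilde{\brho}}_F^2$, and fold $\cos(\omega_c'\Phi/c_0)\opn{sinc}(B\Phi/2c_0)$ into $\mathcal{K}(\Phi)$ by a product-to-sum identity. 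All of that matches the computation in Appendix~\ref{prf:lem1}.

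The gap is precisely at the step you flag as delicate. Your claim that the odd (sine) part of the kernel ``vanishes identically'' because each fixed-frequency quadruple sum factors into products of squared moduli only establishes that the \emph{total} is real; it does not license replacing $\opn{Re}\{\mrm{e}^{-\mrm{i}\omega_c'\Phi/c_0}\opn{sinc}(\cdot)\,\tilde{\rho}(\bm{x}_k,\bm{x}_{k'})\tilde{\rho}(\bm{x}_{l'},\bm{x}_l)\}$ by $\cos(\omega_c'\Phi/c_0)\opn{sinc}(\cdot)\,\tilde{\rho}(\bm{x}_k,\bm{x}_{k'})\tilde{\rho}(\bm{x}_{l'},\bm{x}_l)$ term by term. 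Since $\opn{Re}\{\mrm{e}^{-\mrm{i}\theta}z\}=\cos\theta\,\opn{Re}z+\sin\theta\,\opn{Im}z$, for complex $\brho$ the sine part genuinely couples to $\opn{Im}\{\tilde{\rho}(\bm{x}_k,\bm{x}_{k'})\tilde{\rho}(\bm{x}_{l'},\bm{x}_l)\}$, and the right-hand side of \eqref{eq:splits3} would not even be real; your conjugation symmetry $(k,k')\leftrightarrow(k',k)$, $(l,l')\leftrightarrow(l',l)$ again only certifies reality of the sum, not the stated termwise form. The ingredient the paper actually invokes at this point --- and that your argument must state --- is that $\tilde{\brho}$ is real-valued (consistent with the standing restriction $\brho\in\mathbb{R}^K$ in the GWF setting), which makes the reflectivity products real so that the cosine alone survives. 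With that hypothesis added your proof coincides with the paper's; without it the identity as written is false. (One shared blemish you inherit rather than introduce: passing from the full sum to $\norm{\tilde{\brho}}_F^2+\sum_{k\neq k',l\neq l'}$ silently drops the mixed tuples with $k=k',\,l\neq l'$ and $k\neq k',\,l=l'$, exactly as the paper does.)
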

\begin{proof}
  See Appendix~\ref{prf:lem1}.
\end{proof}

\subsection{GWF Framework for Interferometric Imaging}
For establishing the GWF as an exact interferometric wave-based imaging framework, we study the restricted isometry property (RIP) of the lifted forward map $\mathcal{F}$ over the set of rank-$1$, PSD matrices. 
\begin{definition}\label{def:RIP}
Let $\mathcal{F} : \mathbb{C}^{K \times K} \rightarrow \mathbb{C}^{M\binom{N}{2}}$ denote the lifted forward model provided in \eqref{eq:datamodel}. Then $\mathcal{F}$ satisfies the restricted isometry property over rank-1, positive semi-definite matrices of the form $\tilde{\brho} = \brho \brho^H$, with a restricted isometry constant (RIC)-$\delta$, if
\begin{equation}
(1 - \delta ) \| \tilde{\brho} \|_F^2 \leq \frac{1}{M} \| \mathcal{F} (\tilde{\brho} ) \|^2 \leq (1 + \delta ) \| \tilde{\brho} \|_F^2,
\end{equation}
for any $\brho \in \mathbb{C}^K$, where $\| \cdot \|_F$ denotes the Frobenius norm. 
\end{definition}

Notably, we consider a domain of $\brho \in \mathbb{R}^K$ in optimization via GWF, for which the exact recovery guarantees of \cite{bariscan2018} directly apply. 
Thereby, the GWF algorithm for interferometric imaging is summarized as follows:
\begin{itemize}
\item \textbf{Inputs:} Interferometric measurements $d_{ij}(\omega_m)$ and measurement vectors $\mathbf{L}_i^m$, $i = 1, \cdots N$. 

\item \textbf{Initialization:} \emph{Backproject} the interferometric measurements to the lifted domain, i.e., form an estimator for the Kronecker scene as: 
\begin{equation}\label{eq:SpecInit}
\hat{\mathbf{X}} = \frac{1}{M} \mathcal{P}_{S} \left(\mathrm{Re} \{\mathcal{F}^H ( \d ) \right)\},
\end{equation}
and keep its rank-1 approximation $\lambda_0 \brho_0 \brho_0^H$, where $\mathcal{P}_S$ is the projection operator onto the set of symmetric matrices. The initialization step consists of the outer product of the two measurement vectors for each of the $M$ samples, resulting in $\mathcal{O}(MK^2)$ multiplications, followed by an eigenvalue decomposition with $\mathcal{O}(K^3)$ complexity.

\item \textbf{Iterations:} Initializing with $\brho_0$, perform gradient descent updates as $\brho_{k+1} = \brho_{k} - \frac{\mu_{k}}{\| \brho_0 \|^2} \nabla \mathcal{J}(\brho_{k})$, having 
\begin{equation}\label{eq:objgwf}
\mathcal{J}(\brho) = \frac{1}{2} \| \mathcal{F}(\brho \brho^H) - \mathbf{d} \|_2^2,
\end{equation}
which yields
\begin{equation}\label{eq:gradgwf}
\nabla \mathcal{J}( \brho_k ) = \frac{1}{M} \mathcal{P}_S \left( \mathrm{Re} \{\mathcal{F}^H(\mathbf{e}_k)\} \right) \brho_{k},
\end{equation}
where $(\mathbf{e}_k)_{ij}^m = (\left(\mathbf{L}_i^m \right)^H \brho_{k} \brho_{k}^H \mathbf{L}_j^m - d_{ij}(\omega_m) ), \ \forall i \neq j$, $m = 1, \cdots, M$. 

Each iteration requires the following operations:
\begin{enumerate}
\item Computing and storing the linear terms $(\mathbf{L}_{i,j}^m)^H \brho_k$, requiring $M$ number of $K$ multiplications for each, resulting in $\mathcal{O}(MK)$ multiplications. 
\item Computing the error by cross correlating linear terms, requiring $\mathcal{O}(M)$ multiplications.
\item Multiplication of the linear terms $(\mathbf{L}_{i,j}^m)^H \brho_k$ and the error $e_{ij}^m$ for each $m = 1, \cdots M$, requiring $\mathcal{O}(M)$ multiplications.
\item Multiplication of the result in $3$ with vectors $\{\mathbf{L}_i^m\}_{m=1}^M$ and $\{\mathbf{L}_j^m\}_{m=1}^M$, requiring $\mathcal{O}(MK)$ multiplications.
\end{enumerate}
These operations result in $\mathcal{O}(MK)$ multiplications for each iteration. 
\end{itemize}

In particular, \cite{bariscan2018} establishes that exact recovery of a ground truth $\brho_t \in \mathbb{C}^K$ is
guaranteed upto a global phase factor, if the forward operator for
the lifted Kronecker scene, satisfies the RIP over the set of rank-$1$, 
PSD matrices (i.e., $\tilde{\brho}$) with RIC of less than $0.214$. 
Furthermore, starting from the initial estimate computed from ~\eqref{eq:SpecInit}, gradient descent iterations minimizing ~\eqref{eq:objgwf} using ~\eqref{eq:gradgwf} converges geometrically to the true solution at a rate $1 - \kappa$, where $\kappa$ is upper bounded by
\begin{equation}\label{eq:convRate}
\frac{2 \mu}{\alpha} \leq \frac{(1-\delta_1) h^2(\delta_1)}{(1+\delta_1) c^2(\delta_1)}, 
\end{equation}
where $h \leq c$ are positive constants solely depending on $\delta$ \cite{bariscan2018}. 



\subsection{Asymptotic Result}

As a stepping stone for our main result, we begin by showing the asymptotic isometry of $\mathcal{F}$ defined in~\eqref{eq:datamodel}, as $\omega_c\rightarrow\infty$ and $N\rightarrow\infty$.
Following our asymptotic analysis of the kernel of $\mathcal{F}$, we characterize its RIP over rank-1, PSD matrices in the non-asymptotic regime.
As a result of our non-asymptotic analysis, we derive an upper bound on the restricted isometry constant that is controlled by the imaging system parameters.

Despite its limited use in practice, our initial asymptotic result 
offers a valuable benchmark for the non-asymptotic case.
Notably, it justifies assessing how the isometry of $\mathcal{F}$ is perturbed over the set of rank-1, PSD matrices when the central frequency $\omega_c$, and the number of receivers $N$ are finite.
We specifically make use of this perspective in establishing our main result, by analytically evaluating elements $\brho \brho^H$ in the range of $\mathcal{F}^H \mathcal{F}$.
Furthermore, it characterizes the expected limiting behavior of our upper bound estimate on the RIC-$\delta$.

The following proposition shows that in the asymptotic regime, i.e., as $\omega_c$ gets large,
$\mathcal{F}$ becomes a delta function
with respect to the phase term $\Phi_{i,j}^{k,k',l,l'}$.
\begin{proposition}\label{prop:prop1}
  Under~\assref{assump:2}, we have
  \begin{equation}
  \lim_{\omega_c'\rightarrow\infty} \mathcal{K}(\Phi_{i,j}^{k,k',l,l'}) = \begin{cases}
    0 & \Phi_{i,j}^{k,k',l,l'}\neq 0\\
    1 & \Phi_{i,j}^{k,k',l,l'}= 0.
  \end{cases}
\end{equation}
\end{proposition}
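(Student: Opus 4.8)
The plan is to collapse the difference of sines in \eqref{eq:K} into a single modulated-envelope form and then dispose of the two branches separately; throughout I abbreviate $\Phi\defeq\Phi_{i,j}^{k,k',l,l'}$. First I would apply the prosthaphaeresis identity $\sin\alpha-\sin\beta=2\cos\!\frac{\alpha+\beta}{2}\sin\!\frac{\alpha-\beta}{2}$ to the numerator of \eqref{eq:K} with $\alpha=\left(\omega_c'+\tfrac{B}{2}\right)\tfrac{\Phi}{c_0}$ and $\beta=\left(\omega_c'-\tfrac{B}{2}\right)\tfrac{\Phi}{c_0}$. Since $\tfrac{\alpha+\beta}{2}=\omega_c'\tfrac{\Phi}{c_0}$ and $\tfrac{\alpha-\beta}{2}=\tfrac{B\Phi}{2c_0}$, the kernel factors as
\begin{equation}
\mathcal{K}(\Phi)=\cos\!\left(\frac{\omega_c'\,\Phi}{c_0}\right)\frac{\sin\!\left(\frac{B\Phi}{2c_0}\right)}{\frac{B\Phi}{2c_0}},\label{eq:prop1fac}
\end{equation}
a fast carrier at the effective center frequency $\omega_c'$ multiplying a sinc envelope whose argument does not involve $\omega_c'$. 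Equivalently one may write $\mathcal{K}(\Phi)=\tfrac1B\int_{\omega_c'-B/2}^{\omega_c'+B/2}\cos\!\left(\tfrac{\omega\Phi}{c_0}\right)\mathrm{d}\omega$, exhibiting $\mathcal{K}$ as a band-limited average of a pure harmonic. This reduction is the workhorse for both branches.

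For $\Phi=0$, the ratio in \eqref{eq:K} is an indeterminate $0/0$ that I would resolve by continuity of \eqref{eq:prop1fac}: the carrier tends to $\cos 0=1$ and the envelope to $\lim_{x\to0}\tfrac{\sin x}{x}=1$ (equivalently, a first-order Taylor expansion of the two sines in \eqref{eq:K} cancels the factor $\tfrac{B\Phi}{c_0}$ in the denominator). Hence $\mathcal{K}(0)=1$ identically in $\omega_c'$, and the limit is trivially $1$, which is the second branch of the claim.

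For $\Phi\neq0$ the envelope $\sin(B\Phi/2c_0)/(B\Phi/2c_0)$ is a fixed number, so \eqref{eq:prop1fac} shows that all the asymptotic behavior is carried by $\cos(\omega_c'\Phi/c_0)$. The plan is to obtain the vanishing of $\mathcal{K}(\Phi)$ from the concentration of this carrier: as $\omega_c'\to\infty$ the band-average representation sweeps the harmonic $\cos(\omega\Phi/c_0)$ over an unbounded range of phases, so the carrier contributes no net mass and $\mathcal{K}(\Phi)\to0$. This is the sense in which $\mathcal{F}$ ``becomes a delta function'' in $\Phi$: the two limiting values, $1$ at $\Phi=0$ and $0$ off the diagonal, are precisely the values of the Kronecker indicator of $\{\Phi=0\}$ that subsequently drive the summation in \lemref{lem:lem1}.

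I expect the $\Phi\neq0$ branch to be the crux. Because the carrier oscillates rather than decays monotonically, a naive envelope bound such as $|\mathcal{K}(\Phi)|\le 2c_0/(B|\Phi|)$ is uniform in $\omega_c'$ and does not by itself deliver the limit; the vanishing must be extracted from the oscillatory cancellation of $\cos(\omega_c'\Phi/c_0)$, i.e., in the averaged sense in which $\mathcal{K}$ is applied over the finite index set of \lemref{lem:lem1}. Finally, \assref{assump:2}, which keeps $\tfrac{B}{2Mc_0}\Phi$ small for every index tuple, is what legitimizes the small-angle collapse of the discrete frequency sum into the continuous form \eqref{eq:prop1fac}, so I would check that this hypothesis is invoked before passing to \eqref{eq:prop1fac}.
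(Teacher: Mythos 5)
Your factorization $\mathcal{K}(\Phi)=\cos\!\left(\frac{\omega_c'\Phi}{c_0}\right)\opn{sinc}\!\left(\frac{B\Phi}{2c_0}\right)$ is correct (it simply inverts the trigonometric identity applied at the end of the paper's proof of \lemref{lem:lem1} to obtain \eqref{eq:K}), and your $\Phi=0$ branch is complete. The genuine gap is exactly where you flagged it, and your proposed mechanism for closing it fails. In the band-average representation $\mathcal{K}(\Phi)=\frac{1}{B}\int_{\omega_c'-B/2}^{\omega_c'+B/2}\cos\!\left(\frac{\omega\Phi}{c_0}\right)\rmd\omega$ the window has \emph{fixed} width $B$ and only its center translates as $\omega_c'\to\infty$; the harmonic is not swept over a growing range of phases, and the average evaluates exactly back to $\cos(\omega_c'\Phi/c_0)\,\opn{sinc}(B\Phi/(2c_0))$, which oscillates in $\omega_c'$ with constant amplitude. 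For fixed $\Phi\neq 0$ the pointwise limit therefore does not exist at all (unless $\opn{sinc}(B\Phi/(2c_0))$ happens to vanish), so ``oscillatory cancellation of the carrier'' cannot be extracted from the carrier alone, nor from averaging over the finite index set of \lemref{lem:lem1}: a finite sum of terms $\cos(\omega_c'\Phi/c_0)$ with fixed distinct nonzero $\Phi$'s generically has no limit either. The statement is only true in a weak (distributional) sense, and some argument establishing that sense is the missing content of your proposal.

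The paper supplies precisely this machinery. It decomposes $\mathcal{K}(\Phi)=\frac{\omega_c'}{B}\left(s_1(\Phi)-s_2(\Phi)\right)+\frac{1}{2}\left(s_1(\Phi)+s_2(\Phi)\right)$, where $s_{1,2}(\Phi)=\opn{sinc}\!\left[(\omega_c'\pm B/2)\Phi/c_0\right]$, and proves the delta-sequence limits \eqref{eq:deltalim}--\eqref{eq:deltalim2}, i.e.\ $\frac{\omega_c'}{B}s_{1,2}\to\frac{c_0\pi}{B}\delta$ in the sense of tempered distributions: testing against $f\in S(\mathbb{R})$, the region $\abs{x}\geq\epsilon$ is killed by integration by parts, the term $\int g_{\omega_c'}(x)(f(x)-f(0))\rmd x$ near the origin by the Riemann--Lebesgue lemma, and the mass $\frac{c_0\pi}{B}f(0)$ is recovered from the Dirichlet integral $\int_{-\infty}^{\infty}\frac{\sin u}{u}\rmd u=\pi$. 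With this, the difference term converges weakly to zero (two delta sequences with identical limits) and vanishes identically at $\Phi=0$, while the average term $\frac{1}{2}(s_1+s_2)$ converges genuinely pointwise to the claimed indicator, since $\abs{s_{1,2}(\Phi)}\leq c_0/\left((\omega_c'-B/2)\abs{\Phi}\right)$ for $\Phi\neq 0$ and $s_{1,2}(0)=1$. If you wish to keep your product form, the honest route is the same: prove weak convergence by testing against smooth functions, where the carrier's oscillation does yield cancellation via Riemann--Lebesgue. One small correction to your final paragraph: \assref{assump:2} is consumed upstream, in \lemref{lem:lem1}, to justify the geometric-sum and small-angle steps that produce \eqref{eq:K}; the passage from \eqref{eq:K} to your product form is an exact identity needing no further hypothesis.
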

\begin{proof}
  See Appendix~\ref{prf:prop}.
\end{proof}
Given~\propref{prop:prop1},
the next proposition shows that in the limit as $\omega_c\rightarrow\infty$ and
$N\rightarrow\infty$, $\mathcal{F}$ is an isometry.
\begin{proposition}[Asymptotic Isometry of $\mathcal{F}$ for large $\omega_c$ and $N$]\label{prop:prop2}
  Under~\assref{assump:2}, we have
  \begin{equation}
    \begin{aligned}
    \lim_{\omega_c'\rightarrow\infty,
      S\rightarrow\infty}&\frac{1}{\binom{N}{2}}\sum_{i<j}\abs{\alpha_{i,j}}^2 W_{i,j}\\
    =&\frac{1}{\binom{N}{2}}\sum_{i<j}\abs{\alpha_{i,j}}^2 \sum_{k\neq k',l\neq
      l'}\mathcal{K}(\Phi_{i,j}^{k,k',l,l'})\\
    &\qquad\times\tilde{\rho}(\bm{x}_k,\bm{x}_{k'})\tilde{\rho}(\bm{x}_{l'},\bm{x}_{l})
    = 0
  \end{aligned}
  \end{equation}
\end{proposition}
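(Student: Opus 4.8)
The plan is to read $W_{i,j}$ as the cross term isolated in \lemref{lem:lem1}, namely
\begin{equation}
W_{i,j} = \sum_{k\neq k',\,l\neq l'}\mathcal{K}(\Phi_{i,j}^{k,k',l,l'})\,\tilde{\rho}(\x_k,\x_{k'})\,\tilde{\rho}(\x_{l'},\x_l),
\end{equation}
and to exploit the two limits for distinct purposes: $\omega_c'\to\infty$ collapses the kernel $\mathcal{K}$, while $S\to\infty$ averages the surviving contributions to zero. Since the grid carries only $K$ points, the sum over tuples $(k,k',l,l')$ is finite, so I would first exchange the receiver average with this finite sum and write
\begin{equation}
\frac{1}{\binom{N}{2}}\sum_{i<j}|\alpha_{i,j}|^2 W_{i,j} = \sum_{k\neq k',\,l\neq l'}\tilde{\rho}(\x_k,\x_{k'})\,\tilde{\rho}(\x_{l'},\x_l)\,\frac{1}{\binom{N}{2}}\sum_{i<j}|\alpha_{i,j}|^2\,\mathcal{K}(\Phi_{i,j}^{k,k',l,l'}),
\end{equation}
reducing the claim to showing that each inner receiver average vanishes for every fixed tuple of distinct indices.

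By \propref{prop:prop1}, for fixed receivers the kernel satisfies $\mathcal{K}(\Phi_{i,j}^{k,k',l,l'})\to\mathbbm{1}[\Phi_{i,j}^{k,k',l,l'}=0]$ as $\omega_c'\to\infty$; because $|\mathcal{K}|\le 1$ uniformly and the scene sum is finite, this limit passes term by term, and the task becomes controlling
\begin{equation}
\frac{1}{\binom{N}{2}}\sum_{i<j}|\alpha_{i,j}|^2\,\mathbbm{1}[\Phi_{i,j}^{k,k',l,l'}=0]
\end{equation}
as $S\to\infty$. For a finite array this need not be zero, since accidental alignments can force $\Phi_{i,j}^{k,k',l,l'}=0$; the role of the second limit is precisely to render such alignments negligible as the receivers densify. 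For that step I would model the look-directions $\hat{\a}_i^r$ as equidistributing on the unit circle, so that the normalized pair sum converges to an integral of $|\alpha(\cdot)|^2$ against the (product) limiting angular measure. Fixing a distinct tuple and setting $\u=\x_k-\x_{k'}\neq 0$ and $\v=\x_l-\x_{l'}\neq 0$, we have
\begin{equation}
\Phi_{i,j}^{k,k',l,l'} = \ip{\hat{\a}_i^r}{\u} - \ip{\hat{\a}_j^r}{\v} + \ip{\hat{\a}^t}{\u-\v},
\end{equation}
which is an affine, non-constant function of the direction pair because $\ip{\hat{\a}_i^r}{\u}$ already varies once $\u\neq 0$ (the transmitter direction $\hat{\a}^t$ is fixed and merely shifts $\Phi$). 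Hence its zero set is a one-dimensional curve on the two-dimensional direction torus, a set of measure zero, so the limiting integral of the indicator is zero; summing the finitely many tuples, each weighted by a bounded $\tilde{\rho}(\x_k,\x_{k'})\tilde{\rho}(\x_{l'},\x_l)$, yields the asserted limit $0$.

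The main obstacle is this $S\to\infty$ step: making precise the continuum/equidistribution limit of the receiver-pair average and verifying that the zero set of $\Phi$ carries no mass. This needs an explicit hypothesis on the asymptotic placement of the receiver directions (e.g. asymptotically uniform over the illuminated arc) together with the uniform bound $|\mathcal{K}|\le 1$ and the finiteness of the grid to justify interchanging the iterated $\omega_c'$ and $S$ limits with the finite scene sum. Alternatively, one can bypass \propref{prop:prop1} and argue directly: after the $S\to\infty$ averaging each tuple contributes an oscillatory integral $\int \cos\!\left(\omega_c'\Phi/c_0\right)\,\mathrm{sinc}\!\left(B\Phi/2c_0\right)\,g\,\rmd\mu$ whose phase is non-stationary when $\u,\v\neq 0$, so it vanishes as $\omega_c'\to\infty$ by the Riemann--Lebesgue lemma; this second route also clarifies why the two limits act jointly rather than independently.
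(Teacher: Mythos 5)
Your proposal follows the same skeleton as the paper's proof: reduce via \lemref{lem:lem1} and \propref{prop:prop1} to the normalized receiver-pair average of $\mathbbm{1}[\Phi_{i,j}^{k,k',l,l'}=0]$ for each fixed tuple with $k\neq k'$, $l\neq l'$, then argue that this average vanishes as the number of receivers grows because the zero set of $\Phi$ is ``thin'' in the space of look directions. The difference is in how thinness is established. The paper parametrizes the look directions by azimuth angles on a circular arc at common elevation $\phi$, rewrites $\Phi_{i,j}^{k,k',l,l'}/\cos\phi$ in terms of $\cos(\theta_i-\theta_{k,k'})$ and $\cos(\theta_j-\theta_{l,l'})$, and observes that for fixed $i,k,k',l,l'$ the equation $\Phi=0$ has at most two solutions in $\theta_j$; since $|\alpha_{i,j}|^2$ is bounded, this gives the explicit quantitative bound $\mathcal{O}(1/N)$ for each tuple, and the finite scene sum then forces the limit to zero. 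Your route instead passes to a continuum limit, treating the pair $(\theta_i,\theta_j)$ as equidistributing and invoking that the zero set of $\Phi$ is a one-dimensional curve of measure zero. This is correct in spirit, but as you yourself flag, it carries extra burdens the paper's count avoids: you must posit an equidistribution hypothesis on the receiver placement (the paper effectively does this too, by assuming the arc configuration ``without loss of generality,'' anticipating \assref{assump:config}), and you must justify that the discrete average of the \emph{indicator} of the zero set converges to its integral, which requires the zero set to be Jordan measurable (boundary of measure zero), not merely Lebesgue-null. The paper's explicit two-root count is more elementary, delivers a rate ($\mathcal{O}(1/N)$ per tuple) rather than a bare limit, and sidesteps these measure-theoretic caveats; your version is more general in that it does not depend on the specific trigonometric form of $\Phi$, only on non-degeneracy of its zero set. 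Your closing remark about a joint Riemann--Lebesgue argument is a plausible alternative but is not what the paper does.
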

\begin{proof}
  See Appendix~\ref{prf:prop2}.
\end{proof}
Since in the asymptotic regime $\mathcal{F}$ is an isometry, we can deduce that the RIC over rank-$1$, PSD
should become small as
$\omega_c$ and $N$ get large.  This motivates us to find an upper bound on the rank-$1$, PSD RIC
constant in the non-asymptotic regime in terms of the imaging parameters.  In the next subsection,
we establish this upper bound.


\subsection{Non-asymptotic Result}
Before we introduce our main theorem, we introduce two further assumptions.
\begin{assumption}\label{assump:square}
  The scene is enclosed by a square with side $L$ and sampled regularly on a square grid.  The
  coordinate system is centered at the middle of the square.  Hence,
  $\bm{x} = [x_1,x_2]^T\in [-L/2,L/2]\times [-L/2,L/2]$ with $\sqrt{K}$ samples in both $x_1$- and
  $x_2$-axis and $L=\sqrt{K}\Delta$ where $\Delta$ is the pixel spacing.
\end{assumption}


Under \assref{assump:square}, it is easy to see that the phase term $|\Phi_{i,j}^{k,k',l,l'}|$ is upper bounded by $4L \sqrt{2}$ for any selection of $i,j,k,k',l,l'$.
{
Then, for \assref{assump:2}, letting $\Delta_{res}= 2\pi \frac{c_0}{2B}$ be the range resolution given by the Fourier-based methods
the small angle approximation holds to high accuracy if}
\begin{equation}\label{eq:Mcomp}
  M\geq\mathcal{O}\left(\frac{L}{\Delta_{res}}\right),
\end{equation}
since $\max_{i,j,k,k',l,l'} |\Phi_{i,j}^{k,k',l,l'}| =
\mathcal{O}\left(L\right)$. For instance, $M \geq 5.8 \frac{L}{\Delta_{res}}$ corresponds to a $<1\%$ error for the $\mathrm{sinc}$ approximations in~\lemref{lem:lem1}.

\begin{assumption}\label{assump:config}
  \begin{enumerate}
  \item The receivers are isotropic and lie on a circular arc equidistant from each other and to the center of the
    coordinate system.
    Let $A\in(0,2\pi]$ be the aperture of the multistatic system.  Then, the azimuth angles of the
    look-directions are multiples of $A/N$.
  \item 
    All receivers and the transmitter 
    are located at the same height. 
    Let $\phi$ be the elevation angle in radians.
    Then, $\hat{\a}^r_i = [\cos\phi\cos\theta_i,\cos\phi\sin\theta_i,\sin\phi]$ where
    $\theta_i=\frac{A i}{N}$, $i=0,\dots,N-1$ are the azimuth angles of the receivers' look-directions.
  \item The transmitter 
    is located on the
    $x_1$-axis.  Hence, $\hat{\a}_t = [\cos\phi,0,\sin\phi]^T$.
\end{enumerate}
\end{assumption}

\noindent\assref{assump:config} allows us to make integral approximation to a Riemann sum in the
proof of~\thmref{thm:Theorem1} (see Appendix~\ref{prf:Theorem1}).  The
approximation error is then incorporated into the result of~\thmref{thm:Theorem1}.
Note that the assumption on the location of the transmitter is
not essential, but is there for convenience.

We now state our non-asymptotic result in the following theorem, which establishes an upper bound on the rank-$1$, PSD RIC for the data
model presented in~\eqref{eq:datamodel}, in terms of the underlying imaging parameters.
\begin{theorem}[RIC of the Lifted Forward Mapping of Multistatic Imaging]\label{thm:Theorem1}
  Let
  \begin{equation}\label{eq:defsthm1}
    \lambda_c = \frac{2\pi c_0}{\omega_c'}
  \end{equation}
  be the wavelength corresponding to the center frequency.
Then, under \textbf{Assumptions} \textbf{\ref{assump:square}}, \textbf{\ref{assump:config}}, and~\lemref{lem:lem1}, we have the
  following upper bound on the restricted isometry constant $\delta$ of $\mathcal{F}$ over rank-1, PSD matrices:
  \begin{equation}
    \delta \leq {\frac{2\pi}{A}}\frac{2\lambda_c\sqrt{L\Delta_{Res}}}{\Delta^2\cos\phi\sqrt{\cos\phi}} +
		\mathcal{O}\left(\frac{K}{(N/A)^2} {\lambda_c^{-{3}/{2}}}  \right)
    \label{eq:delta}
  \end{equation}
  where the order is a small constant and
  \begin{equation}\label{eq:defsthm2}
    \Delta_{res} = 2\pi\frac{c_0}{2B},\quad \text{and}\quad     \Delta = \frac{L}{\sqrt{K}}.
  \end{equation}
\end{theorem}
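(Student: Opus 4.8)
The plan is to read $\delta$ directly off \lemref{lem:lem1} and then bound the resulting quartic form in $\brho$ by a geometric estimate of the receiver-averaged sinc kernel. Specializing \eqref{eq:splits3} to real rank-$1$ inputs $\tilde{\brho}=\brho\brho^{H}$ with $\brho\in\mathbb{R}^K$, so that $\tilde\rho(\bm{x}_k,\bm{x}_{k'})=\rho_k\rho_{k'}$ and $\|\tilde{\brho}\|_F^2=\|\brho\|^4$, and using that under \assref{assump:config} the amplitude factor $|\alpha_{i,j}|^2$ is constant in $(i,j)$ (isotropic, equidistant receivers) and is absorbed by the normalization of \defref{def:RIP}, the $\|\tilde{\brho}\|_F^2$ term is exactly the isometric part while the $k\neq k',\,l\neq l'$ sum is the whole deviation from isometry. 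I would therefore identify
\[
\delta=\sup_{\|\brho\|=1}\Big|\tfrac{1}{\binom{N}{2}}\sum_{i<j}\sum_{k\neq k',\,l\neq l'}\mathcal{K}(\Phi_{i,j}^{k,k',l,l'})\,\rho_k\rho_{k'}\rho_{l'}\rho_l\Big|
\]
(up to the constant amplitude), reducing the theorem to an upper bound on this form.

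Second, I would expose the structure of the kernel. Applying the sum-to-product identity to \eqref{eq:K} gives $\mathcal{K}(\Phi)=\cos(\omega_c'\Phi/c_0)\,\mrm{sinc}(B\Phi/2c_0)$: a fast carrier of scale $\lambda_c$ under a slow $\mrm{sinc}$ envelope of width $O(\Delta_{res})$ in $\Phi$. Substituting the geometry of \assref{assump:config} into \eqref{eq:phasefunc} and using the flat-ground restriction $\bm{x}=[x_1,x_2]$, the phase separates as $\Phi_{i,j}^{k,k',l,l'}=\cos\phi\,[f_{\theta_i}(\bm{x}_k-\bm{x}_{k'})-g_{\theta_j}(\bm{x}_l-\bm{x}_{l'})]$, with every receiver inner product carrying the common projection factor $\cos\phi$ and the transmitter term contributing a fixed shift.

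Third --- the crux --- I would bound the form through the operator norm of the receiver-averaged kernel, $\delta\le\max_{k,k'}\sum_{l\neq l'}|\bar{\mathcal{K}}_{kk'll'}|$ (a Schur/Gershgorin estimate for the symmetric operator that $\bar{\mathcal{K}}_{kk'll'}=\binom{N}{2}^{-1}\sum_{i<j}\mathcal{K}(\Phi_{i,j}^{k,k',l,l'})$ induces on the vectorized $\tilde{\brho}$). By \assref{assump:config} the azimuths are uniform on $[0,A]$ with spacing $A/N$, so this average passes to $\tfrac{1}{A^2}\int_0^A\!\int_0^A\mathcal{K}\,d\theta\,d\theta'$ up to a Riemann-sum remainder of relative order $(A/N)^2$. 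The $\mrm{sinc}$ envelope couples the two azimuths through $f_{\theta}\approx g_{\theta'}$, and a stationary-phase/Laplace analysis of this coupled oscillatory integral, together with the ground-grid sum over $(l,l')$ (approximated by $\Delta^{-2}\!\int d\bm{x}$ over the scene of side $L$), produces a contribution scaling as $\lambda_c\sqrt{L\Delta_{res}}$ --- the carrier wavelength times the geometric mean of scene size and resolution --- divided by $\Delta^2\cos^{3/2}\phi$, with one surviving factor $1/A$ giving the $2\pi/A$ aperture prefactor that reduces to unity at full aperture $A=2\pi$.

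Finally I would collect the pieces into the leading term $\frac{2\pi}{A}\frac{2\lambda_c\sqrt{L\Delta_{res}}}{\Delta^2\cos\phi\sqrt{\cos\phi}}$ and push the accumulated discretization error of the two angular integrations into the remainder $\mathcal{O}\!\big(K(N/A)^{-2}\lambda_c^{-3/2}\big)$, where the $K$ counts the ground-grid terms and $(N/A)^{-2}$ is the per-integral Riemann error. I expect the decisive obstacle to be the oscillatory-integral estimate in the third step: establishing the cancellation of the carrier against the $\mrm{sinc}$ envelope \emph{uniformly} over all index quadruples $(k,k',l,l')$, and tracking precisely how $L$, $\Delta_{res}$, and $\phi$ enter the stationary-phase constant, so that the bound controls the worst-case rank-$1$ scene rather than a typical one.
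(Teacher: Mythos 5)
Your setup (reading $\delta$ off Lemma~\ref{lem:lem1} as the supremum of the off-diagonal quartic form, rewriting $\mathcal{K}(\Phi)=\cos(\omega_c'\Phi/c_0)\operatorname{sinc}(B\Phi/2c_0)$ as a carrier under a sinc envelope, and pushing the angular discretization error into the $\mathcal{O}(K(N/A)^{-2}\lambda_c^{-3/2})$ remainder) matches the skeleton of the paper's argument. But your third step --- bounding the form by the Schur/Gershgorin estimate $\delta\leq\max_{k,k'}\sum_{l\neq l'}|\bar{\mathcal{K}}_{kk'll'}|$ --- is where the proposal breaks, and no amount of care with the stationary-phase constants will rescue it. Taking absolute values inside the $(l,l')$ sum discards exactly the cancellation that makes the theorem true: the sum $\sum_{l,l'}\mathrm{e}^{\mathrm{i}\bm{\omega}\cdot(\bm{x}_l-\bm{x}_{l'})}\rho_l\rho_{l'}=|\hat{\rho}(\bm{\omega})|^2$ integrates to $4\pi^2\|\brho\|^2$ by Parseval, i.e., it is $\mathcal{O}(1)$ in $K$ for a unit-norm scene, whereas the absolute row sum has $K^2-K$ terms each of size roughly $\lambda_c/(A^2\sqrt{|\bm{x}_k-\bm{x}_{k'}|\,|\bm{x}_l-\bm{x}_{l'}|})$ after the double stationary-phase evaluation. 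A direct count gives $\sum_{l\neq l'}|\bm{x}_l-\bm{x}_{l'}|^{-1/2}\sim\Delta^{-4}L^{7/2}=K^2/\sqrt{L}$, so the Gershgorin bound grows at least linearly (in fact closer to quadratically) in $K$, while the theorem's leading term $\tfrac{2\pi}{A}\tfrac{2\lambda_c\sqrt{L\Delta_{res}}}{\Delta^2\cos^{3/2}\phi}$ is only $\mathcal{O}(K^{1/4})$ at fixed pixel spacing. The kernel here is oscillatory in the same way a DFT matrix is: its operator norm is far smaller than its maximum absolute row sum, and the row-sum surrogate is the wrong tool.

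The paper avoids this by never taking absolute values over scatterer pairs. It recognizes $\Phi_{i,j}^{k,k',l,l'}=(\hat{\a}_i+\hat{\a}_t)\cdot(\bm{x}_k-\bm{x}_{k'})+\beta_j^{l,l'}$ as a convolution structure, passes to the Fourier domain where the deviation term becomes $\propto\int S(\bm{\omega})R(\omega_1)|\hat{\rho}(\bm{\omega})|^2|\hat{\rho}(\bm{\omega}')|^2\,d\bm{\omega}$ with $R$ a band-limited rect and $S$ a sinc of width $1/L$, and then applies Cauchy--Schwarz and Jensen to reduce everything to $\int|\hat{\rho}|^2\,d\bm{\omega}=4\pi^2\|\brho\|^2$ (via a polar change of variables and Parseval) together with the explicit computation $\int\tilde{R}^2(\omega_1')/(\omega_1')^2\,d\omega_1'=2Bc_0/(\cos^3\phi((\omega_c')^2-(B/2)^2))$, which is where the factors $\lambda_c\sqrt{\Delta_{res}}$ and $\cos^{-3/2}\phi$ actually come from. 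If you want to salvage your route, you must replace the absolute row sum with an estimate of the quadratic form $\sum_{l\neq l'}\bar{\mathcal{K}}_{kk'll'}\rho_l\rho_{l'}$ that retains the signs --- which, in effect, forces you back to a Fourier/Parseval argument of the paper's type.
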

\begin{proof}
  See Appendix~\ref{prf:Theorem1}.
\end{proof}
As provided in ~\eqref{eq:convRate} and explained in~\cite{bariscan2018}, $\delta$ directly controls the convergence
rate of GWF iterates.
As such,
bound in~\eqref{eq:delta} establishes that the convergence behavior of GWF for multistatic imaging
depends on system parameters such as
the center frequency $\omega_c$, the bandwidth $B$, the number of receivers $N$, the number of
unknowns $K$, as well as the side length $L$ of the scene.

\begin{remark}
Observe that $N$ has a higher order than $\lambda_c^{-1}$ in the second term in \eqref{eq:delta}. 
Hence, our RIC upper bound estimate tends to $0$ as $\omega_c \rightarrow \infty$, $N \rightarrow \infty$, consistent with our asymptotic isometry result for $\mathcal{F}$.
Specifically, the first term in~\eqref{eq:delta} captures the perturbation from the limit when the central frequency is finite, whereas the second term characterizes the perturbation due to having finite number of receivers. 
In fact, the second term directly arises from the closed form error of a Riemann sum approximation to an integration  over look directions of the receivers.
\end{remark}

\begin{remark}
The Riemann sum error behaves in an inverted manner to the first term with respect to the central frequency of the transmitted signal is increased, given a fixed imaging aperture and number of look directions. 
This is indeed an expected outcome, as the data collection manifold corresponds to a larger area of the 2D Fourier spectrum of the scene as the central frequency is increased while the imaging aperture is fixed. 
As a result, $N$ number of look directions corresponds to a poorer discretization of the data collection manifold, and the factor of $\lambda_c^{-3/2}$ in the second term in \eqref{eq:delta} relates directly to this phenomenon. 
\end{remark}

Using the decoupled nature of our upper bound estimate on the RIC, we quantify the minimal pixel spacing at which the exact recovery guarantees of GWF can hold. 
\begin{corollary}[Resolution Bound]\label{cor:Cor1}
  Suppose we have sufficiently many receivers, i.e., $N^2\gg K$, such that the second term in~\eqref{eq:delta} is negligible. Then GWF guarantees exact
  recovery if
  \begin{equation}
    \Delta\geq \sqrt{{\frac{2\pi}{A}}\frac{2\lambda_c\sqrt{L \Delta_{res}}}{0.214\cos\phi\sqrt{\cos\phi}}}.\label{eq:bound2}
  \end{equation}
\end{corollary}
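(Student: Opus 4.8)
The plan is to combine the RIC upper bound of~\thmref{thm:Theorem1} with the exact recovery threshold $\delta < 0.214$ established by the GWF theory of~\cite{bariscan2018}. Since exact recovery via GWF is guaranteed whenever the RIC $\delta$ of $\mathcal{F}$ over rank-$1$, PSD matrices satisfies $\delta < 0.214$, and since~\eqref{eq:delta} already furnishes an explicit upper bound on $\delta$, it suffices to force that upper bound below the threshold. The argument is therefore a matter of enforcing a sufficient condition on the dominant term of~\eqref{eq:delta} and then solving the resulting inequality for the pixel spacing $\Delta$.

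First I would invoke the hypothesis $N^2 \gg K$. Under this assumption the second term in~\eqref{eq:delta}, which scales as $\mathcal{O}\!\left(K (A/N)^2 \lambda_c^{-3/2}\right)$, is negligible relative to the first, so that $\delta$ is effectively controlled by the single term $\tfrac{2\pi}{A}\,\tfrac{2\lambda_c\sqrt{L\Delta_{res}}}{\Delta^2\cos\phi\sqrt{\cos\phi}}$. Next I would impose that this term be at most the recovery threshold, i.e.
\[
\frac{2\pi}{A}\,\frac{2\lambda_c\sqrt{L\Delta_{res}}}{\Delta^2\cos\phi\sqrt{\cos\phi}} \leq 0.214,
\]
which by transitivity guarantees $\delta < 0.214$ and hence exact recovery. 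Solving this inequality for $\Delta$, rearranging gives $\Delta^2 \geq \tfrac{2\pi}{A}\,\tfrac{2\lambda_c\sqrt{L\Delta_{res}}}{0.214\cos\phi\sqrt{\cos\phi}}$, and taking the positive square root (as $\Delta>0$) yields precisely the claimed bound~\eqref{eq:bound2}.

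There is no substantive analytic obstacle here: the content is carried entirely by~\thmref{thm:Theorem1}, and this corollary is a direct algebraic consequence of that bound together with the constant $0.214$ from the GWF exact recovery criterion. The only point requiring care is that discarding the second term is legitimate solely in the regime $N^2 \gg K$, which is exactly the stated hypothesis; accordingly, the resulting expression should be read as the minimal pixel spacing attainable once the receiver count is large enough to render the Riemann-sum discretization error inconsequential. A brief remark could note that the bound is a \emph{sufficient} (not necessary) resolution condition, since it is obtained by bounding $\delta$ from above rather than computing it exactly.
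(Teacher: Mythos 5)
Your proposal is correct and follows essentially the same route as the paper: drop the second term of~\eqref{eq:delta} under the hypothesis $N^2 \gg K$, bound the remaining term by the GWF threshold $0.214$, and rearrange for $\Delta$. No further comment is needed.
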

\begin{proof}
  Assuming $N^2\gg K$, the second term in~\eqref{eq:delta} in the upper bound of $\delta$ vanishes. Recall that exact recovery is
  guaranteed via GWF if $\delta$ is less than or equal to $0.214$. Upper bounding the RIC bound in~\eqref{eq:delta}, we have
  \begin{equation}
    {\frac{2\pi}{A}}\frac{2\lambda_c\sqrt{L \Delta_{res}}}{\Delta^2\cos\phi\sqrt{\cos\phi}}\leq 0.214.
    \label{eq:exact}
  \end{equation}
  The rest follows by rearranging~\eqref{eq:exact}.
\end{proof}
Notably, even with $N \rightarrow \infty$, \eqref{eq:bound2} is the absolute best resolution at which exact multi-static imaging is possible by GWF.
Hence, Corollary~\ref{cor:Cor1} yields a fundamental bound for the pixel spacing in designing realizable imaging systems with finite number of receivers.
\begin{figure*}[!ht]
  \centering
  \captionsetup[subfigure]{justification=centering}
  \begin{subfigure}[t]{0.33\textwidth}
    \centering
    \captionsetup[subfigure]{justification=centering}
    \includegraphics[width=\textwidth]{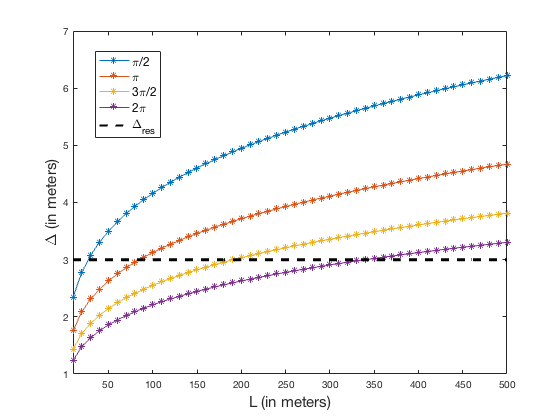}
    \caption{Active Regime. Center frequency was set at $10$ GHz and bandwidth at $50$ MHz.}\label{fig:pixbnd_active}
  \end{subfigure}\quad
  \begin{subfigure}[t]{0.33\textwidth}
    \centering
    \includegraphics[width=\textwidth]{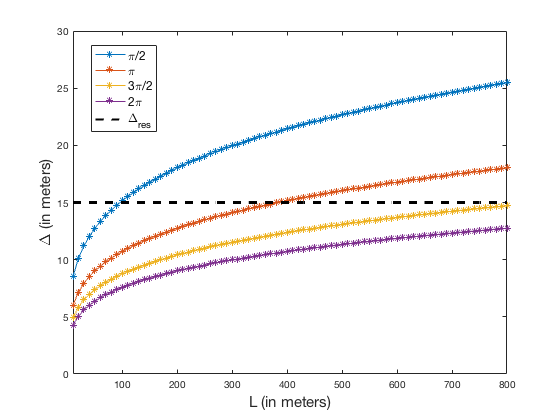}
    \caption{Passive Regime.  Center frequency was set at $1.9$ GHz and bandwidth at $10$ MHz
      (similar to CDMA cell phone signals).}\label{fig:pixbnd_passive}
  \end{subfigure}
  \caption{Curves of lower bound on the pixel spacing, $\Delta$ for various values of aperture
    lengths at active and passive regimes.}\label{fig:pixlobnd}
\end{figure*}

The resolution bound of Corollary~\ref{cor:Cor1} corresponds to the super-resolution regime when reconstructing small scenes in both active, and passive scenarios, as depicted in~\figref{fig:pixbnd_active} and~\figref{fig:pixbnd_passive}, respectively.
Note that as $L$ gets large, the lower bound eventually becomes greater than the range resolution limit of the Fourier-based methods.
This is in agreement with our theoretical arguments, which are established under a small scene approximation.
It should also be stressed that our lower bound abides by the sufficient condition for exact recovery, but it is not a necessary one.
Therefore, while recovery of scenes at a higher resolution than $\Delta_{res}$ may still be possible via GWF, it is not covered by the theory in \cite{bariscan2018}.

Additionally, the sufficient number of receivers for~\eqref{eq:bound2} to hold is $N^2 \geq \mathcal{O}(K)$.
Since $M = \mathcal{O}(L)$ by~\eqref{eq:Mcomp}, this implies that super-resolution imaging via GWF requires a sample complexity of at least $MN^2 = \mathcal{O}(K^{3/2})$.
We reduce this complexity result by the following corollary, which quantifies the minimal sample requirement for exact multi-static imaging via GWF at a fixed pixel spacing that abides the lower bound of \corref{cor:Cor1} .

\begin{corollary}[Sample Complexity]\label{cor:samp}
Given the final result of~\thmref{thm:Theorem1}, exact multistatic imaging condition for GWF is satisfied at the following sample complexity:
  \begin{equation}
    MN^2=\mathcal{O}(K^{5/4}).
  \end{equation}
\end{corollary}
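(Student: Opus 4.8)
The plan is to start from the two–term RIC bound of \thmref{thm:Theorem1} and impose the exact–recovery threshold $\delta\le 0.214$ from \cite{bariscan2018}, but—unlike \corref{cor:Cor1}, which annihilates the second summand by forcing $N^2\gg K$—to retain \emph{both} terms and share the isometry budget between them. First I would rewrite the two contributions in \eqref{eq:delta} in terms of $K$ using the geometric relations $M=\mathcal{O}(L)$ from \eqref{eq:Mcomp} and $L=\sqrt{K}\,\Delta$ from \assref{assump:square}. Substituting $L=\sqrt{K}\,\Delta$ turns the first (resolution) term into $\mathcal{O}\!\left(\lambda_c K^{1/4}\Delta^{-3/2}\right)$ and leaves the second (Riemann–sum) term as $\mathcal{O}\!\left(K\,\lambda_c^{-3/2}N^{-2}\right)$. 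The decisive structural fact, already noted in the second remark following \thmref{thm:Theorem1}, is that these two terms scale \emph{oppositely} in the center wavelength: the resolution term grows like $\lambda_c$ while the receiver term decays like $\lambda_c^{-3/2}$.

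Next I would set the two summands equal (balancing), each to a fixed fraction of $0.214$, producing two relations: the first pins the admissible pixel spacing $\Delta$ to $(\lambda_c,K)$—this is exactly the resolution bound of \corref{cor:Cor1} in disguise and fixes $M=\mathcal{O}(L)$—while the second fixes the minimal receiver count $N^2$ as a function of $(K,\lambda_c)$. Because the isometry budget is now carried partly by the resolution term, the balanced $N^2$ is strictly smaller than the $N^2=\mathcal{O}(K)$ demanded in \corref{cor:Cor1}. I would then eliminate $\lambda_c$ between the two balance relations and minimize the sample count $MN^2$ over the remaining freedom in the center frequency, using the natural operating point at which the wavelength is commensurate with the pixel spacing (so that \assref{assump:2} and the far-field phase approximations remain valid). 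Assembling $M=\mathcal{O}(L)=\mathcal{O}(\sqrt{K}\,\Delta)$ with the balanced $N^2$ and simplifying through the balance equations then yields $MN^2=\mathcal{O}(K^{5/4})$, one quarter power below the naive $K^{3/2}$ obtained by discarding the Riemann term.

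The step I expect to be the main obstacle is precisely this joint minimization over $\lambda_c$ (equivalently $\omega_c$) and $\Delta$: since the two terms move in opposite directions with $\lambda_c$, one must identify the correct operating point—rather than an interior saddle—and verify that it simultaneously (i) keeps the total RIC at or below $0.214$ so the guarantee of \cite{bariscan2018} applies, (ii) stays inside the small-angle/sinc regime $M=\mathcal{O}(L/\Delta_{res})$ of \assref{assump:2}, and (iii) genuinely attains the minimum of $MN^2$. The accompanying bookkeeping—tracking the constants ${2\pi}/{A}$, $\cos\phi$, and $\Delta_{res}$ through the elimination and confirming they do not disturb the $K$-exponent—is routine but delicate; the conceptual crux is recognizing that \emph{distributing} the isometry budget across both terms, instead of forcing the receiver term to vanish, is what lowers the exponent from $3/2$ to $5/4$.
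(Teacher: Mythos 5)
Your plan diverges from the paper's proof in a way that breaks the result. The paper does \emph{not} impose $\delta \le 0.214$ in this corollary; it fixes the pixel spacing $\Delta$ (so $L=\mathcal{O}(\sqrt{K})$ and $M=\mathcal{O}(\sqrt{K})$), observes that the first term of \eqref{eq:delta} then scales as a \emph{non-vanishing} $K^{1/4}$ independently of $N$, and simply chooses the smallest $N$ for which the second term does not exceed that intrinsic order, namely $N^2=\mathcal{O}(K^{3/4})$; this gives $MN^2=\mathcal{O}(K^{1/2}\cdot K^{3/4})=\mathcal{O}(K^{5/4})$ with the RIC bound settling at $\tilde\delta=\mathcal{O}(K^{1/4})$, a degradation the paper explicitly accepts and asks the system constants to suppress. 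There is no balancing of each term against a fixed fraction of $0.214$, no elimination of $\lambda_c$, and no optimization over the center frequency: $\lambda_c$ is held fixed and absorbed into the $\mathcal{O}(1)$ constants $c_1, c_2$.

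Your version, if carried out honestly, does not produce the stated exponent. Requiring the first term $\mathcal{O}(\lambda_c K^{1/4}\Delta^{-3/2})$ to equal a constant fraction of $0.214$ at fixed $\Delta$ forces $\lambda_c=\mathcal{O}(K^{-1/4})$; feeding that into the second term $\mathcal{O}(K\lambda_c^{-3/2}N^{-2})$ and requiring it also to be $\mathcal{O}(1)$ forces $N^2=\mathcal{O}(K^{11/8})$, hence $MN^2=\mathcal{O}(K^{15/8})$ --- worse than the naive $K^{3/2}$, not better. Conversely, taking $\lambda_c$ commensurate with $\Delta$ (your proposed operating point) leaves the first term growing like $K^{1/4}$, so your constraint (i), keeping the total RIC at or below $0.214$, is unsatisfiable for large $K$ and the whole joint minimization is vacuous. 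The correct insight is not that the isometry budget is \emph{shared} at a constant level, but that the target order for the receiver term is the (growing) order $K^{1/4}$ already imposed by the resolution term, which is what relaxes $N^2=\mathcal{O}(K)$ to $N^2=\mathcal{O}(K^{3/4})$.
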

\begin{proof}
Reorganizing the upper bound on $\delta$ in~\thmref{thm:Theorem1}, we have
\begin{equation}
  c_1 \frac{K}{L \sqrt{L}} + c_2 \frac{K}{N^2} = \tilde{\delta}
\end{equation}
where $c_1$, $c_2$ are $\mathcal{O}(1)$ as functions of $K$ and $N$.
Now, for any fixed pixel spacing $\Delta$, we have $L=\mathcal{O}(\sqrt{K})$.  Thus,
\begin{equation}
  \tilde{c}_1 K^{1/4} + \tilde{c}_2 \frac{K^{3/4}}{N^2} K^{1/4} = \tilde{\delta}\label{eq:K14}
\end{equation}
for some $\tilde{c}_1, \tilde{c}_2 = \mathcal{O}(1)$.
Observe that $K^{1/4}$ factor in the first term of the left-hand side
of~\eqref{eq:K14} is non-vanishing and hence at best yields the RIC upper bound of $\tilde{\delta} =
\mathcal{O}(K^{1/4})$.
Now from~\assref{assump:2}, we have $M = \mathcal{O}(L)$.  Thus, the minimal
sample complexity in which the RIC upper bound is in the order of $K^{1/4}$ is achieved when $N^2 =
\mathcal{O}({K}^{3/4})$.
Therefore,
\begin{equation}\label{eq:delta2}
  (\tilde{c}_1 + \hat{c}_2)K^{1/4} = \tilde{\delta}
\end{equation}
when $MN^2 = \mathcal{O}(K^{5/4})$.
\end{proof}

In addition to the minimal sample complexity, ~\corref{cor:samp} yields a rate at which the algorithm performance deteriorates.
Clearly, from \eqref{eq:delta2}, our ability to fine sample the scene while attaining the exact recovery guarantees of GWF for multi-static imaging depends on the dimension of the problem, at a rate $K^{1/4}$, or equivalently, $\sqrt{L}$.
This, again, is consistent with our theoretical arguments as we derive our results through a small scene approximation.

The fact that the upper bound of $\delta$ has a non-vanishing $K^{1/4}$ factor reveals an interesting phenomenon that is also observed in the performance of spectral initialization in phase retrieval literature, even when the measurement vectors are random.
This degradation with the increasing dimension of the unknown is not captured in the probabilistic analysis with random measurement vectors, yet is indeed a significant issue which forms the basis for sample truncation in computing the initialization and gradient estimates \cite{chen2017solving}.

Specifically for deterministic, wave-based multistatic imaging problems, ~\corref{cor:samp} necessitates a system design such that the controllable constants in \eqref{eq:delta2} sufficiently suppress the $K^{1/4}$ factor.
This promotes GWF as a highly applicable method in passive imaging scenarios where the range resolution is limited, or in active imaging scenarios where small, isolated extended targets are being imaged, with possible extensions and applications in spot-light mode synthetic aperture radar \cite{yonel2018phaseless}.

\section{Numerical Simulations}\label{sec:numsim}
In this section, we provide several numerical simulations demonstrating veracity of the theory
presented in Section~\ref{subsec:mainres}.
The following multistatic set-up is common to all simulations presented in this section, and
conforms to the assumptions laid  out in Section~\ref{subsec:mainres}.
\begin{enumerate}
\item There is a single transmitter located at $[15.8, 0, 0.25]$ km.
\item The transmitted waveform has unit amplitude frequency spectrum.
\item Varying number of receivers are distributed equidistant on an arc of a circle of radius $10$ km from the
  scene center at a height of $0.25$ km.
\item The scene of interest is square with flat topography.
\end{enumerate}


\figref{fig:config} illustrates the multistatic set-up used in this section. Note that the
  illustration is not to scale.
\begin{figure}[!ht]
  \centering
  \includegraphics[width=0.3\textwidth]{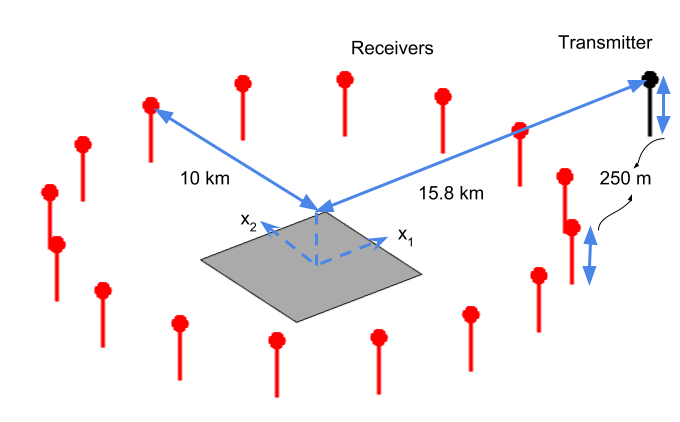}
  \caption{Illustration of the multistatic imaging set-up for numerical simulations. 
    (Not to scale.)}\label{fig:config}
\end{figure}

The figure-of-merit we use throughout is the mean square error (MSE) of the reconstructed scene.  This
is computed by taking the per pixel difference between the true scene and the reconstructed scene and
averaging the squares of the differences.

In all experiments presented, our data is synthetically generated using our received signal model in ~\eqref{eq:di2} under the single-scattering assumption. 
In Sections~\ref{subsec:norec},~\ref{subsec:bw},
and~\ref{subsec:wc}, a single parameter is varied in each set of experiment while all other relevant parameters are
fixed.  The parameters are chosen in the active and passive imaging ranges.~\figref{fig:true_scene}
shows the scene used for all experiments in the subsequent sections. 
Finally, in Section~\ref{subsec:noise}, we provide simulations that depict the performance of GWF in non-ideal conditions, namely under additive noise at low signal-to-noise ratios (SNR), and discretization mismatch between the ground truth, and the reconstructed image. 
\begin{figure}[!ht]
  \captionsetup[subfigure]{justification=centering}
  \begin{subfigure}[t]{0.23\textwidth}
    \centering
  \includegraphics[width=\textwidth]{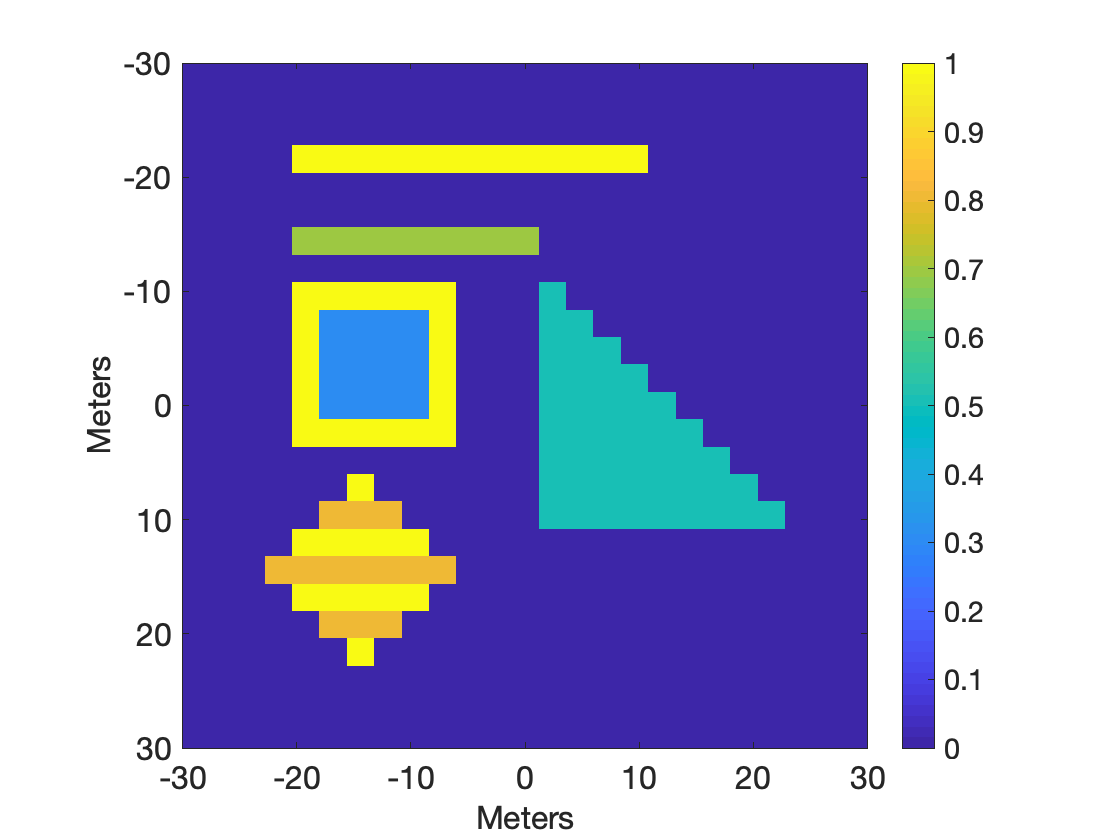}
    \caption{Active case.}\label{fig:act_gt}
  \end{subfigure}\quad
  \begin{subfigure}[t]{0.23\textwidth}
    \centering
  \includegraphics[width=\textwidth]{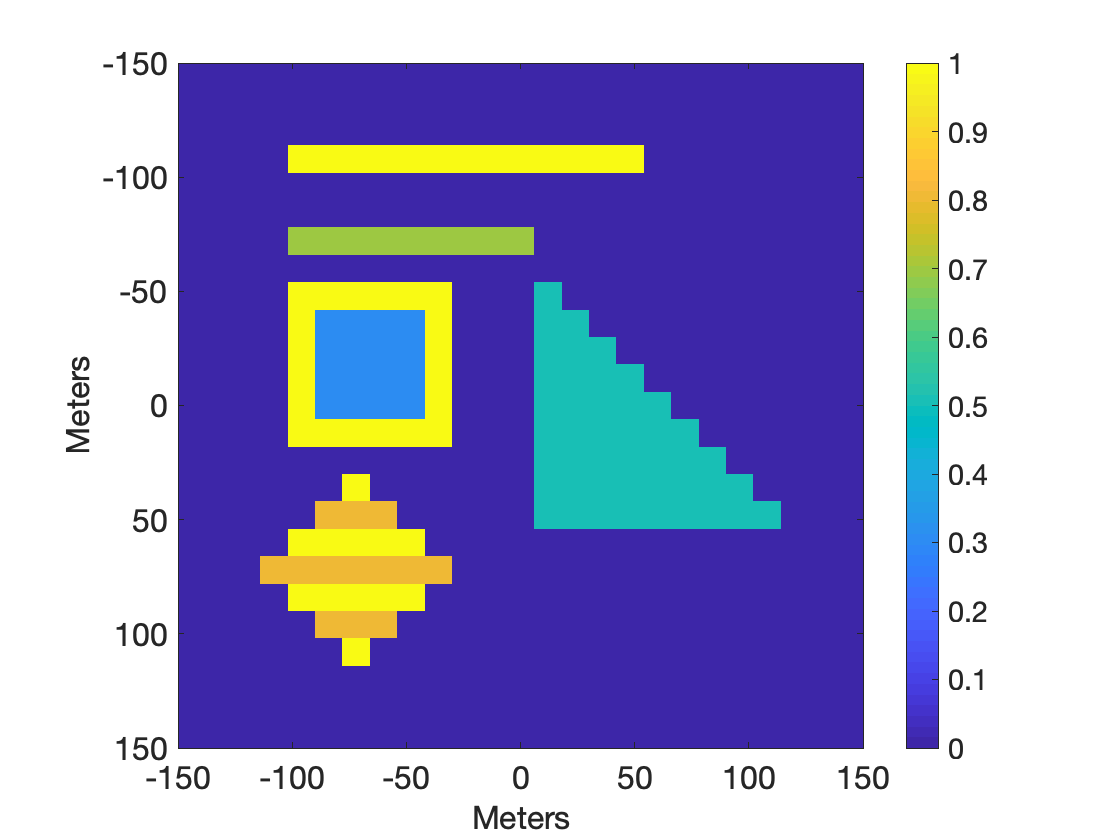}
    \caption{Passive case.}\label{fig:pass_gt}
  \end{subfigure}
  \caption{The ground truth used in the numerical experiments. The colorbar refers to the reflectivity of the resolution bins. $L$ is set as $60$ m in the active, $300$ m in the passive case.}\label{fig:true_scene}
\end{figure}
\subsection{Effect of Number of Receivers}\label{subsec:norec}
The first series of numerical experiments are designed to verify the effect of the number of receivers
on the performance of GWF reconstruction.
In~\eqref{eq:delta}, the second term
involves the square of the number of receivers, $N^2$, in the denominator.  Thus, we expect the
number of receivers to have significant effect on the quality of the reconstruction.  To verify the
effect of the number of
receivers on the reconstruction, we ran a series of simulations with varying number of
receivers while fixing all other relevant parameters in active or passive radar regimes.
\begin{figure}[!h]
  \centering
  \includegraphics[width=0.33\textwidth]{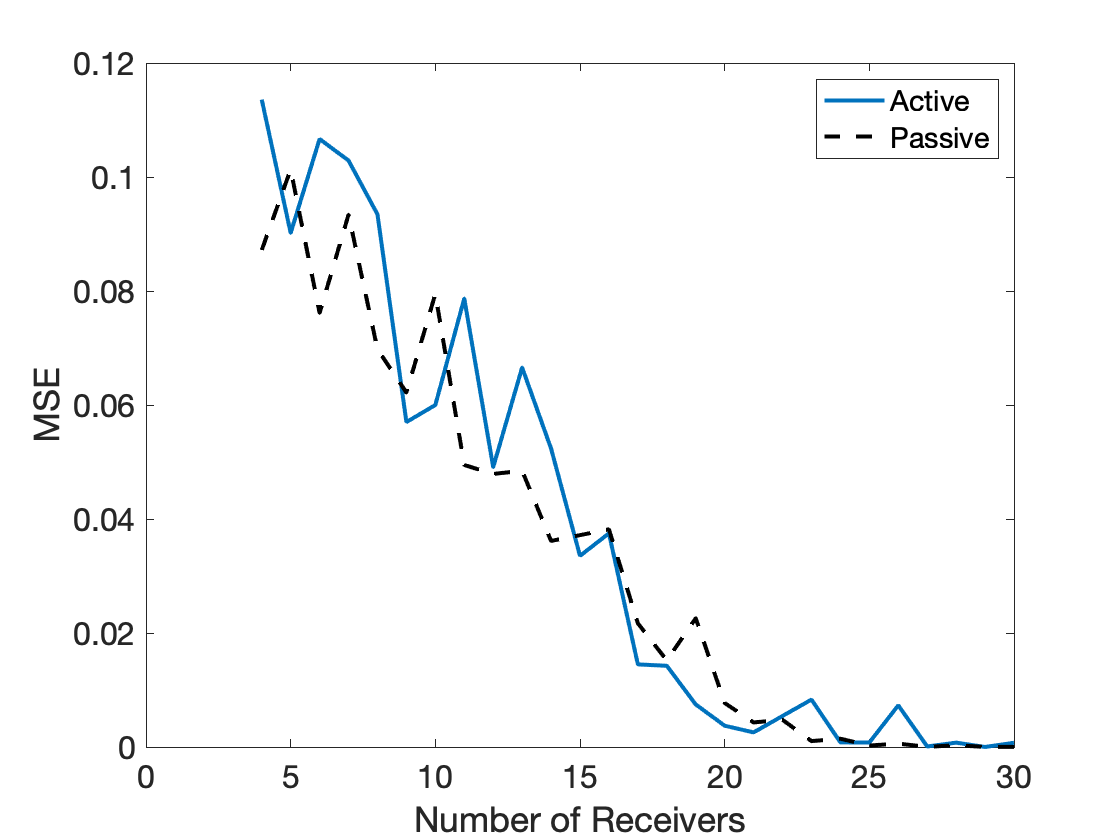}
  \caption{Number of receivers vs. MSE of the reconstruction after $4000$ iterations of GWF for
    active and passive radar parameters.  Blue solid line is the curve for active radar parameters and
    black dashed line is for the passive radar parameters.  Number of frequency samples was held constant at $64$ and
    $K=625$ for
    both cases. The pixel spacing was set at $2.4$m for active case and $12$m for passive.  The
    center frequency was set at $10$ GHz and $1.9$ GHz for active and passive cases, respectively.
    The bandwidth was set at $50$ MHz and $10$ MHz for active and passive cases, respectively.
  }\label{fig:norec}
\end{figure}

\begin{figure}[!ht]
  \centering
  \captionsetup[subfigure]{justification=centering}
  \begin{subfigure}[t]{0.23\textwidth}
    \centering
    \includegraphics[width=\textwidth]{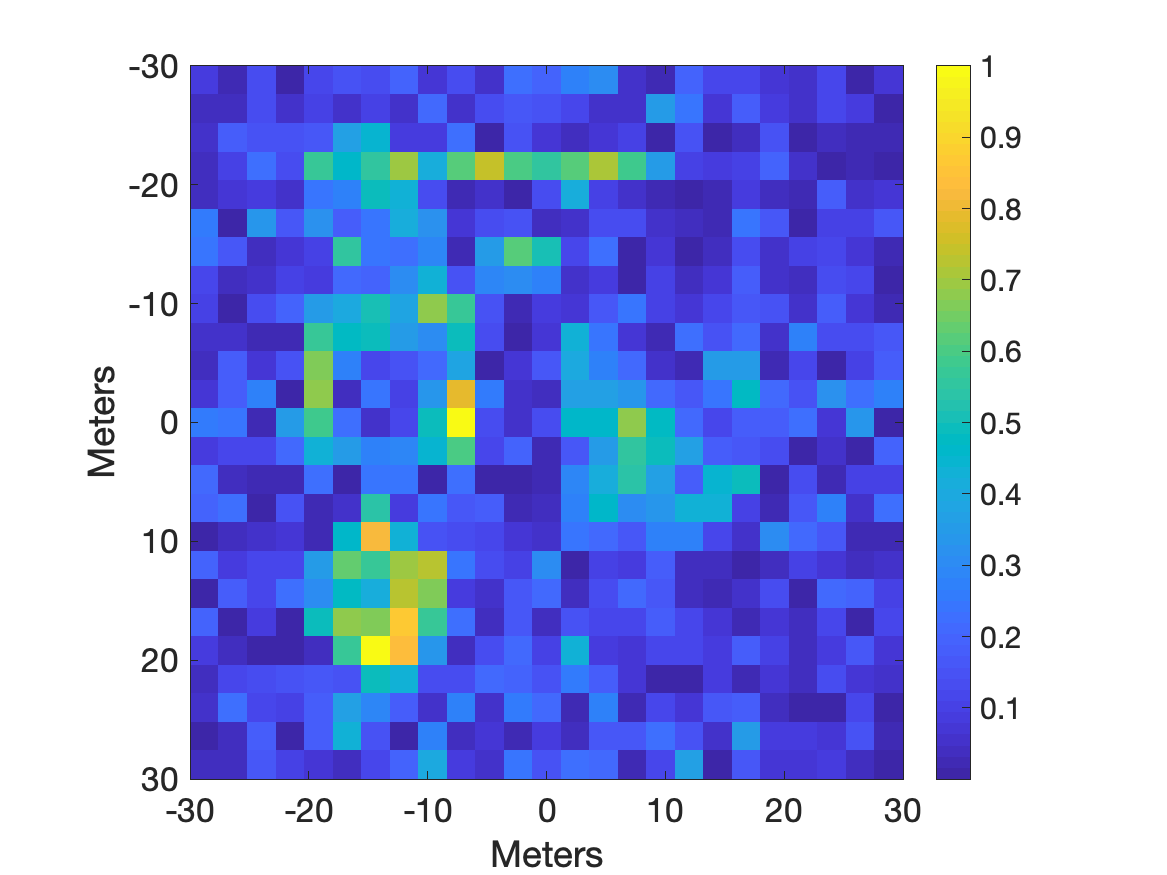}
    \caption{$12$ receivers.}\label{fig:12rec_active}
  \end{subfigure}\quad
  \begin{subfigure}[t]{0.23\textwidth}
    \centering
    \includegraphics[width=\textwidth]{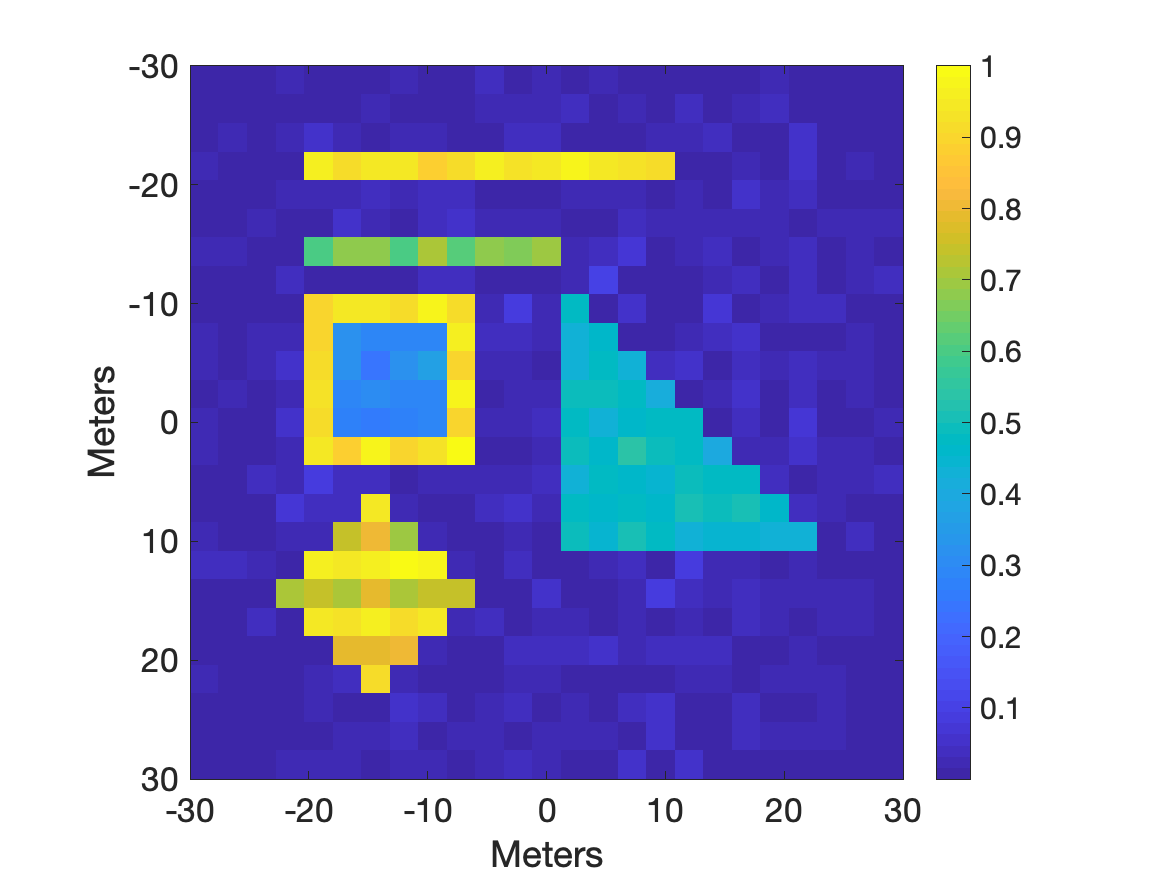}
    \caption{$24$ receivers.}\label{fig:24rec_active}
  \end{subfigure}
  \caption{Sample reconstructions after $4000$ iterations of GWF for active imaging case with varying number of receivers.
    Bandwidth was set at $50$ MHz with center frequency of $10$ GHz.   Number of frequency samples was held constant at $64$ and
    $K=625$. 
    The pixel spacing was set at $2.4$ m.
  }\label{fig:norec_recons_active}
\end{figure}

\begin{figure}[!ht]
  \centering
  \captionsetup[subfigure]{justification=centering}
  \begin{subfigure}[t]{0.23\textwidth}
    \centering
    \includegraphics[width=\textwidth]{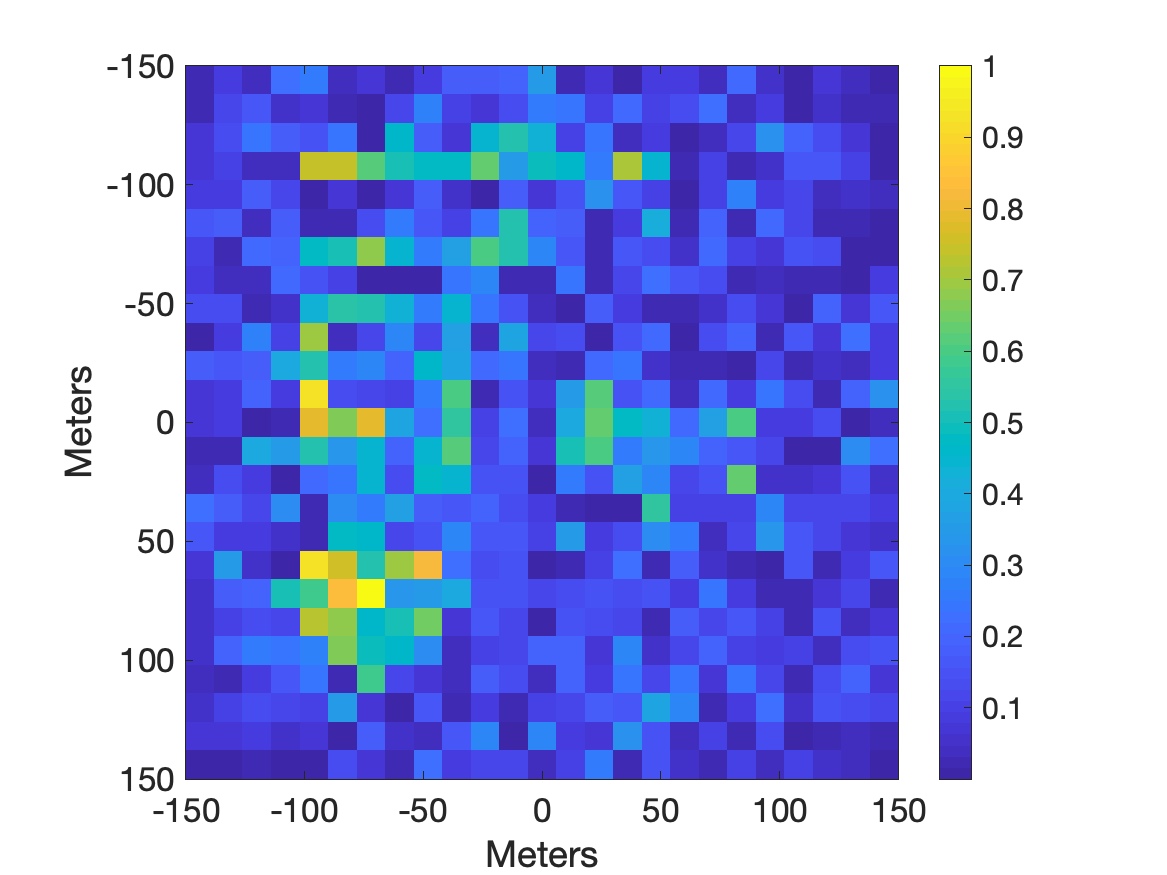}
    \caption{$12$ receivers.}\label{fig:12rec_passive}
  \end{subfigure}\quad
  \begin{subfigure}[t]{0.23\textwidth}
    \centering
    \includegraphics[width=\textwidth]{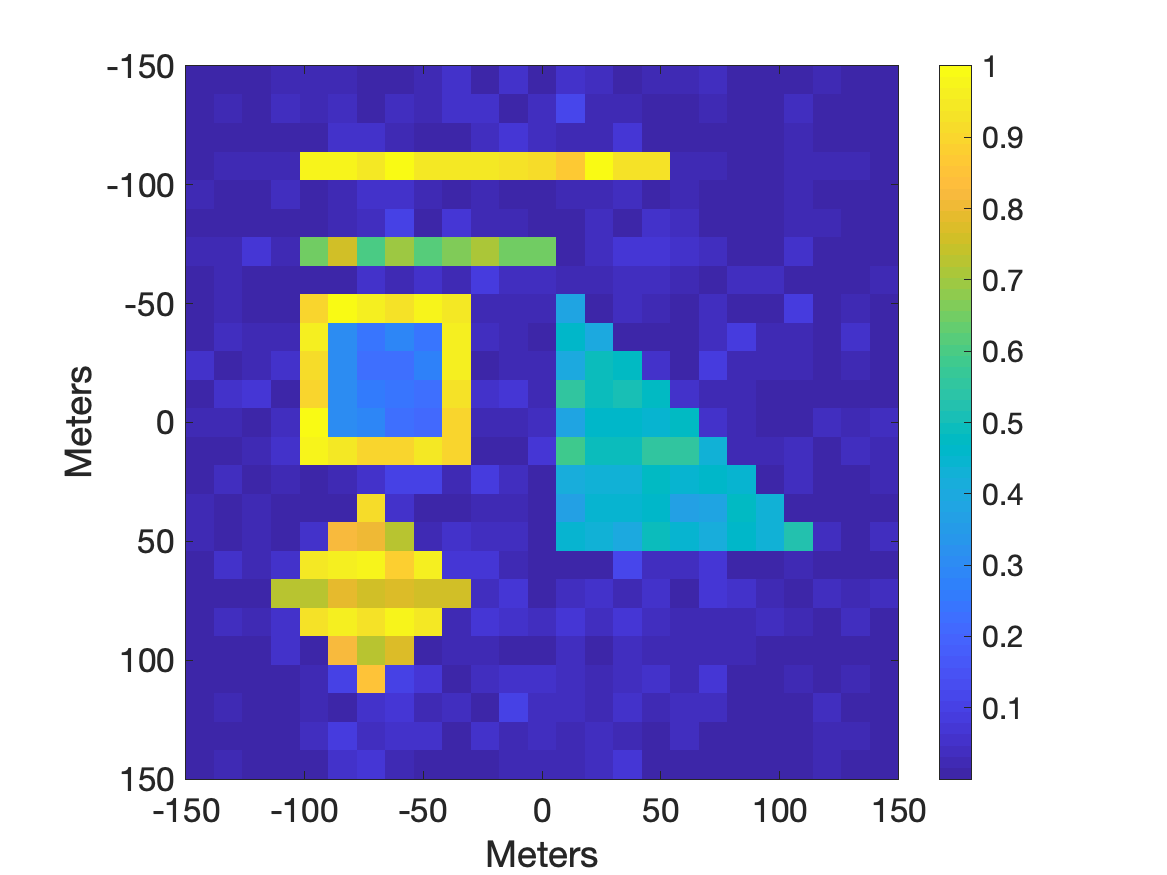}
    \caption{$24$ receivers.}\label{fig:24rec_passive}
  \end{subfigure}
  \caption{Sample reconstructions after $4000$ iterations of GWF for passive imaging case with varying number of receivers.
    Bandwidth was set at $10$ MHz with center frequency of $1.9$ GHz.   Number of frequency samples was held constant at $64$ and
    $K=625$. The pixel spacing was set at $12$ m. 
  }\label{fig:norec_recons_passive}
\end{figure}

~\figref{fig:norec} shows the MSE of the resulting reconstruction versus the number of
receivers for active and passive imaging.  Blue solid line is the result for the active case
while black dashed line is for the passive case.
For the active case, the bandwidth was held at $B=50$ MHz with the center frequency at
$\omega_c=10$ GHz for Fourier-based range resolution of $\Delta_{res}=3$ m.  For the passive case,
$B=10$ MHz and $\omega_c=1.9$ GHz for $\Delta_{res}=15$ m.  The pixel spacing was chosen such that it was smaller than
the Fourier-based range resolution for each case.  Namely, $\Delta = 2.4$ m and
$\Delta=12$ m for the active and passive cases, respectively.
The number of unknowns was held constant at
$K=625$ for both cases.  The GWF algorithm was performed for $4000$ iterations for comparison purposes.
Since the RIC directly
affects the rate of convergence of GWF, we expect to see smaller
MSE as the number of receivers grows.  This behavior is clearly present in both the active and passive
cases as can be readily observed in~\figref{fig:norec}.  In both cases, we observed exact convergence behavior
from $10$ receivers onward.  However, as expected, the convergence rate is slower with
smaller number of receivers.

As a visual confirmation of the experimental verification, sample reconstructions at two different
number of receivers ($12$ and $24$) is provided in~\textbf{Figures}
\textbf{\ref{fig:norec_recons_active}} and \textbf{\ref{fig:norec_recons_passive}}
for active and passive regimes,
respectively.

\subsection{Effect of Bandwidth/Range Resolution}\label{subsec:bw}

\begin{figure*}[]
  \centering
  \captionsetup[subfigure]{justification=centering}
  \begin{subfigure}[t]{0.33\textwidth}
    \centering
    \captionsetup[subfigure]{justification=centering}
    \includegraphics[width=\textwidth]{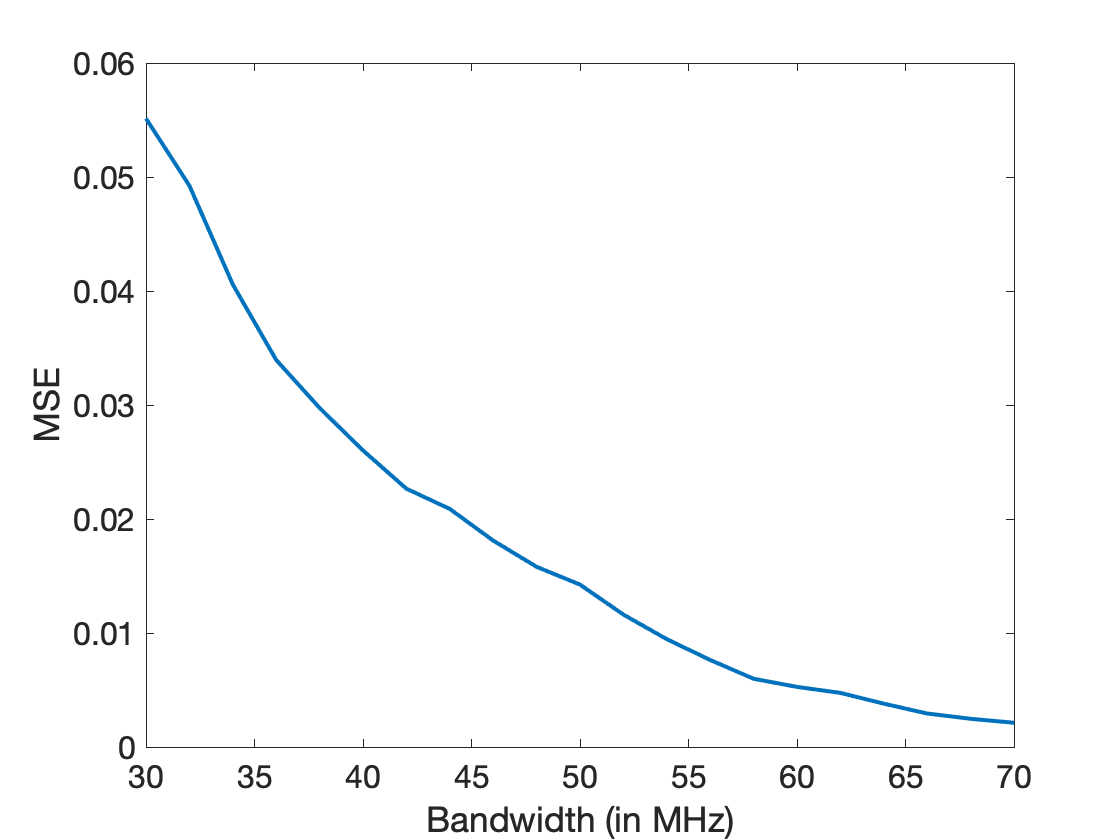}
    \caption{Active Regime. Center frequency was set at $10$ GHz and bandwidth ranged from $30$ MHz
      to $70$ MHz.}\label{fig:bw_active}
  \end{subfigure}\quad
  \begin{subfigure}[t]{0.33\textwidth}
    \centering
    \includegraphics[width=\textwidth]{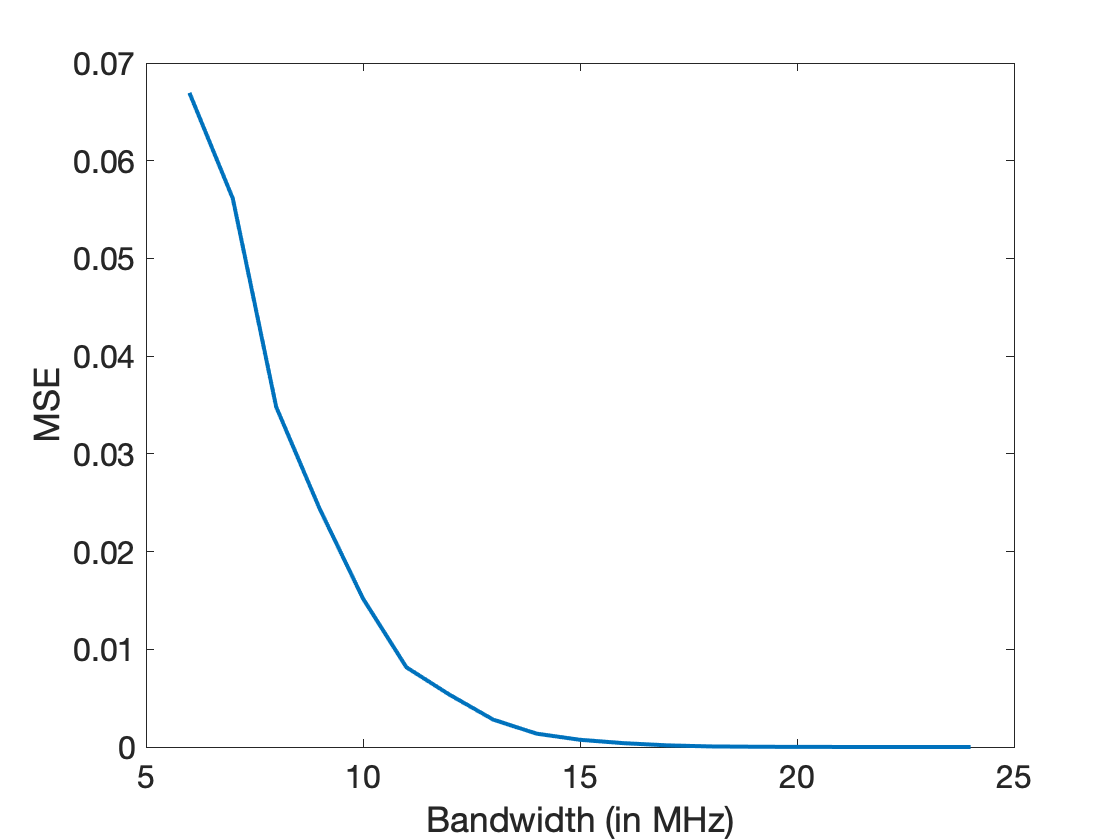}
    \caption{Passive Regime.  Center frequency was set at $1.9$ GHz and bandwidth ranged from $6$
      MHz to $24$ MHz.}\label{fig:bw_passive}
  \end{subfigure}
  \caption{Bandwidth vs. MSE of the reconstruction after $4000$ iterations of GWF for
    active and passive radar parameters.  Number of frequency samples was held constant at $64$ and
    $K=625$ for
    both cases. The pixel spacing was set at $2.4$m for active case and $12$m for passive.
  }\label{fig:bwmse}
\end{figure*}

\begin{figure}[!h]
  \centering
  \captionsetup[subfigure]{justification=centering}
  \begin{subfigure}[t]{0.23\textwidth}
    \centering
    \includegraphics[width=\textwidth]{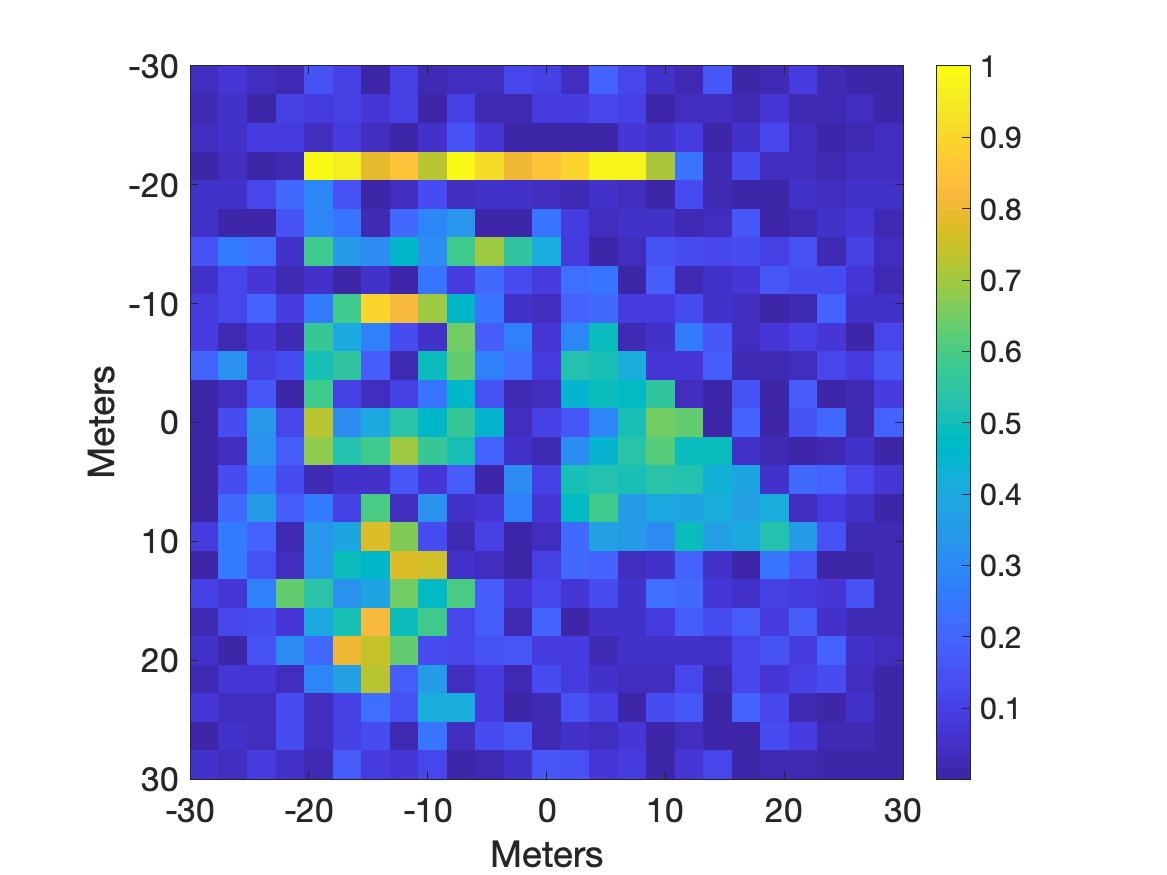}
    \caption{$40$ MHz bandwidth.}\label{fig:B40_active}
  \end{subfigure}\quad
  \begin{subfigure}[t]{0.23\textwidth}
    \centering
    \includegraphics[width=\textwidth]{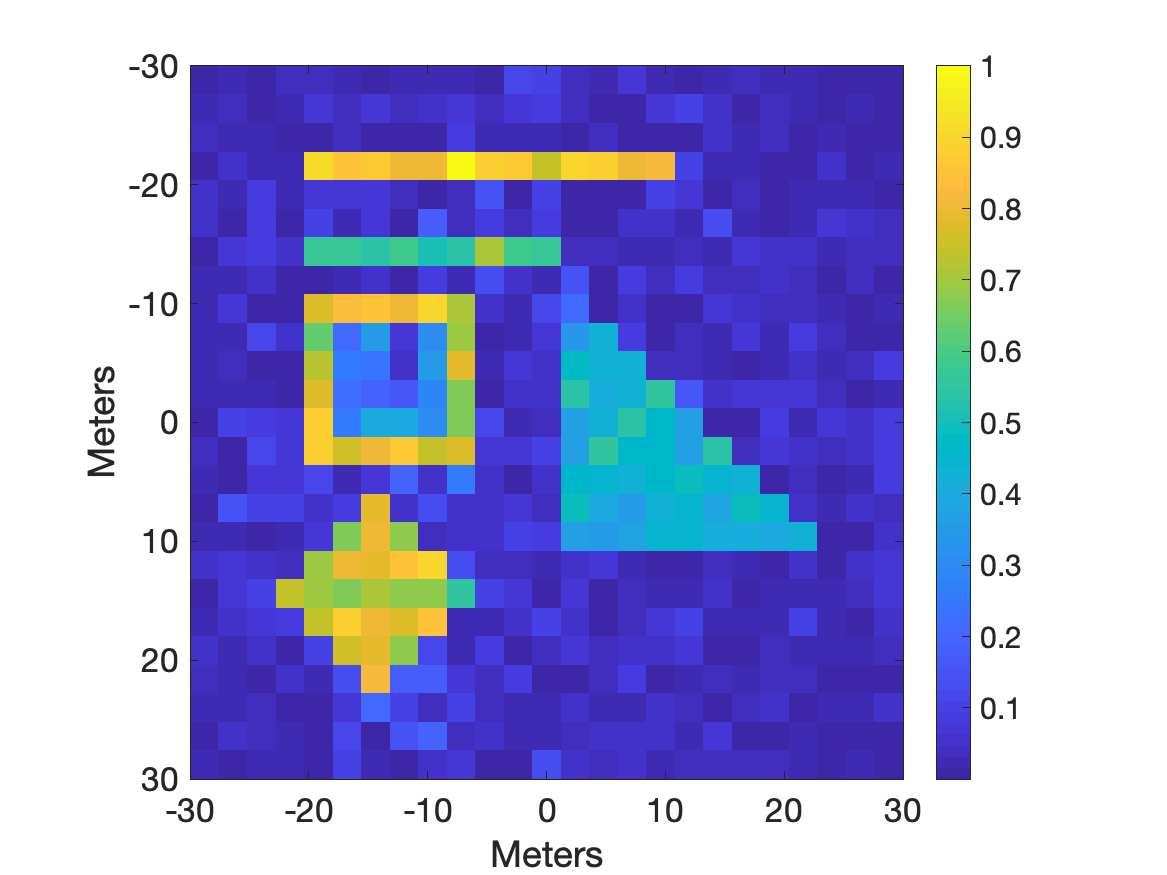}
    \caption{$60$ MHz bandwidth.}\label{fig:B60_active}
  \end{subfigure}
  \caption{Sample reconstructions after $4000$ iterations of GWF for active imaging case with varying bandwidth.
    $18$ receivers were used for reconstruction with center frequency of $10$ GHz.   Number of
    frequency samples was held constant at $64$ and
    $K=625$. 
    The pixel spacing was set at $2.4$ m.
  }\label{fig:B_recons_active}
\end{figure}
\begin{figure}[!h]
  \centering
  \captionsetup[subfigure]{justification=centering}
  \begin{subfigure}[t]{0.23\textwidth}
    \centering
    \includegraphics[width=\textwidth]{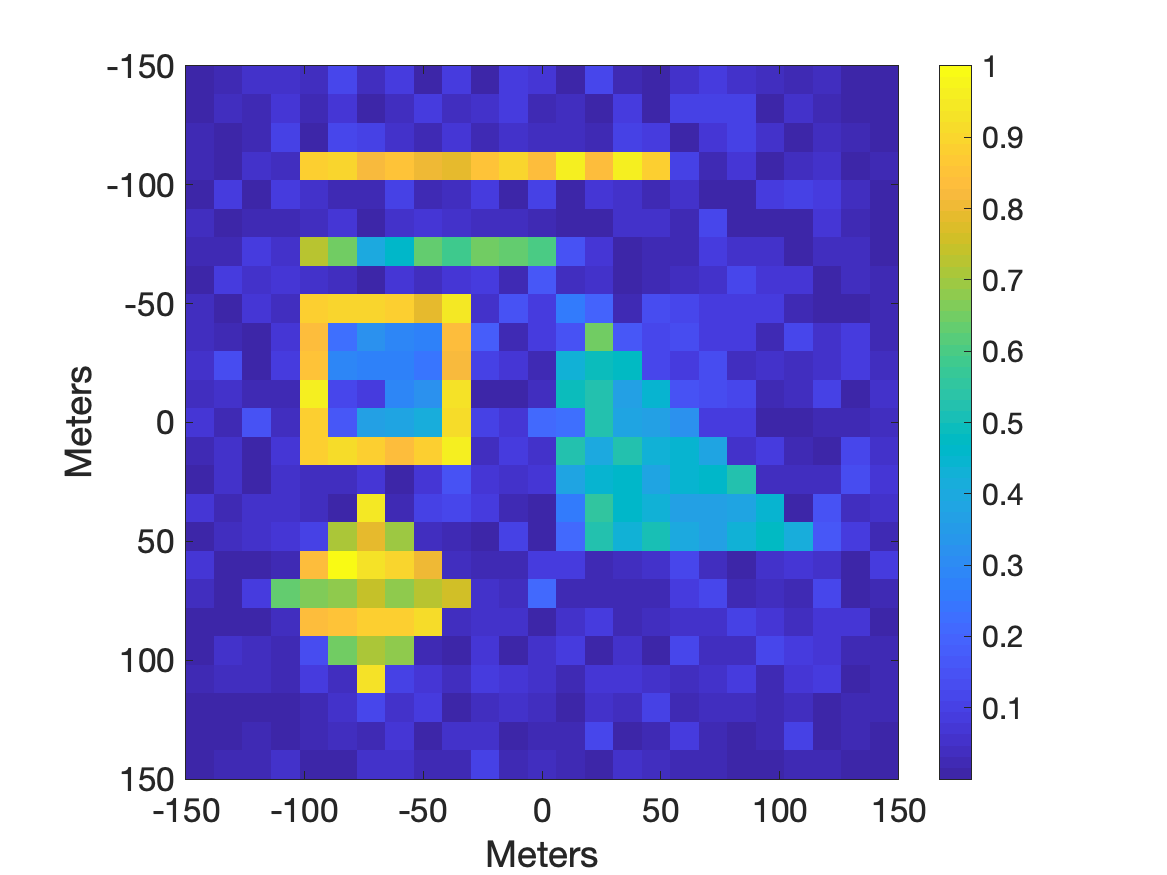}
    \caption{$12$ MHz bandwidth.}\label{fig:B12_passive}
  \end{subfigure}\quad
  \begin{subfigure}[t]{0.23\textwidth}
    \centering
    \includegraphics[width=\textwidth]{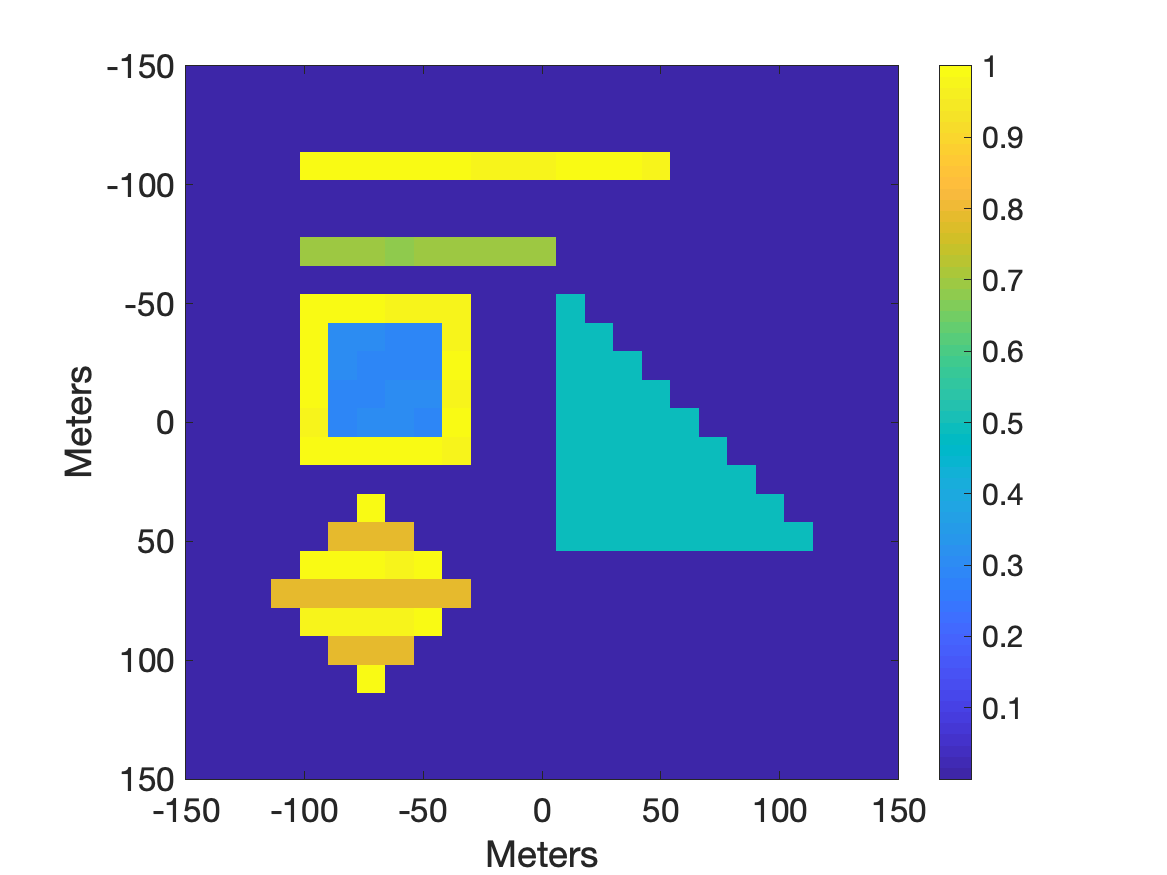}
    \caption{$20$ MHz bandwidth.}\label{fig:B20_passive}
  \end{subfigure}
  \caption{Sample reconstructions after $4000$ iterations of GWF for passive imaging case with varying bandwidth.
    $18$ receivers were used for reconstruction with center frequency of $1.9$ GHz.   Number of
    frequency samples was held constant at $64$ and
    $K=625$. 
    The pixel spacing was set at $12$ m.
  }\label{fig:B_recons_passive}
\end{figure}

Next we examine the effect of the bandwidth on the convergence behavior.
Both terms in~\eqref{eq:delta} includes square root of $\Delta_{res}$, the range resolution, in the
numerator.
This suggests that there is an inverse relationship between 
the bandwidth and RIC.
Similar to above, we test the effect of bandwidth on the convergence behavior of GWF algorithm, and
we ran a series of GWF reconstruction on the same scene while varying the bandwidth and
holding other relevant parameters fixed.  The number of receivers used for the experiments was
fixed at $N=18$.
All other parameters were held to the same values as in the previous subsection.



\figref{fig:bwmse} summarizes the result of these experiments.~\figref{fig:bw_active} shows the
bandwidth vs. MSE curve for active case.  We varied the bandwidth in the range of $30$ MHz to $70$
MHz.~\figref{fig:bw_passive} shows the same curve for the passive case where the bandwidth was
varied between $6$ MHz and $24$ MHz.
Examining the two figures, we clearly see that higher bandwidth
results in smaller MSE, and hence faster convergence to exact solution.
This agrees with the theoretical bound in~\eqref{eq:delta}.
As before, we provide visual
confirmation in form of sample reconstructions in~\textbf{Figures}
\textbf{\ref{fig:B_recons_active}} and \textbf{\ref{fig:B_recons_passive}}
for active and passive regimes,
respectively.

\subsection{Effect of Center Frequency}\label{subsec:wc}

\begin{figure*}[]
  \centering
  \captionsetup[subfigure]{justification=centering}
  \begin{subfigure}[t]{0.33\textwidth}
    \centering
    \captionsetup[subfigure]{justification=centering}
    \includegraphics[width=\textwidth]{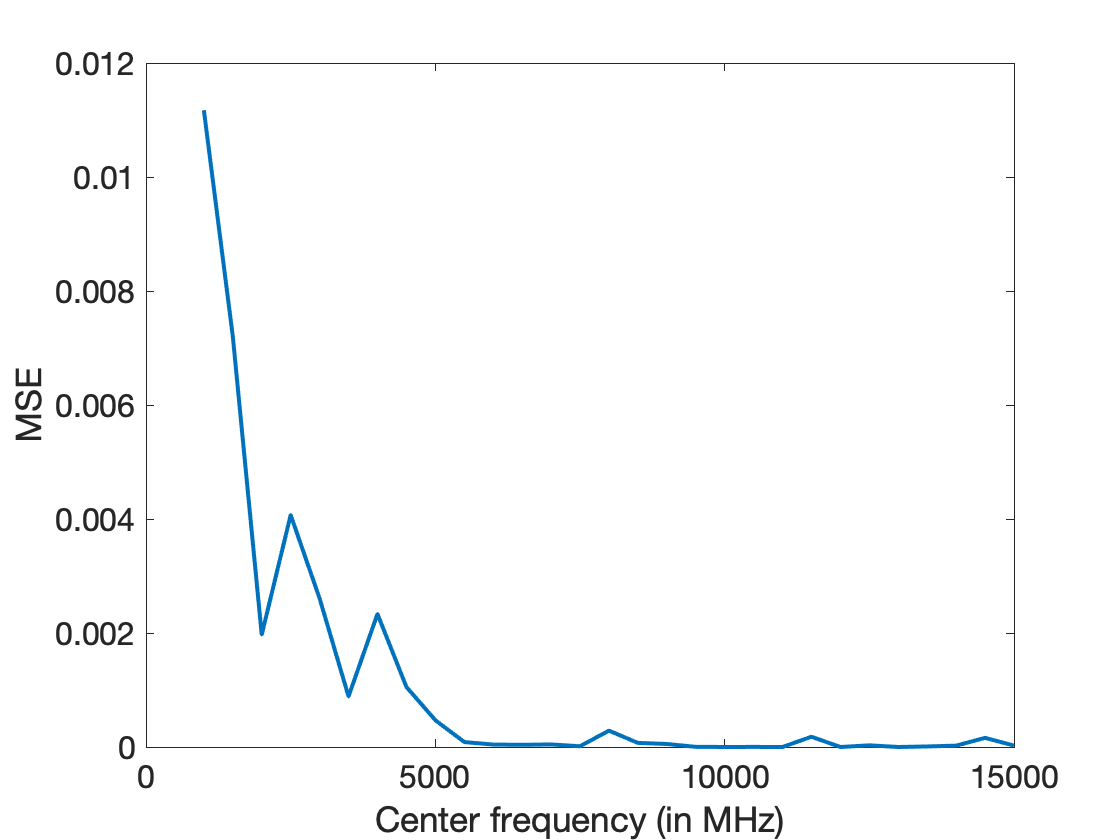}
    \caption{Active Regime. }\label{fig:wc_active}
  \end{subfigure}\quad
  \begin{subfigure}[t]{0.33\textwidth}
    \centering
    \includegraphics[width=\textwidth]{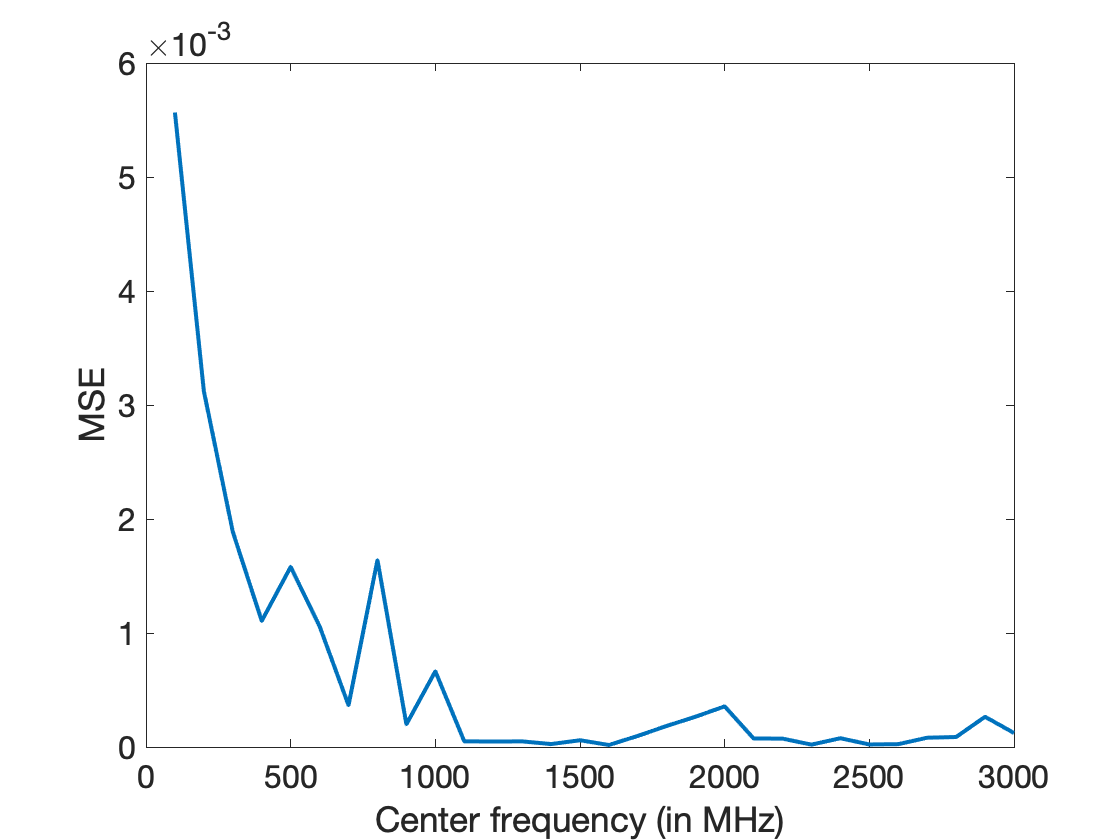}
    \caption{Passive Regime. }\label{fig:wc_passive}
  \end{subfigure}
  \caption{Center frequency vs. MSE of the reconstruction after $4000$ iterations of GWF for
    active and passive radar parameters.  Number of frequency samples was held constant at $64$ and
    $K=625$ for
    both cases. The pixel spacing was set at $2.4$m for active case and $12$m for passive.  The
    bandwidth was fixed at $50$ MHz and $10$ MHz for active and passive cases, respectively. 
  }\label{fig:wcvsmse}
\end{figure*}

\begin{figure}[!ht]
  \centering
  \captionsetup[subfigure]{justification=centering}
  \begin{subfigure}[t]{0.23\textwidth}
    \centering
    \includegraphics[width=\textwidth]{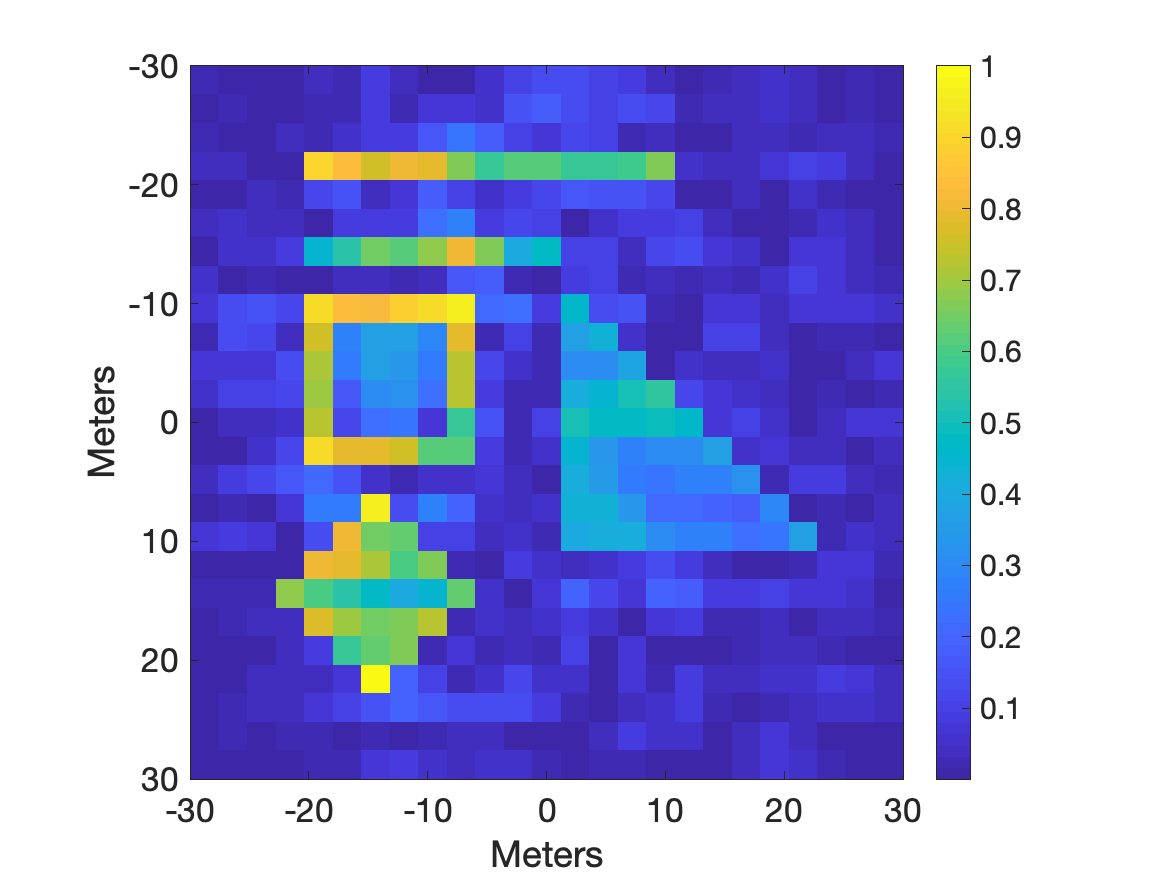}
    \caption{$1$ GHz center frequency.}\label{fig:fc5000_active}
  \end{subfigure}\quad
  \begin{subfigure}[t]{0.23\textwidth}
    \centering
    \includegraphics[width=\textwidth]{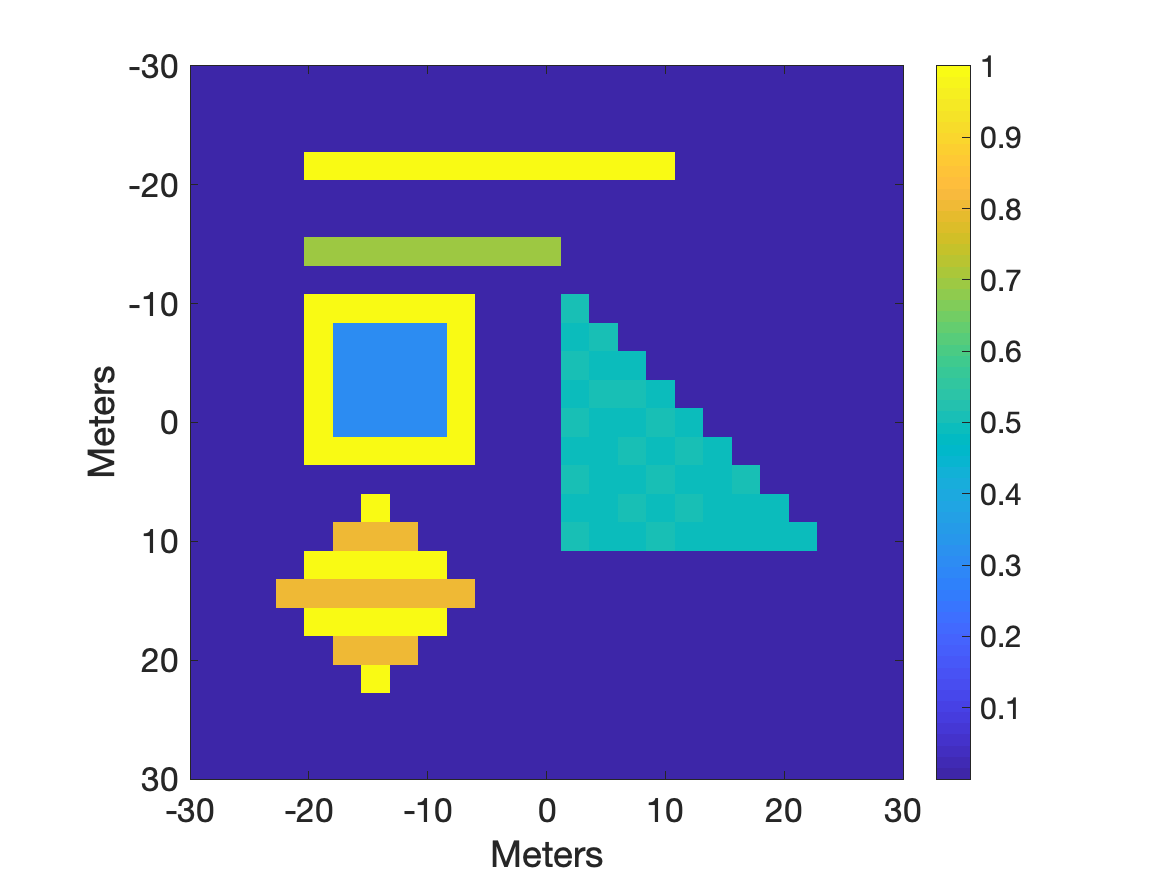}
    \caption{$12$ GHz center frequency.}\label{fig:fc12000_active}
  \end{subfigure}
  \caption{Sample reconstructions after $4000$ iterations of GWF for active imaging case with
    varying center frequencies.
    $32$ receivers were used for reconstruction with bandwidth fixed at $50$ MHz.   Number of
    frequency samples were set as $64$ and
    $K=625$. 
    The pixel spacing was set at $2.4$ m.
  }\label{fig:wc_recons_active}
\end{figure}
\begin{figure}[!ht]
  \centering
  \captionsetup[subfigure]{justification=centering}
  \begin{subfigure}[t]{0.23\textwidth}
    \centering
    \includegraphics[width=\textwidth]{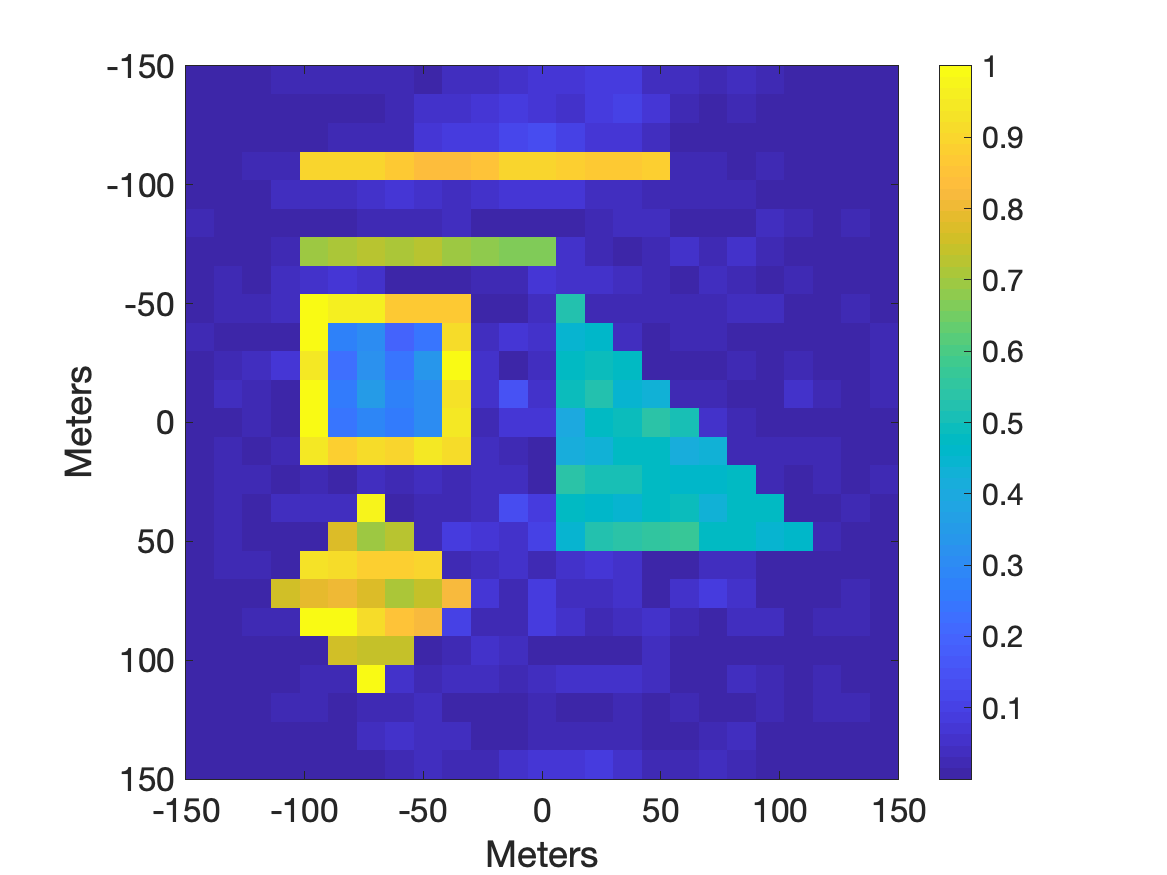}
    \caption{$1$ GHz center frequency.}\label{fig:./passive_imgs/fc1000-16rec}
  \end{subfigure}\quad
  \begin{subfigure}[t]{0.23\textwidth}
    \centering
    \includegraphics[width=\textwidth]{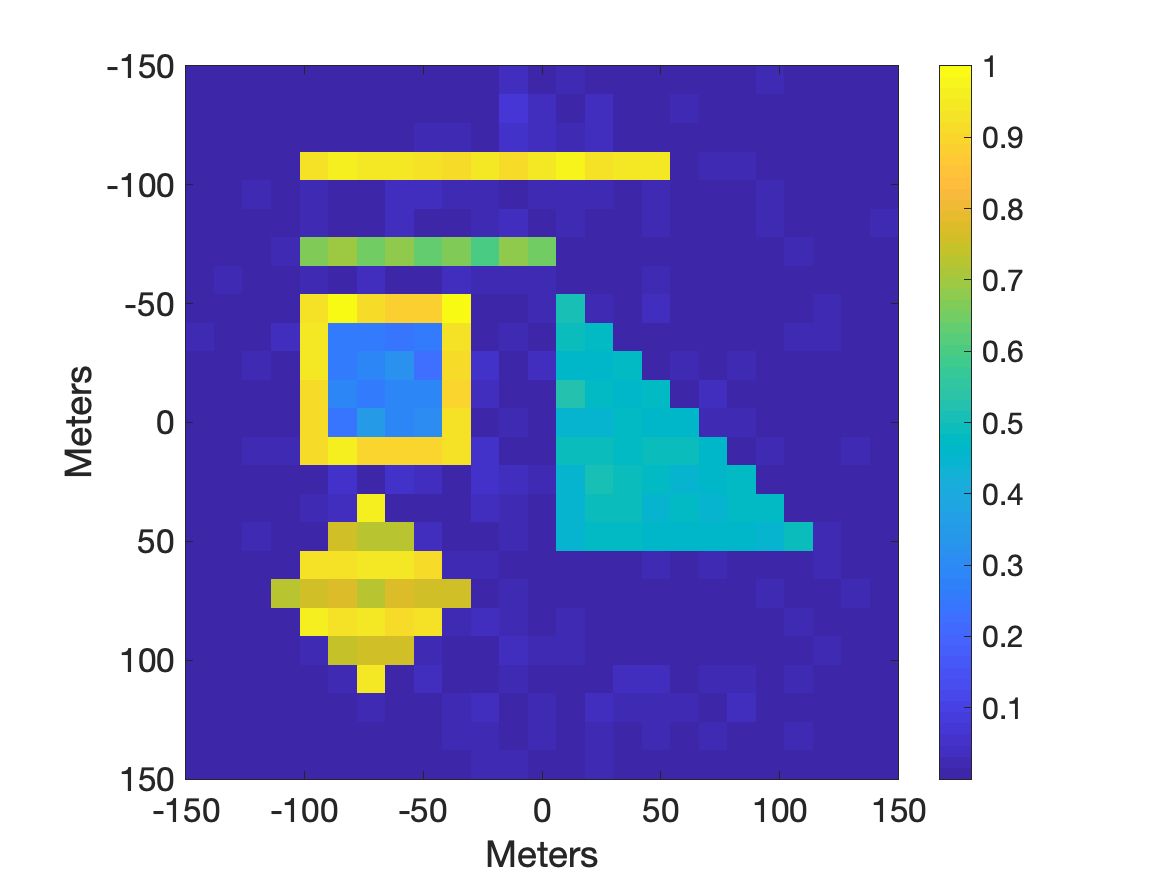}
    \caption{$2$ GHz center frequency.}\label{fig:fc2000_passive}
  \end{subfigure}
  \caption{Sample reconstructions after $4000$ iterations of GWF for passive imaging case with
    varying center frequencies.
    $32$ receivers were used for reconstruction with bandwidth fixed at $10$ MHz.   Number of
    frequency samples were set as $64$ and
    $K=625$. 
    The pixel spacing was set at $12$ m.
  }\label{fig:wc_recons_passive}
\end{figure}

The first term of~\eqref{eq:delta} is 
inversely proportional to the center frequency of the transmitted
waveform and as such we expect the center frequency to improve the convergence behavior of GWF as
the center frequency gets larger.
We examined numerically, the effect of center frequency on the exact
reconstruction and the convergence rate by, again, running a series of numerical simulations where
we varied the center frequency while keeping other
relevant variables constant.  To minimize the effect of the second term on the
RIC and better evaluate the impact of central frequency in the super-resolution regime, we increased the number of receivers used in these experiments to $N=32$ for both cases.

\textbf{Figures} \textbf{\ref{fig:wc_active}} and \textbf{\ref{fig:wc_passive}} show the results of
simulated experiments for active and passive scenarios, respectively.  For active case, we varied
the center frequency in the range between $0.5$ GHz and $15$ GHz.  For the passive case, the range
was restricted to $0.1$ GHz to $3$ GHz to reflect realistic values for sources of opportunity.
In both cases, we observe a behavior of downward trend in MSE as the center frequency
increases, albeit, not as drastic as in other parameters. 
This is due to the fact that the central frequency appears inverted in the two terms of the RIC upper bound. 
The decaying trend of our experiments agrees with the notion that the order constant of the second term in the RIC upper bound adequately suppresses $\lambda_c^{-3/2}$, as $N^2 = \mathcal{O}(K)$ proves to be sufficient for super-resolution. 

Notice, however, that in the active case, larger center frequency value is needed to
achieve similar performance as in the passive case.  This is attributable to the fact that the first
term is proportional to $\sqrt{L\Delta_{res}}/\Delta^2$.  With the active parameters, this term is
approximately $8$ times that of the passive case.  Thus, the center frequency needs to be higher to
compensate for the difference.~\textbf{Figures}
\textbf{\ref{fig:wc_recons_active}} and \textbf{\ref{fig:wc_recons_passive}} show  sample
reconstructions at two different center frequencies for active and passive regimes, respectively.

\subsection{Effect of Additive Noise}\label{subsec:noise}

\begin{figure}
  \centering
  \includegraphics[width=0.33\textwidth]{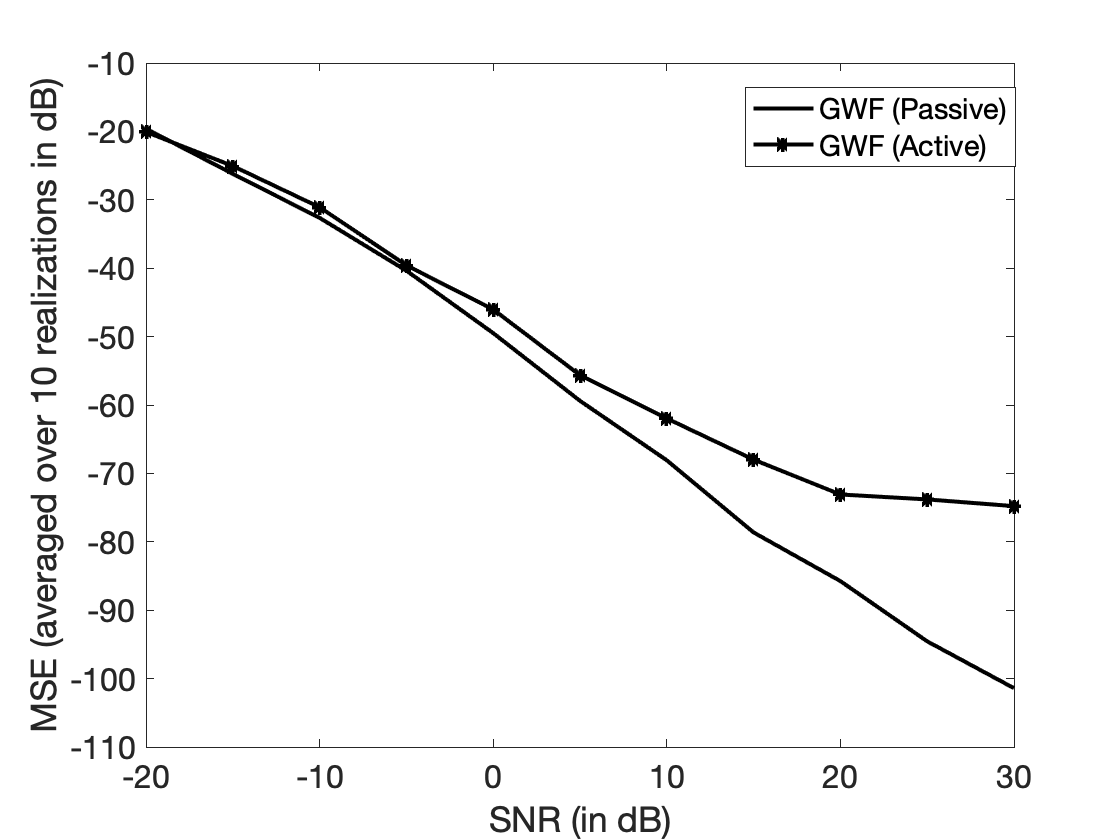}
  \caption{Received signal SNR vs. MSE of the reconstruction after $4000$ iterations of GWF. SNR values depict those in the linear measurements collected at each receiver, prior to correlations. 
    Dotted and solid curves are for active and passive radar parameters, respectively.  Number of frequency samples were fixed as $64$, with
    $K=625$ for
    both cases. The pixel spacing was set at $2.4$m for active case and $12$m for passive.  The
    center frequency was set at $10$ GHz and $1.9$ GHz for active and passive cases, respectively.
    The bandwidth was set at $50$ MHz and $10$ MHz for active and passive cases, respectively.
  }\label{fig:nz}
\end{figure}

\begin{figure}[!h]
  \centering
  \captionsetup[subfigure]{justification=centering}
  \begin{subfigure}[t]{0.23\textwidth}
    \centering
    \includegraphics[width=\textwidth]{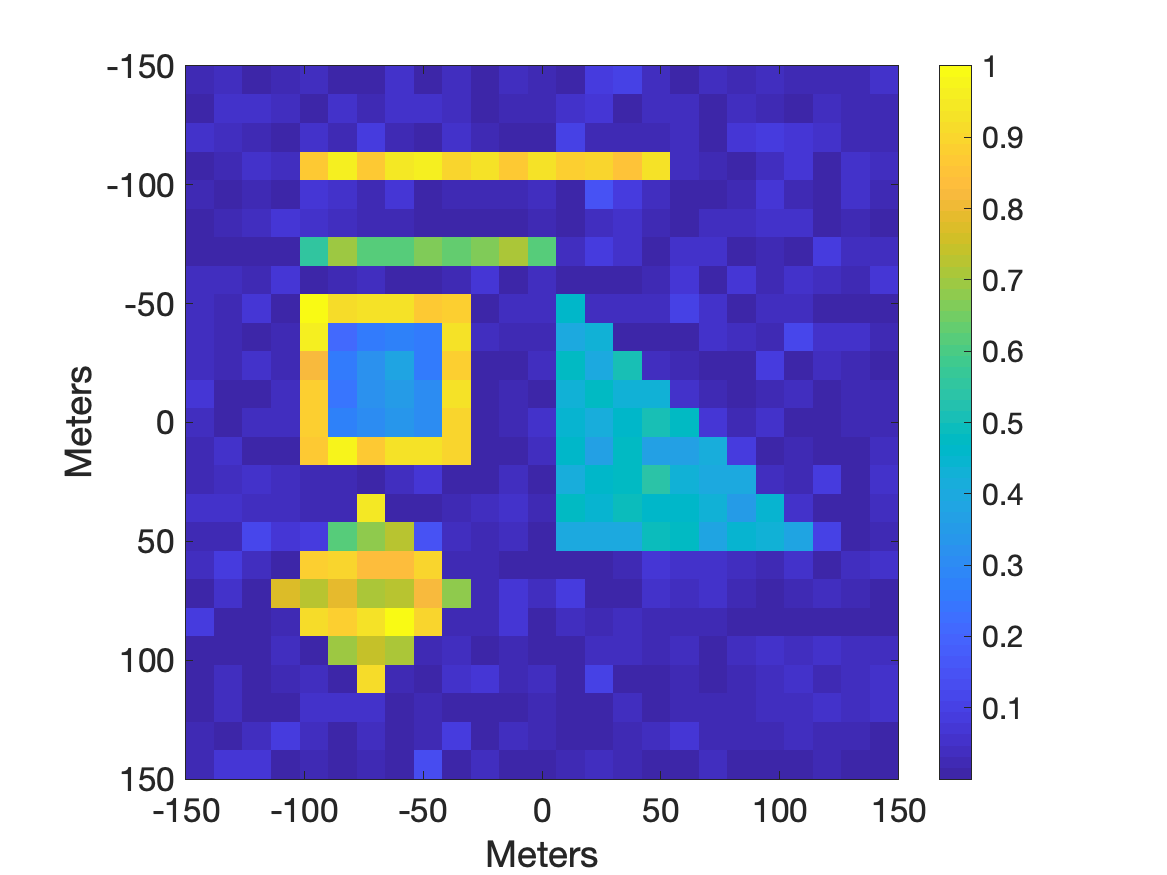}
    \caption{$0$ dB SNR (passive).}\label{fig:nz-0dB}
  \end{subfigure}\quad
  \begin{subfigure}[t]{0.23\textwidth}
    \centering
    \includegraphics[width=\textwidth]{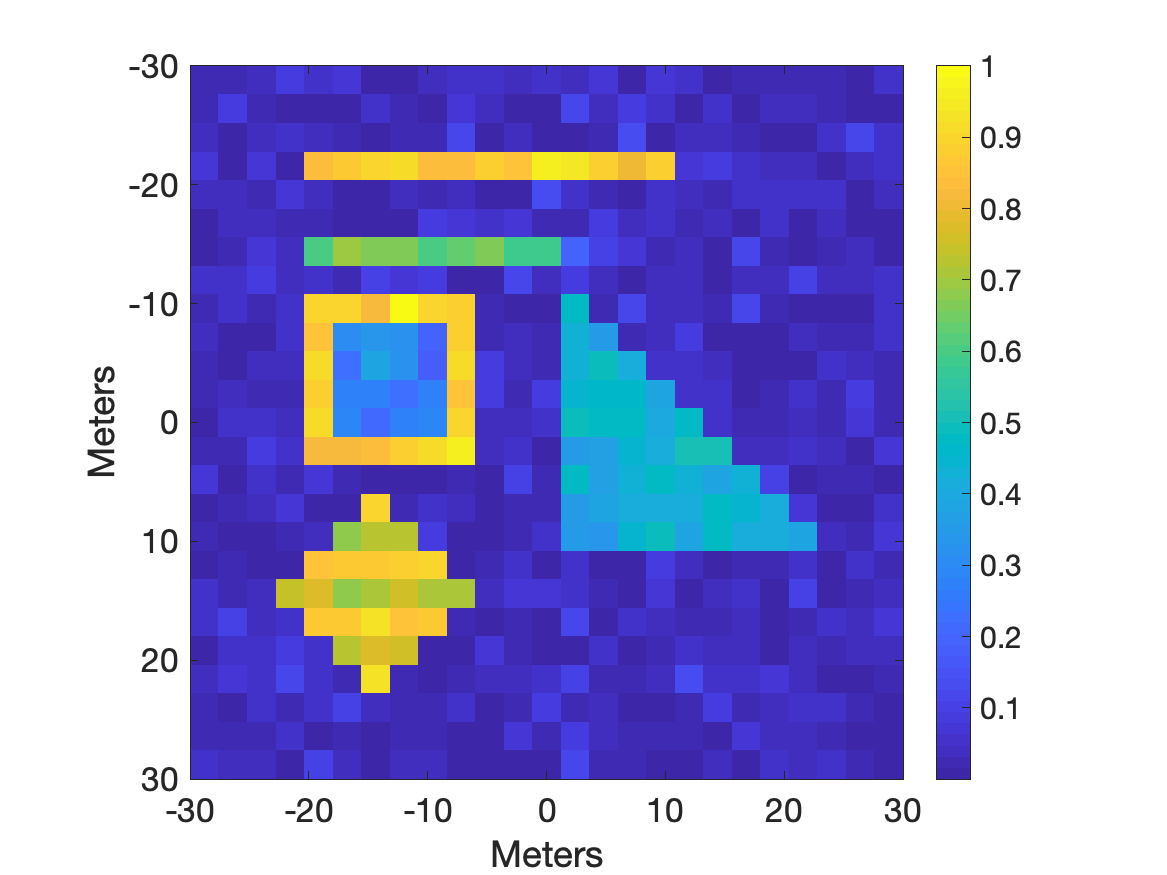}
    \caption{$0$ dB SNR (active).}\label{fig:nz-0dB_active}
  \end{subfigure}
  \caption{Sample reconstructions after $4000$ iterations of GWF with $0$ dB 
    SNR at the receivers.
    $30$ receivers were used for reconstruction. Number of
    frequency samples was held constant at $64$ and
    $K=625$. 
    The pixel spacing was set at $12$ m, and $2.4$ m, respectively. 
  }\label{fig:nz_recon0}
\end{figure}
\begin{figure}[!h]
  \centering
  \captionsetup[subfigure]{justification=centering}
  \begin{subfigure}[t]{0.23\textwidth}
    \centering
    \includegraphics[width=\textwidth]{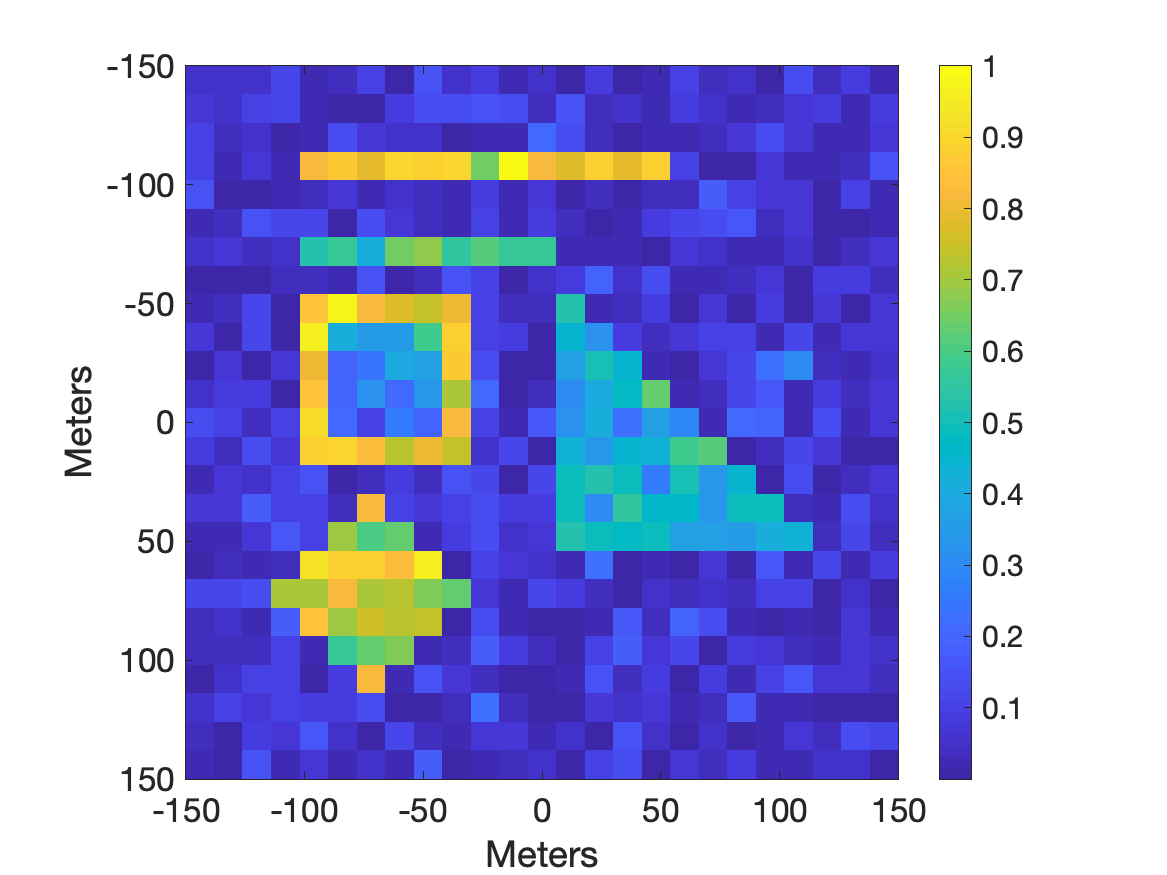}
    \caption{$-5$ dB SNR (passive).}\label{fig:nz--5dB}
  \end{subfigure}\quad
  \begin{subfigure}[t]{0.23\textwidth}
    \centering
    \includegraphics[width=\textwidth]{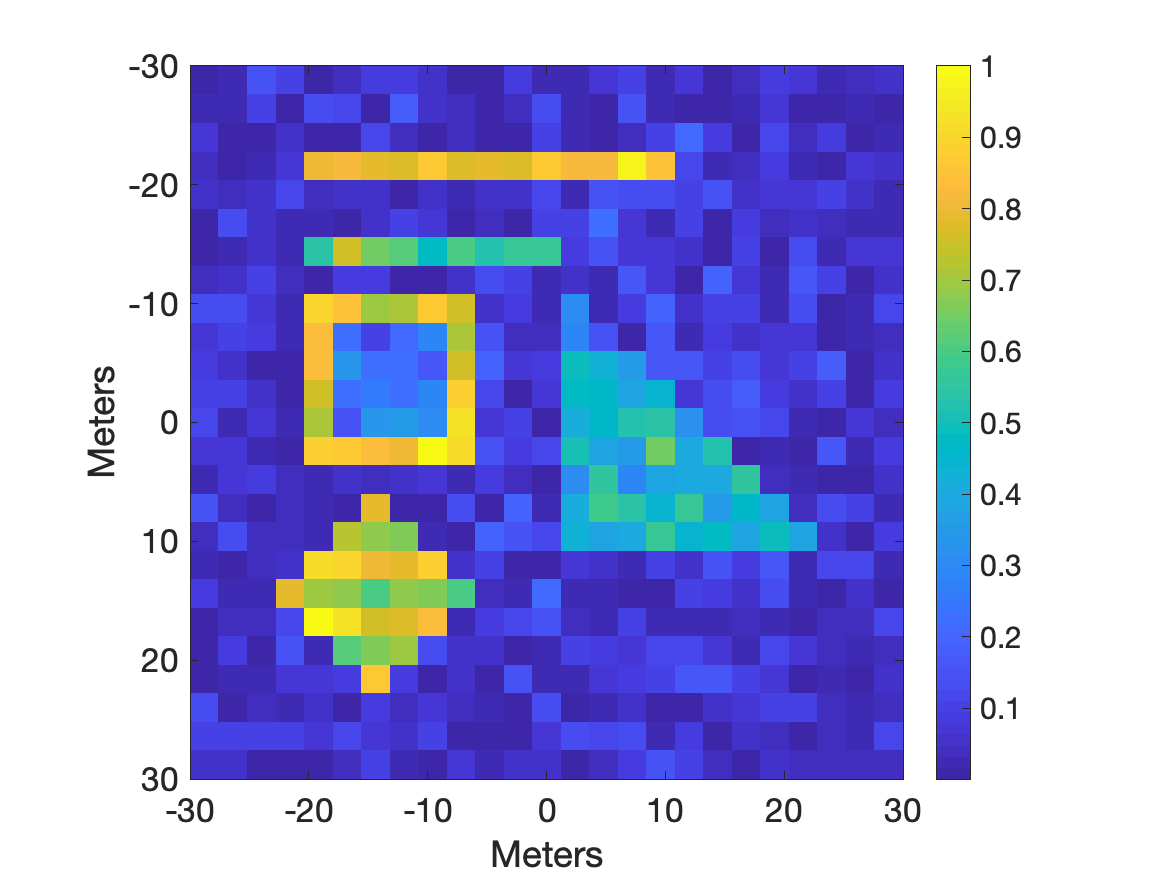}
    \caption{$-5$ dB SNR (active).}\label{fig:nz-5dB_active}
  \end{subfigure}
  \caption{Sample reconstructions after $4000$ iterations of GWF with $-5$ dB 
    SNR at the receivers.
    $30$ receivers were used for reconstruction. Number of
    frequency samples was held constant at $64$ and
    $K=625$. 
    The pixel spacing was set at $12$ m, and $2.4$ m, respectively. 
  }\label{fig:nz_recon-5}
\end{figure}
\begin{figure}[!h]
  \centering
  \captionsetup[subfigure]{justification=centering}
  \begin{subfigure}[t]{0.23\textwidth}
    \centering
    \includegraphics[width=\textwidth]{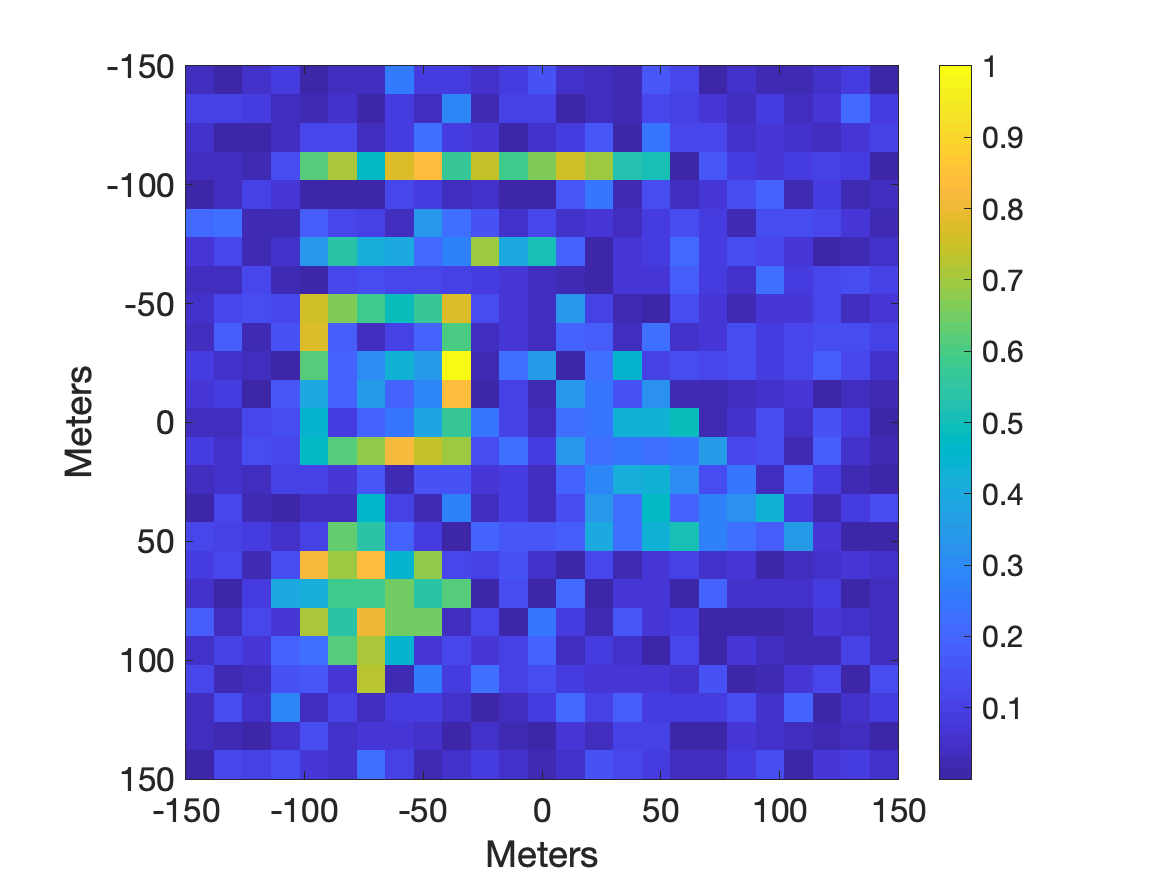}
    \caption{$-10$ dB SNR (passive).}\label{fig:nz-10dB}
  \end{subfigure}\quad
  \begin{subfigure}[t]{0.23\textwidth}
    \centering
    \includegraphics[width=\textwidth]{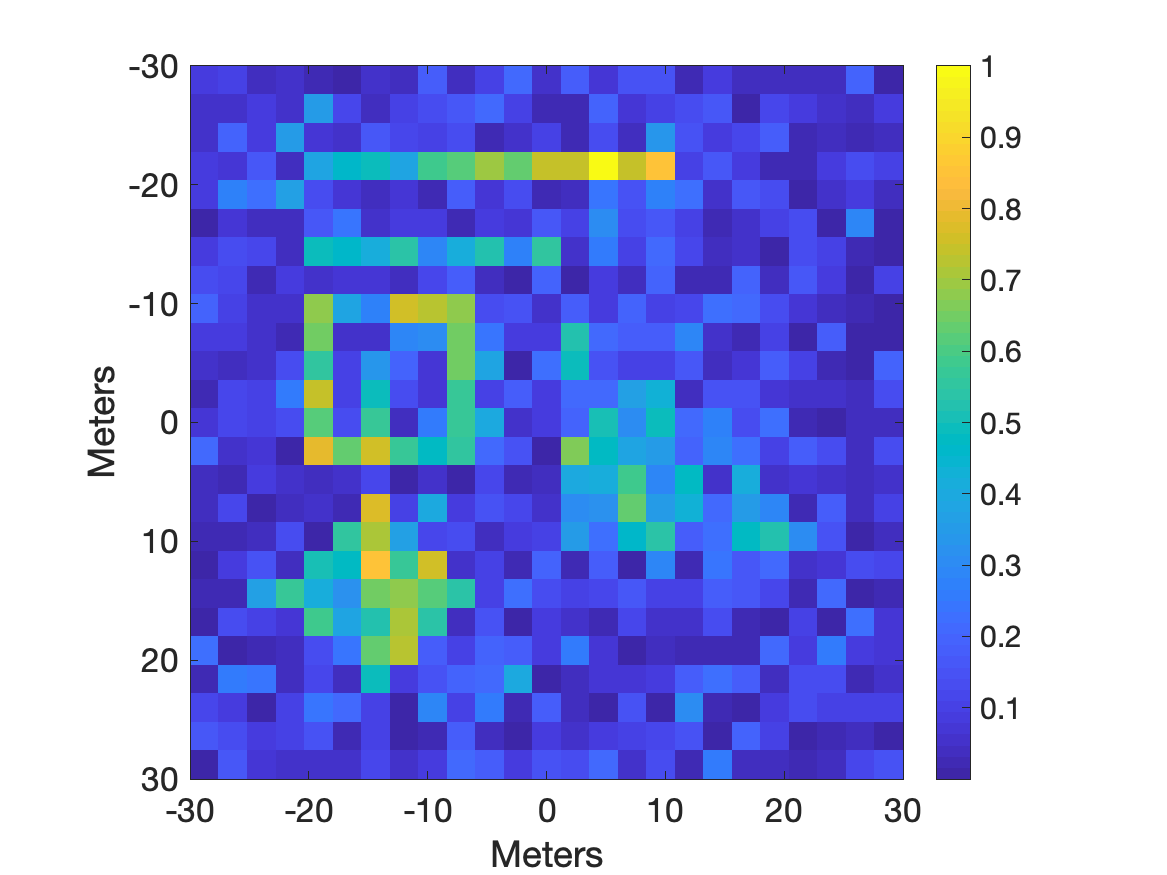}
    \caption{$-10$ dB SNR (active).}\label{fig:nz-10dB_active}
  \end{subfigure}
  \caption{Sample reconstructions after $4000$ iterations of GWF with $-10$ dB 
    SNR at the receivers.
    $30$ receivers were used for reconstruction. Number of
    frequency samples was held constant at $64$ and
    $K=625$. 
    The pixel spacing was set at $12$ m, and $2.4$ m, respectively. 
  }\label{fig:nz_recon-10}
\end{figure}

Next, we evaluate the robustness of the proposed method for interferometric multi-static radar imaging, in passive, and active cases. 
As before, we use a flat spectrum signal with $10$ MHz bandwidth centered around $1.9$ GHz frequency in the passive, and a $50$ MHz bandwidth centered around $10$ GHz frequency in the active experiments. 
We incorporate additive, zero-mean white Gaussian noise on the linear signal model at the receivers, and consider SNR levels varying from $-20$ to $30$ dB. 
~\figref{fig:nz} demonstrates the reconstruction MSE with respect to the received signal SNR with $5$ dB increments, with errors averaged over $10$ realizations. 
The MSE curves indicate that GWF is robust to additive noise at the receivers in both active and passive cases, and the performance of the algorithm degrades predictably with decreasing SNRs, with steady decay in MSE as the SNR at the receivers improve. 

It should be noted that the correlation operation amplifies the noise variance, hence the interferometric measurements processed in the experiments have lower SNR than the specified levels at the receivers. 
Nonetheless, we observe that the reconstruction performance of GWF degrades gracefully as SNR decrases, as ~\figref{fig:nz_recon0}, ~\figref{fig:nz_recon-5} and ~\figref{fig:nz_recon-10} demonstrate that GWF is capable of producing highly accurate imagery with $0$ dB in the received signals at the antennas. 
However with SNRs below $-10$ dB the algorithm performance degrades noticeably, which motivates filtering or sample truncation at the implementation of GWF in low SNR scenarios.

\section{Conclusion}\label{sec:conc}

In this paper, we utilize GWF theory developed in~\cite{bariscan2018} for exact multistatic imaging of extended targets by designing the underlying imaging parameters such that the sufficient condition for exact recovery is satisfied.
Our work has two significant contributions.
1) Unlike the state-of-the-art interferometric inversion methods based on LRMR, GWF avoids lifting the
problem. As a result, it is computationally efficient and does not require large memory allocations, making it
suitable for practical applications.
2) We demonstrate that the underlying imaging parameters can be designed so that the RIP over rank-1, PSD matrices is satisfied by a \emph{deterministic} lifted forward model.

We first show the asymptotic isometry of the lifted forward model, $\mathcal{F}$, of interferometric multistatic radar, as the center frequency and the number of receivers go to
infinity.
We then proceed with estimating the deviation from the asymptotic behavior when imaging parameters are finite, and derive an upper bound for the RIC of $\mathcal{F}$ over the set of rank-1, PSD matrices.
Hence, we identify the relation of imaging parameters to the sufficient condition of exact recovery. 
Using the RIC upper bound, we determine a lower limit for pixel spacing to achieve exact recovery.
This limit is superior to the Fourier-based range resolution 
for sufficiently small scenes.
Furthermore, we determine the minimal sample complexity needed for RIC upper bound to be sufficiently small, hence identify the practical requirements for reconstruction when designing a multistatic imaging system.
In our numerical simulations, we evaluate the impact of the imaging parameters in our upper bound estimate of
RIC 
in reconstruction performance, verify our theoretical results, and finally assess the performance of GWF to additive noise at low SNRs. 

For future work, we will study the robustness of our method with respect to deviations from our imaging setup, such as non-equi-distant locations, or non-circular configurations of receivers. 
In addition, we will investigate extensions of our theory 
to the case involving additive noise and outliers, moving target imaging, as well as implementation of our method using real scattering data. 

\bibliographystyle{IEEEtran}
\bibliography{citations_BY}
\appendix
\section{Proofs}
\subsection{Proof of Lemma~\ref{lem:lem1}}\label{prf:lem1}
We first examine the $2$-norm of the data.
For a rank-$1$ $\tilde{\brho}$, we have that $\norm{\tilde{\brho}}_F^2 = \norm{\brho}_2^4$.  We can
also rewrite
\begin{equation}
  \norm{\mathcal{F}\tilde{\brho}}^2_2 = \frac{1}{M\binom{N}{2}}\sum_{i=1}^N\sum_{i<j}^N\sum_{m=1}^M \abs{\ip{\L_i^m}{\brho}}^2\abs{\ip{\L_j^m}{\brho}}^2.\label{eq:norm2}
\end{equation}
Thus, from~\eqref{eq:Li},~\eqref{eq:varphi2}, and~\eqref{eq:Aij2} we have
\begin{equation}
  \begin{aligned}
  \abs{\ip{\L_i^m}{\brho}}^2 =
  \sum_{k,k'}&\mrm{e}^{-\mrm{i}\omega_m/c_0(\ip{\hat{\a}_i^r}{\x_k-\x_{k'}}+\ip{\hat{\a}^t}{\x_k-\x_{k'}})}\\
  &\times\rho(\bm{x}_k)\rho^{*}(\bm{x}_{k'})\frac{\abs{C_i}^2}{\abs{\a_i^r}^2\abs{\a^t}^2}.
  \end{aligned}
  \label{eq:ip2}
\end{equation}
Similarly, we have that
\begin{equation}
  \begin{aligned}
  &\abs{\ip{\L_i^m}{\brho}}^2\abs{\ip{\L_j^m}{\brho}}^2=
  \sum_{k,k',l,l'}\mrm{e}^{-\mrm{i}\omega_m/c_0\Phi_{i,j}^{k,k',l,l'}}\\
  &\qquad\times\rho(\bm{x}_k)\rho^{*}(\bm{x}_{k'})\rho(\bm{x}_{l'})\rho^{*}(\bm{x}_{l})\frac{\abs{C_i}^2 \abs{C_j}^2}{\abs{\a_i^r}^2\abs{\a_j^r}^2\abs{\a^t}^4}
  \end{aligned}\label{eq:ip3}
\end{equation}
where $\Phi_{i,j}^{k,k',l,l'}$ is as in~\eqref{eq:phasefunc}.

Then, under~\assref{assump:2}, we have that
\begin{equation}
  \begin{aligned}
  \frac{1}{M}\sum_{m=1}^M
  \mrm{e}^{-\mrm{i}\frac{\omega_m}{c_0}\Phi_{i,j}^{k,k',l,l'}}
  &=
  \frac{\mrm{e}^{-\mrm{i}\frac{\omega_c}{c_0}\Phi_{i,j}^{k,k',l,l'}}}{M}\sum_{m=1}^M\mrm{e}^{\mrm{i}\frac{B}{2c_0}\Phi_{i,j}^{k,k',l,l'}}\\
  &\qquad\qquad\times\mrm{e}^{-\mrm{i}\frac{(m-1)B}{Mc_0}\Phi_{i,j}^{k,k',l,l'}}\\
  &=
  \frac{\mrm{e}^{-\mrm{i}\frac{\omega_c'}{c_0}\Phi_{i,j}^{k,k',l,l'}}}{M}\frac{\sin\left(\frac{B}{2c_0}\Phi_{i,j}^{k,k',l,l'}\right)}{\sin\left(\frac{B}{2Mc_0}\Phi_{i,j}^{k,k',l,l'}\right)}\\
    &\approx \mrm{e}^{-\mrm{i}\frac{\omega_c'}{c_0}\Phi_{i,j}^{k,k',l,l'}}\opn{sinc}\left(\frac{B}{2c_0}\Phi_{i,j}^{k,k',l,l'}\right)
  \end{aligned}\label{eq:sinc}
\end{equation}
where $\omega_c' = \omega_c - \frac{B}{2M}$.  The second line is from geometric sum and the last
line is from small angle approximation.

Using~\eqref{eq:sinc} and changing the order of sum, and denoting $\abs{\alpha_{i,j}}^2 = \frac{\abs{C_i}^2 \abs{C_j}^2}{\abs{\a_i^r}^2\abs{\a_j^r}^2\abs{\a^t}^4}$, we have
\begin{equation}
  \begin{aligned}
    \norm{\mathcal{F}\tilde{\brho}}^2_2 &=
    \frac{1}{\binom{N}{2}}\sum_{i<j}\abs{\alpha_{i,j}}^2 \sum_{k,k',l,l'}\mrm{e}^{-\mrm{i}\omega_c'/c_0\Phi_{i,j}^{k,k',l,l'}}\\
    &\times\opn{sinc}\left(\frac{B}{2c_0}\Phi_{i,j}^{k,k',l,l'}\right)\tilde{\rho}(\bm{x}_k,\bm{x}_{k'})
  \tilde{\rho}(\bm{x}_{l'},\bm{x}_l).
  \end{aligned}\label{eq:ip4}
\end{equation}
We can split~\eqref{eq:ip4} into two parts as
\begin{equation}
  \begin{aligned}
  \norm{\mathcal{F}\tilde{\brho}}^2_2 &=
  \sum_{i<j}\frac{\abs{\alpha_{i,j}}^2}{\binom{N}{2}}\left(\norm{\brho}_2^4 + \sum_{k\neq k',l\neq
      l'}\mrm{e}^{-\mrm{i}\frac{\omega_c'}{c_0}\Phi_{i,j}^{k,k',l,l'}}\right.\\
    &\qquad\left.\times\opn{sinc}\left(\frac{B}{2c_0}\Phi_{i,j}^{k,k',l,l'}\right)\tilde{\rho}(\bm{x}_k,\bm{x}_{k'})\tilde{\rho}(\bm{x}_{l'},\bm{x}_{l})
  \right)\\
  &= \sum_{i<j}\frac{\abs{\alpha_{i,j}}^2}{\binom{N}{2}}\left(\norm{\tilde{\brho}}_F^2 +
    \opn{Re}\left\{\sum_{k\neq k',l\neq l'}\mrm{e}^{-\mrm{i}\frac{\omega_c'}{c_0}\Phi_{i,j}^{k,k',l,l'}}\right.\right.\\
      &\qquad\left.\left.\times\opn{sinc}\left(\frac{B}{2c_0}\Phi_{i,j}^{k,k',l,l'}\right)\tilde{\rho}(\bm{x}_k,\bm{x}_{k'})\tilde{\rho}(\bm{x}_{l'},\bm{x}_{l})\right\}
  \right)\\
  &= \sum_{i<j}\frac{\abs{\alpha_{i,j}}^2}{\binom{N}{2}}\left(\norm{\tilde{\brho}}_F^2 +
    \sum_{k\neq  k',l\neq l'}\opn{Re}\left\{\mrm{e}^{-\mrm{i}\frac{\omega_c'}{c_0}\Phi_{i,j}^{k,k',l,l'}}\right.\right.\\
      &\qquad\left.\left.\times\opn{sinc}\left(\frac{B}{2c_0}\Phi_{i,j}^{k,k',l,l'}\right)\tilde{\rho}(\bm{x}_k,\bm{x}_{k'})\tilde{\rho}(\bm{x}_{l'},\bm{x}_{l})\right\}
  \right).
  \end{aligned}\label{eq:splits}
\end{equation}
Having a real-valued $\tilde{\brho}$, we rewrite the latter term
in~\eqref{eq:splits} as
\begin{equation}
  \begin{aligned}
    W_{i,j} &= \sum_{k\neq k',l\neq
      l'}\opn{Re}\left\{\mrm{e}^{-\mrm{i}\frac{\omega_c'}{c_0}\Phi_{i,j}^{k,k',l,l'}}\opn{sinc}\left(\frac{B}{2c_0}\Phi_{i,j}^{k,k',l,l'}\right)\right.\\
      &\qquad\qquad\qquad\left.\times\tilde{\rho}(\bm{x}_k,\bm{x}_{k'})\tilde{\rho}(\bm{x}_{l'},\bm{x}_{l})\right\}\\
&=
\sum_{k\neq k',l\neq
  l'}\cos\omega_c'/c_0\Phi_{i,j}^{k,k',l,l'}\opn{sinc}\left(\frac{B}{2c_0}\Phi_{i,j}^{k,k',l,l'}\right)\\
&\qquad\qquad\qquad\times\tilde{\rho}(\bm{x}_k,\bm{x}_{k'})\tilde{\rho}(\bm{x}_{l'},\bm{x}_{l})\\
&= \sum_{k\neq k',l\neq
  l'}\mathcal{K}(\Phi_{i,j}^{k,k',l,l'})\tilde{\rho}(\bm{x}_k,\bm{x}_{k'})\tilde{\rho}(\bm{x}_{l'},\bm{x}_{l})\label{eq:splits2}.
\end{aligned}
\end{equation}
We can further rewrite $\mathcal{K}(\Phi_{i,j}^{k,k',l,l'})$~\eqref{eq:splits2} using trigonometric identity as
\begin{equation}
    \mathcal{K}(\Phi_{i,j}^{k,k',l,l'}) =\frac{\sin\left[\frac{\left(\omega_c'+\frac{B}{2}\right)\Phi_{i,j}^{k,k',l,l'}}{c_0}\right] -
    \sin\left[\frac{\left(\omega_c'-\frac{B}{2}\right)\Phi_{i,j}^{k,k',l,l'}}{c_0}\right]}{B\frac{\Phi_{i,j}^{k,k',l,l'}}{c_0}}
\end{equation}
which proves the claim.

\subsection{Proof of Proposition~\ref{prop:prop1}}\label{prf:prop}
First we express $\mathcal{K}$ as
\begin{equation}
  \begin{aligned}
  \mathcal{K}(\Phi_{i,j}^{k,k',l,l'}) &=\frac{\sin\left[\frac{\left(\omega_c'+\frac{B}{2}\right)\Phi_{i,j}^{k,k',l,l'}}{c_0}\right] -
    \sin\left[\frac{\left(\omega_c'-\frac{B}{2}\right)\Phi_{i,j}^{k,k',l,l'}}{c_0}\right]}{B\frac{\Phi_{i,j}^{k,k',l,l'}}{c_0}}\\
  &=
  \frac{\omega_c'}{B}\left\{s_1(\Phi_{i,j}^{k,k',l,l'}) - s_2(\Phi_{i,j}^{k,k',l,l'})\right\}\\
  &\qquad+ \frac{1}{2}\left\{s_1(\Phi_{i,j}^{k,k',l,l'}) + s_2(\Phi_{i,j}^{k,k',l,l'})\right\}
  \end{aligned}\label{eq:Ksplits}
\end{equation}
where
\begin{equation}
  \begin{aligned}
  s_1(\Phi_{i,j}^{k,k',l,l'}) &= \opn{sinc}\left[\frac{\left(\omega_c'+\frac{B}{2}\right)\Phi_{i,j}^{k,k',l,l'}}{c_0}\right]\\
    s_2(\Phi_{i,j}^{k,k',l,l'}) &= \opn{sinc}\left[\frac{\left(\omega_c'-\frac{B}{2}\right)\Phi_{i,j}^{k,k',l,l'}}{c_0}\right]
  \end{aligned}
\end{equation}
Given~\eqref{eq:Ksplits}, it suffices to prove that
\begin{align}
  \lim_{\omega_c'\rightarrow\infty}\frac{\omega_c'}{B}\opn{sinc}\left[\left(\omega_c'+\frac{B}{2}\right)\frac{\Phi_{i,j}^{k,k',l,l'}}{c_0}\right]
  &= \frac{c_0\pi}{B}\delta(\Phi_{i,j}^{k,k',l,l'})\label{eq:deltalim}\\
   \lim_{\omega_c'\rightarrow\infty}\frac{\omega_c'}{B}\opn{sinc}\left[\left(\omega_c'-\frac{B}{2}\right)\frac{\Phi_{i,j}^{k,k',l,l'}}{c_0}\right]
  &= \frac{c_0\pi}{B}\delta(\Phi_{i,j}^{k,k',l,l'}).\label{eq:deltalim2}
\end{align}
This can be proved using similar machinery to proving the delta function limit for sequence
  of scaled sinc functions.
 We prove the first equation~\eqref{eq:deltalim}.  The second equation follows similarly.

Let $f\in S(\mathbb{R})$ be a smooth test function, where $S(\mathbb{R})$ is the Schwartz space.
Then we need to prove that
 \begin{equation}
   \lim_{\omega_c'\rightarrow\infty}\int_{-\infty}^{\infty}g_{\omega_c'}(x)f(x)dx = \frac{c_0\pi}{B}f(0).
 \end{equation}
 where
 \begin{equation}
   g_{\omega_c'}(x) = \frac{\omega_c'}{B}\opn{sinc}\left[\left(\omega_c'+\frac{B}{2}\right)\frac{x}{c_0}\right]
 \end{equation}
 Let $\epsilon>0$ and break-up the integral into two parts.
 \begin{equation}
   \begin{aligned}
     \int_{-\infty}^{\infty}K(x)f(x)dx =& \int_{\abs{x}\geq \epsilon}g_{\omega_c'}(x)f(x)dx\\
     &+
   \int_{\abs{x}\leq \epsilon}g_{\omega_c'}(x)f(x)dx.
   \end{aligned}\label{eq:twoparts}
 \end{equation}
 The first part of the integral is
 \begin{equation}
   \begin{aligned}
     \int_{\abs{x}\geq \epsilon}g_{\omega_c'}(x)f(x)dx =& \int_{\epsilon}^{\infty}g_{\omega_c'}(x)f(x)dx\\
     &+ \int_{-\infty}^{-\epsilon}g_{\omega_c'}(x)f(x)dx
   \end{aligned}\label{eq:part1}
 \end{equation}
 We compute the first part of~\eqref{eq:part1}.  Integrating by parts,
 \begin{equation}
   \begin{aligned}
     &\int_{\epsilon}^{\infty}g_{\omega_c'}(x)f(x)dx =
     \frac{\frac{c_0}{B}\omega_c'}{\omega_c'+\frac{B}{2}}\left[\left.-\frac{f(x)\cos\left(\frac{\left(\omega_c'+ \frac{B}{2}\right)x}{c_0}\right)}{x\left(\omega_c'+\frac{B}{2}\right)}\right|_{\epsilon}^{\infty}\right.\\
       &\qquad\qquad\left.+
       \int_{\epsilon}^{\infty}\frac{xf'(x) - f(x)}{x^2}\frac{\cos\left(\left(\omega_c'
             + \frac{B}{2}\right)\frac{x}{c_0}\right)}{\left(\omega_c'+\frac{B}{2}\right)}
     \right]
   \end{aligned}\label{eq:intbypart}
 \end{equation}
 Since $f\in S(\mathbb{R})$, taking the limit as $\omega_c'\rightarrow\infty$,~\eqref{eq:intbypart}
 goes to $0$.  Similarly, we
 can see that the second integral in~\eqref{eq:part1} also goes to zero.  Thus,
 \begin{equation}
   \lim_{\omega_c'\rightarrow\infty}\int_{\abs{x}\geq \epsilon}g_{\omega_c'}(x)f(x)dx = 0
 \end{equation}
 Now, the second integral in~\eqref{eq:twoparts} can be rewritten as
 \begin{equation}
   \int_{-\epsilon}^{\epsilon} g_{\omega_c'}(x) (f(x)-f(0)) dx + f(0)\int_{-\epsilon}^{\epsilon}
   g_{\omega_c'}(x) dx
 \end{equation}
 The first integral can be rewritten as
 \begin{equation}
   \frac{c_0}{B}\frac{\omega_c'}{\omega_c'+\frac{B}{2}}\int_{-\epsilon}^{\epsilon}\sin\left(\left(\omega_c'+\frac{B}{2}
     \right)\frac{x}{c_0}\right) \frac{f(x)-f(0)}{x}dx\label{eq:intfirst}
 \end{equation}
 Since $f\in S(\mathbb{R})$, we can use \emph{Riemann-Lebesgue lemma} to conclude that~\eqref{eq:intfirst}
 goes to zero as $\omega_c'\rightarrow \infty$~\cite{rudin1987}.

For second integral, we have
 \begin{equation}
   \frac{c_0}{B}\frac{\omega_c'}{\omega_c'+\frac{B}{2}}f(0)\int_{-\epsilon}^{\epsilon}\frac{\sin\left(\left(\omega_c'+\frac{B}{2}
     \right)\frac{x}{c_0}\right)}{x}dx.
 \end{equation}
 After change of variables $u = \left(\omega_c'+\frac{B}{2}
 \right)\frac{x}{c_0}$, we have
 \begin{equation}
   \frac{c_0}{B}\frac{\omega_c'}{\omega_c'+\frac{B}{2}}f(0)\int_{-\epsilon\left(\omega_c'+\frac{B}{2}
 \right)/c_0}^{\epsilon\left(\omega_c'+\frac{B}{2}
 \right)/c_0} \frac{\sin u}{u}du \rightarrow
 \frac{c_0\pi}{B}f(0)
 \end{equation}
 as $\omega_c'\rightarrow\infty$.
 We can  use similar argument for where
 \begin{equation}
   g_{\omega_c'}(x) = \frac{\omega_c'}{B}\opn{sinc}\left[\left(\omega_c'-\frac{B}{2}\right)\frac{x}{c_0}\right]
 \end{equation}

\subsection{Proof of Proposition~\ref{prop:prop2}}\label{prf:prop2}
Without loss of generality, let the receivers and transmitters have common elevation angle $\phi$
such that
\begin{align}
  \hat{\a}_i^r &= [\cos\phi\cos\theta_i,\cos\phi\sin\theta_i,\sin\phi]^T\\
  \hat{\a}^t = &=[\cos\phi\cos\theta_t,\cos\phi\sin\theta_t,\sin\phi]^T.
\end{align}
where $\theta_i$ is the azimuth angle of the $i$-th receivers look-direction, $\theta_t$ is the
azimuth angle of the transmitter look-direction and $\phi$ is the elevation angle.
Furthermore, we have that for any $k$ and $k'$
\begin{equation}
  \bm{x}_k-\bm{x}_{k'} = \norm{\bm{x}_k-\bm{x}_{k'}}[\cos\theta_{k,k'},\sin\theta_{k,k'}]^T
\end{equation}
where $\theta_{k,k'}$ is the angle of the vector $\bm{x}_{k}-\bm{x}_{k'}$.
Then we have that
\begin{equation}
  \begin{aligned}
  \frac{\Phi_{i,j}^{k,k',l,l'}}{\cos\phi} =&
  \norm{\bm{x}_k-\bm{x}_{k'}}(\cos(\theta_i-\theta_{k,k'})+\cos(\theta_t-\theta_{k,k'}))\\
  &- \norm{\bm{x}_l-\bm{x}_{l'}}(\cos(\theta_j-\theta_{l,l'})+\cos(\theta_t-\theta_{l,l'})).
  \end{aligned}
\end{equation}
Thus, for the non-diagonal terms where $k\neq k'$, $l\neq l'$ we have that $\Phi_{i,j}^{k,k',l,l'} = 0$ if
\begin{equation}
  \begin{aligned}
  \frac{\norm{\bm{x}_k-\bm{x}_{k'}}}{\norm{\bm{x}_l-\bm{x}_{l'}}}&(\cos(\theta_i-\theta_{k,k'})+\cos(\theta_t-\theta_{k,k'}))\\
  &- \cos(\theta_t-\theta_{l,l'}) = \cos(\theta_j-\theta_{l,l'}).
  \end{aligned}\label{eq:Phi0set}
\end{equation}
For fixed $k,k',l,l'$ and $i$, there are at most $2$ values of $\theta_j$'s for
which~\eqref{eq:Phi0set} is satisfied.
Furthermore, we know that $\abs{\alpha_{i,j}}^2$'s must be bounded.  Thus, by~\propref{prop:prop1}, for
each fixed $k,k',l,l'$
where $k\neq k'$ and $l\neq l'$ we have that
\begin{equation}
  \frac{1}{\binom{N}{2}}\sum_{i<j}\abs{\alpha_{i,j}}^2 \lim_{\omega_c'\rightarrow\infty}
  \mathcal{K}(\Phi_{i,j}^{k,k',l,l'}) = \mathcal{O}\left(\frac{1}{N}\right).
\end{equation}
Now taking the limit as $N\rightarrow\infty$, we have the desired result.

\subsection{Proof of Theorem~\ref{thm:Theorem1}}\label{prf:Theorem1}
We want to upper bound the following
\begin{equation}
  \begin{aligned}
  \abs{\frac{\sum_{i,i<j}\abs{\alpha_{i,j}}^2W_{i,j}}{\binom{N}{2}}}\approx&
  \left|\frac{\alpha}{\binom{N}{2}}\sum_{i,i<j}\sum_{k,k\neq k',l,l\neq
      l'}\mathcal{K}(\Phi_{i,j}^{k,k',l,l'})\right.\\
    &\left.\times\tilde{\rho}(\bm{x}_k,\bm{x}_{k'})\tilde{\rho}(\bm{x}_{l'},\bm{x}_{l})\right|.
  \end{aligned}\label{eq:Wij}
\end{equation}
where, $\abs{\alpha_{i,j}}^2 \approx \alpha = \frac{\abs{C}^4}{(\bm{a}_1^r)^4(\bm{a}^t)^2}$ by
\assref{assump:config}.  Without loss of generality we set $\alpha = 1$.
We begin by noting that
\begin{equation}
  \Phi_{i,j}^{k,k',l,l'} = (\hat{\a}_i + \hat{\a}_t)\cdot (\bm{x}_k - \bm{x}_{k'})  + \beta_{j}^{l,l'}.\label{eq:phiij-b}
\end{equation}
where
\begin{equation}
  \beta_j^{l,l'} = -(\hat{\a}_j + \hat{\a}_t)\cdot (\bm{x}_l - \bm{x}_{l'}).
\end{equation}
Thus, fixing $l$, $l'$, and $k$, we have convolution between $\mathcal{G}$ and $\rho$.  Let
\begin{equation}
  \mathcal{G}_{i,j}(\bm{x}_k-\bm{x}_{k'}) = \mathcal{K}((\hat{\a}_i + \hat{\a}_t)\cdot (\bm{x}_k -
  \bm{x}_{k'}) + \beta_j^{l,l'}).
\end{equation}
We take the Fourier Transform of $\mathcal{G}_{i,j}$ and $\rho$ to represent the convolution.  Denoting,
$\hat{\mathcal{G}}_{i,j}$ as the Fourier Transform of $\mathcal{G}_{i,j}$, we have
\begin{equation}
  \begin{aligned}
  &\sum_{k\neq k'}\sum_{l\neq
    l'}\mathcal{K}(\Phi_{i,j}^{k,k',l,l'})\tilde{\rho}(\bm{x}_k,\bm{x}_{k'})\tilde{\rho}(\bm{x}_{l'},\bm{x}_{l})\\
  &= \frac{1}{4\pi^2}\sum_{l\neq l'} \tilde{\rho}(\bm{x}_l,\bm{x}_{l'})\sum_k \rho(\bm{x}_k)\int
  \mrm{e}^{\mrm{i}\bm{\omega}\cdot\bm{x}_k}\hat{\mathcal{G}}_{i,j}(\bm{\omega})\hat{\rho}(\bm{\omega})
  d\bm{\omega}
  \end{aligned}\label{eq:FT1}
\end{equation}

To compute $\hat{\mathcal{G}}_{i,j}$, we first rewrite $\mathcal{G}_{i,j}$ as
\begin{equation}
  \begin{aligned}
  &\mathcal{G}_{i,j}(\bm{x}_k) =
  \frac{\omega_c'+\frac{B}{2}}{B}\opn{sinc}\left(\frac{\omega_c'+\frac{B}{2}}{c_0}((\hat{\a}_i +
    \hat{\a}_t)\cdot\bm{x}_k + \beta_j^{l,l'})\right) \\&\qquad-
  \frac{\omega_c'-\frac{B}{2}}{B}\opn{sinc}\left(\frac{\omega_c'-\frac{B}{2}}{c_0}((\hat{\a}_i +
    \hat{\a}_t)\cdot\bm{x}_k + \beta_j^{l,l'})\right).
  \end{aligned}\label{eq:K2}
\end{equation}
Let $\bm{x}_k = [x_1^k,x_2^k]^T$, $\bm{\omega} = [\omega_1,\omega_2]^T$ and $\theta_i$ be the azimuth
angle of the $i$-th receiver's look-direction.  Then, given~\eqref{eq:K2},
the Fourier Transform of $\mathcal{G}$ and using the assumption that $\hat{\bm{a}}_t = [1,0]^T$,
\begin{equation}
  \hat{\mathcal{G}}_{i,j}(\bm{\omega}) =
  \frac{c_0}{B}\frac{K}{L}\mrm{e}^{\mrm{i}\omega_1\gamma_{i,j}^{l,l'}}S(\bm{\omega})R(\omega_1)\label{eq:Ghat}
\end{equation}
where
\begin{equation}
  \gamma_{i,j}^{l,l'} = \frac{\beta_j^{l,l'}}{\cos\phi(\cos\theta_i + 1)} 
  \label{eq:gammaij}
\end{equation}
\begin{equation}
  \begin{aligned}
    R(\omega_1) =& \frac{1}{\cos\phi(\cos\theta_i+1)}\left[
      \opn{rect}\left(\frac{\omega_1}{2\frac{(\omega_c'+\frac{B}{2})\cos\phi(\cos\theta_i+1)}{c_0}}\right)\right.\\
    &\qquad\left.-
      \opn{rect}\left(\frac{\omega_1}{2\frac{(\omega_c'-\frac{B}{2})\cos\phi(\cos\theta_i+1)}{c_0}}\right)\right]
  \end{aligned}\label{eq:Ro}
\end{equation}
and
\begin{equation}
  S(\bm{\omega}) = \opn{sinc}\left(\left(\omega_2
  - \omega_1\frac{\sin\theta_i}{\cos\theta_i+1}\right)\frac{L}{2}\right).\label{eq:So}
\end{equation}
Noting that $R$ is only non-zero where $\frac{(\omega_c'-B/2)}{c_0}\cos\phi(\cos\theta_i +1)\leq
\omega_1\leq \frac{(\omega_c'+B/2)}{c_0}\cos\phi(\cos\theta_i +1)$, and $\omega_c'\gg B/2$, we
approximate~\eqref{eq:gammaij} as
\begin{equation}
    \omega_1\gamma_{i,j}^{l,l'}\approx \frac{\omega_1[\cos\theta_j,\sin\theta_j]^T\cdot\bm{x}_l-\bm{x}_{l'}}{\cos\theta_i + 1} + \frac{\omega_c'(x_1^l-x_1^{l'})\cos\phi}{c_0}.
\end{equation}
Next, we note that
\begin{equation}
  \sum_k\rho(\bm{x}_k)\mrm{e}^{\mrm{i}\bm{\omega}\cdot\bm{x}_k} = \hat{\rho}^*(\bm{\omega}).
\end{equation}
Thus, interchanging the sum and the integral in~\eqref{eq:FT1}, and plugging in~\eqref{eq:Ghat} we have
\begin{align}
  \sum_{k\neq k'}\sum_{l\neq
  l'}&\mathcal{K}(\Phi_{i,j}^{k,k',l,l'})\tilde{\rho}(\bm{x}_k,\bm{x}_{k'})\tilde{\rho}(\bm{x}_{l'},\bm{x}_{l})\nonumber\\
  &= \frac{1}{4\pi^2}\frac{c_0}{B}\frac{K}{L}\int
    S(\bm{\omega})R(\omega_1)\abs{\hat{\rho}(\bm{\omega})}^2\nonumber\\
  &\qquad\qquad\quad\times\sum_{l\neq l'}
    \mrm{e}^{\mrm{i}\omega_1\gamma_{i,j}^{l,l'}}\tilde{\rho}(\bm{x}_l,\bm{x}_{l'}) d\bm{\omega}\\
  &= \frac{1}{4\pi^2}\frac{c_0}{B}\frac{K}{L}\int
    S(\bm{\omega})R(\omega_1)\abs{\hat{\rho}(\bm{\omega})}^2\nonumber\\
  &\qquad\qquad\qquad\times\abs{\hat{\rho}(\bm{\omega}')}^2 d\bm{\omega}
    \label{eq:FT2}
\end{align}
where
\begin{align}
  \bm{\omega}' &= \omega_1'[\cos\theta_j,\sin\theta_j]^T + \frac{\omega_c'}{c_0}\cos\phi[1,0]^T\\
  \omega_1' &= \frac{\omega_1}{\cos\theta_i+1}
\end{align}

Now, by employing Cauchy-Schwartz, we have
\begin{equation}
  \begin{aligned}
  &\abs{\int
    S(\bm{\omega})R(\omega_1)\abs{\hat{\rho}(\bm{\omega})}^2 \abs{\hat{\rho}(\bm{\omega}')}^2
    d\bm{\omega}}\\
  &\qquad\leq
  \sqrt{\int S^2(\omega_1',\omega_2)\tilde{R}^2(\omega_1')\abs{\hat{\rho}(\bm{\omega}')}^4d\omega_2
    d\omega_1'}\\
  &\qquad\qquad\qquad\times\sqrt{\int\abs{\hat{\rho}(\bm{\omega})}^4d\bm{\omega}}
  \end{aligned}\label{eq:CS1}
\end{equation}
where
\begin{equation}
  \tilde{R}(\omega_1') = (\cos\theta_i+1)R((\cos\theta_i+1)\omega_1').
\end{equation}
By Jensen's inequality,~\eqref{eq:CS1} becomes
\begin{align}
    &\abs{\int
      S(\bm{\omega})R(\omega_1)\abs{\hat{\rho}(\bm{\omega})}^2 \abs{\hat{\rho}(\bm{\omega}')}^2
  d\bm{\omega}}\nonumber\\
  &\leq
                     \sqrt{\int S^2(\omega_1',\omega_2)\tilde{R}^2(\omega_1')\abs{\hat{\rho}(\bm{\omega}')}^4d\omega_2
                     d\omega_1'}                 \int
                     \abs{\hat{\rho}(\bm{\omega})}^2d\bm{\omega}\\
                   &=4\pi^2\norm{\bm{\rho}}^2_2\sqrt{\int
                     S^2(\omega_1',\omega_2)\tilde{R}^2(\omega_1') \abs{\hat{\rho}(\bm{\omega}')}^4d\omega_2
                     d\omega_1'}
                     \label{eq:CS2}
\end{align}

Noting that for any fixed $\omega_1$,
\begin{equation}
  \int S^2(\bm{\omega})d\omega_2 = \frac{2\pi}{L},
\end{equation}
we have
\begin{equation}
  \begin{aligned}
  &\int S^2(\omega_1',\omega_2)\tilde{R}^2(\omega_1')
  \abs{\hat{\rho}(\bm{\omega}')}^4d\omega_2d\omega_1'\\
  &\qquad= \frac{2\pi}{L}\int
  \tilde{R}^2(\omega_1')\abs{\hat{\rho}(\bm{\omega}')}^4d\omega_1'.
  \end{aligned}\label{eq:CS3}
\end{equation}
We use Jensen's inequality once more to get
\begin{equation}
  \begin{aligned}
  &\sqrt{\int S^2(\omega_1',\omega_2)\tilde{R}^2(\omega_1')
    \abs{\hat{\rho}(\bm{\omega}')}^4d\omega_2d\omega_1'}\\
  &\qquad\leq \sqrt{\frac{2\pi}{L}}\int
\tilde{R}(\omega_1')\abs{\hat{\rho}(\bm{\omega}')}^2d\omega_1'.
\end{aligned}\label{eq:CS4}
\end{equation}

Next, approximating the sum over $\theta_j$ as an integral, we have
\begin{equation}
  \begin{aligned}
  &\frac{1}{2\binom{N}{2}}\sum_{i} \frac{N}{A} \sum_{i\neq
    j} \int \frac{A}{N} \tilde{R}(\omega_1')\abs{\hat{\rho}(\bm{\omega}')}^2d\omega_1'\\
  &\qquad\approx
\frac{1}{A} \left( \int
\tilde{R}(\omega_1')\abs{\hat{\rho}(\bm{\omega}')}^2 d\omega_1' d\theta_j + E_R\right),
\end{aligned}\label{eq:CS5}
\end{equation}
where $E_R$ denotes the Riemann sum error, $A$ is the aperture of look directions in the imaging setup.

We first consider the inner integration over $\omega_1'$, where $\bm{\omega}' = [\omega_1', \theta_j]$. Using Cauchy-Schwartz and Jensen's
inequalities,
\begin{align}
  &\int\displaylimits_A \left[ \int \frac{1}{\abs{ \omega_1' }} \tilde{R}(\omega_1') \abs{ \omega_1'} \abs{\hat{\rho}(\bm{\omega}')}^2 d{\omega_1'} \right] d\theta_j \nonumber\\
  &\qquad\leq \int\displaylimits_A \left[\sqrt{\int
    \frac{1}{(\omega_1')^2} \tilde{R}^2(\omega_1') d{\omega_1'}}\int \abs{ \omega_1'} \abs{\hat{\rho}(\bm{\omega}')}^2 d{\omega_1'} \right] d\theta_j \nonumber\\
  &\qquad = \sqrt{\int
    \frac{1}{(\omega_1')^2} \tilde{R}^2(\omega_1') d{\omega_1'}} \int\displaylimits_A \int \abs{ \omega_1'} \abs{\hat{\rho}(\bm{\omega}')}^2 d\bm{\omega}'.\label{eq:CS7}
\end{align}

Computing the first integral in~\eqref{eq:CS7}, we get
  \begin{align}
    \int\frac{1}{(\omega_1')^2} \tilde{R}^2(\omega_1') d\omega_1' &=
    \frac{2}{\cos^2\phi}\int_{\frac{\omega_c'-B/2}{c_0}\cos\phi}^{\frac{\omega_c'+B/2}{c_0}\cos\phi}\frac{1}{(\omega_1')^2}d\omega_1' \nonumber\\
    &= \frac{2c_0}{\cos^3\phi}\left(\frac{1}{\omega_c'-B/2} - \frac{1}{\omega_c'+B/2}\right)\nonumber\\
    &= \frac{2Bc_0}{\cos^3\phi((\omega'_c)^2 - (B/2)^2)}.\label{eq:CS8}
  \end{align}
Consider the second integral in~\eqref{eq:CS7}. Since the integrand is strictly positive, from the $\theta_j$ integration we have
\begin{equation}
  \int\displaylimits_A \int \abs{ \omega_1'} \abs{\hat{\rho}(\bm{\omega}')}^2 d\bm{\omega'} \leq \int\displaylimits_{2\pi} \int \abs{ \omega_1'} \abs{\hat{\rho}(\bm{\omega}')}^2 d\bm{\omega'}. \label{eq:CS9}
\end{equation}
We now make the following change of variables
\begin{equation}
  \cos\theta_j\omega_1' = \omega_1'', \ \ \sin\theta_j\omega_1' = \omega_2''.
\end{equation}
Computing the Jacobian, we get
\begin{equation}
  J =  \frac{1}{\abs{\omega_1'(\bm{\omega}'')}} = \frac{1}{\sqrt{(\omega_1'')^2 +  (\omega_2'')^2}}.
\end{equation}
Thus, setting $\bm{\omega}'' = [\omega_1'', \omega_2'']$, the upper bound in~\eqref{eq:CS9} becomes
\begin{equation}
\int \abs{ \omega_1'} \abs{\hat{\rho}(\bm{\omega}')}^2 d\bm{\omega'} = \int \abs{\hat{\rho}(\bm{\omega}'')}^2 d\bm{\omega}'' = 4\pi^2 \| \brho \|^2,
\label{eq:CS6}
\end{equation}
where the last identity follows from Parseval's theorem.
Hence, we obtain the upper bound on our integral approximation, i.e., the first term in \eqref{eq:CS5} as
\begin{equation}
\begin{split}
\int \tilde{R}(\omega_1')\abs{\hat{\rho}(\bm{\omega}')}^2 \bm{\omega}'  & \leq  \frac{4\pi^2 \| \brho \|^2   \sqrt{2Bc_0}}{(\cos \phi)^{3/2}  \sqrt{(\omega'_c)^2 - (B/2)^2}} \\
 & \approx \frac{4\pi^2 \| \brho \|^2   \sqrt{2Bc_0}}{ \omega'_c (\cos \phi)^{3/2}}. 
\label{eq:CS10}
\end{split}
\end{equation}

We next evaluate the error term of the integral approximation in~\eqref{eq:CS6}. Namely, the midpoint Riemann-sum approximation error is upper bounded as
\begin{equation}\label{eq:RieErrDef}
| E_R | \leq Q \frac{A^3}{24 N^2}
\end{equation}
where $Q$ is defined as
\begin{equation}\label{eq:RieErr_1}
Q = \underset{\theta}{\mathrm{max}}  \abs{\frac{\partial^2}{\partial\theta^2} \int_{\abs{\frac{\omega_c-B/2}{c_0}\cos\phi}}^{\abs{\frac{\omega_c+B/2}{c_0}\cos\phi}} \abs{\hat{\rho}(\bm{\omega}')}^2 d\omega_1'}
\end{equation}
using the definition of $\tilde{R}(\omega_1')$.

Since the integrand is the squared absolute value of the Fourier transform of the reflectivity function evaluated at frequencies $\bm{\omega}'$, \eqref{eq:RieErr_1} can equivalently be written as
\begin{equation}
Q = \underset{\theta}{\mathrm{max}} \abs{\sum_{l, l'} \frac{\partial^2}{\partial \theta^2} \tilde{f}(\theta, \bm{x}_l, \bm{x}_{l'}) \tilde{\rho}(\bm{x}_l, \bm{x}_{l'})} 
\end{equation}
which is the form of a Frobenius inner product, where
\begin{equation}\label{eq:neweq1}
\tilde{f}(\theta, \bm{x}_l, \bm{x}_{l'}) := \int_{\abs{\frac{\omega_c-B/2}{c_0}\cos\phi}}^{\abs{\frac{\omega_c+B/2}{c_0}\cos\phi}} \mathrm{e}^{-\mathrm{i} \bm{\omega}' \cdot (\bm{x}_l - \bm{x}_{l'})} d\omega_1'.
\end{equation}

Observe that $Q$ has an upper bound that only depends on the $\ell_2$ norm of the underlying scene as
\begin{equation}\label{eq:neweq2}
Q \leq \underset{\theta}{\mathrm{max}} \| \frac{\partial^2}{\partial \theta^2} \tilde{\f} \|_F \| {\brho} \|^2,
\end{equation}
and that $\underset{\theta}{\mathrm{max}} \| \frac{\partial^2}{\partial \theta^2}  \tilde{\f} \|_F$ has only dependence on the imaging system, hence yields a universal upper bound for any scene reflectivity function in the specific imaging geometry. 
Evaluating the integral in~\eqref{eq:neweq1} and having $\omega_c \gg B$, $\tilde{f}$ can be approximated under the narrow-band assumption as 
\begin{equation}
\tilde{f} \approx \frac{B}{c_0} \mathrm{sinc}\left(\frac{\omega_c}{c_0} \gamma(\theta) \cdot (\bm{x}_l - \bm{x}_{l'})\right).
\end{equation}
Denoting $z = \frac{\omega_c}{c_0} \gamma(\theta) \cdot (\bm{x}_l - \bm{x}_{l'})$, the second order derivative of each component in $\tilde{f}$ with respect to $\theta$ is obtained as
\begin{equation}\label{eq:neweq5}
\frac{\partial^2 z}{\partial \theta^2} \left( \frac{\cos z}{z} - \frac{\sin z}{z^2}\right) +  \left(\frac{\partial z}{\partial \theta}\right)^2 \left( - \frac{\sin z}{z} - \frac{2 \cos z}{z^2} + \frac{2 \sin z}{z^3} \right)
\end{equation}
where
\begin{equation}\label{eq:neweq6}
\frac{\partial^2 z}{\partial \theta^2} = -  \frac{\omega_c}{c_0} [\cos\theta, \sin\theta]^T \cdot (\bm{x}_l - \bm{x}_{l'}),
$$
$$
(\frac{\partial z}{\partial \theta})^2 =  \left(\frac{\omega_c}{c_0}\right)^2 ([-\sin\theta, \cos\theta]^T \cdot (\bm{x}_l - \bm{x}_{l'}) )^2.
\end{equation}

By definition, the amplitude of sinc-function derivatives in \eqref{eq:neweq5} have a decay rate of $1/z$, whereas $z$ and its derivatives in \eqref{eq:neweq5} grow with an order of the scene dimension $L$, as $\mathcal{O}(\frac{\omega_c}{c_0} L)$.
Hence, evaluating the squared integral for the Frobenius norm, the growth is at a maximal order of $\mathcal{O}\left((\frac{\omega_c}{c_0})^3 L^3 \right)$. We thereby obtain a final approximate upper bound on the universal constant $Q$ as
\begin{equation}\label{eq:ErrBd}
\begin{split}
 \abs{\frac{\partial^2}{\partial\theta^2}\int_{\abs{\frac{\omega_c-B/2}{c_0}\cos\phi}}^{\abs{\frac{\omega_c+B/2}{c_0}\cos\phi}}\hat{\rho}(\omega(\cos\theta+1), \omega\sin\theta)
    d\omega} \\
   \leq \mathcal{O}\left(\frac{B}{c_0} (\frac{\omega_c}{c_0})^{\frac{3}{2}} L^{\frac{3}{2}}\right) \| {\brho} \|^2.
\end{split}
\end{equation}
Putting the terms derived in~\eqref{eq:ErrBd} and ~\eqref{eq:CS10} into~\eqref{eq:CS4} and~\eqref{eq:FT2}, we have
\begin{equation}
\delta \leq \frac{c_0}{B} \frac{K}{L} \sqrt{\frac{2 \pi}{L}} \frac{1}{A} \left(\frac{4\pi^2 \sqrt{2Bc_0}}{ \omega'_c (\cos \phi)^{3/2}} + \mathcal{O}\left(\frac{B}{c_0} (\frac{\omega_c}{c_0})^{\frac{3}{2}} L^{\frac{3}{2}} \frac{A^3}{N^2}\right) \right)
\end{equation}
for the definition of $\delta$ in Definition \ref{def:RIP}, using the fact that $\| \brho \|^4 = \| \tilde{\brho} \|_F^2$ for the rank-1, PSD element $\tilde{\brho} = \brho \brho^H$.

Re-organizing the terms using \eqref{eq:defsthm1} and \eqref{eq:defsthm2} defined in the statement of Theorem \ref{thm:Theorem1}, and the fact that
\begin{equation}
  \frac{K}{L\sqrt{L}} = \frac{\sqrt{L}}{\Delta^2},
\end{equation}
we finally obtain
\begin{equation}
\delta \leq \frac{2\pi}{A} \frac{2 \lambda_c \sqrt{L \Delta_{res}}}{ \Delta^2 (\cos \phi)^{3/2} } + \mathcal{O}\left(\frac{K}{(N/A)^2} {\lambda_c^{-{3}/{2}}}  \right) ,
\end{equation}
which completes the proof.

\end{document}